\renewcommand\footnotetextcopyrightpermission[1]{}
\newcommand{\myparagraph}[1]{\medskip{\noindent\bfseries\itshape{#1}.~}}
\begin{document}

%%
%% The "title" command has an optional parameter,
%% allowing the author to define a "short title" to be used in page headers.
\title{Parallel batch queries on dynamic trees: algorithms and experiments}

%%
%% The "author" command and its associated commands are used to define
%% the authors and their affiliations.
%% Of note is the shared affiliation of the first two authors, and the
%% "authornote" and "authornotemark" commands
%% used to denote shared contribution to the research.
\author{Humza Ikram}
\email{humzai@andrew.cmu.edu}
\affiliation{%
  \institution{Carnegie Mellon University}
  \city{Pittsburgh}
  \country{USA}
}

\author{Andrew Brady}
\email{acbrady@andrew.cmu.edu}
\affiliation{
  \institution{Carnegie Mellon University}
  \city{Pittsburgh}
  \country{USA}
}

\author{Daniel Anderson}
\email{dlanders@andrew.cmu.edu}
\affiliation{%
  \institution{Carnegie Mellon University}
  \city{Pittsburgh}
  \country{USA}
}

\author{Guy Blelloch}
\email{guyb@andrew.cmu.edu}
\affiliation{%
  \institution{Carnegie Mellon University}
  \city{Pittsburgh}
  \country{USA}
}

%%
%% By default, the full list of authors will be used in the page
%% headers. Often, this list is too long, and will overlap
%% other information printed in the page headers. This command allows
%% the author to define a more concise list
%% of authors' names for this purpose.
\renewcommand{\shortauthors}{Humza et al.}

%%
%% The abstract is a short summary of the work to be presented in the
%% article.
\begin{abstract}
  Dynamic tree data structures maintain a forest while supporting
  insertion and deletion of edges and a broad set of queries in
  $O(\log n)$ time per operation.  Such data structures are at the
  core of many modern algorithms.  Recent work has extended dynamic
  trees so as to support batches of updates or queries so as to run in
  parallel, and these batch parallel dynamic trees are now used in
  several parallel algorithms.

  In this work we describe improvements to batch parallel dynamic
  trees, describe an implementation that incorporates these
  improvements, and experiments using it.  The improvements includes
  generalizing prior work on RC (rake compress) trees to work with
  arbitrary degree while still supporting a rich set of queries, and
  describing how to support batch subtree queries, path queries,
  LCA queries, and nearest-marked-vertex queries in
  $O(k + k \log (1 + n/k))$ work and polylogarithmic span.  Our implementation
  is the first general implementation of batch dynamic trees
  (supporting arbitrary degree and general queries).

  Our experiments include measuring the time to create the trees,
  varying batch sizes for updates and queries, and using the tree
  to implement incremental batch-parallel minimum spanning trees.  To
  run the experiments we develop a forest generator that is
  parameterized to create distributions of trees of differing
  characteristics (e.g., degree, depth, and relative tree sizes).  Our
  experiments show good speedup and that the algorithm performance is
  robust across forest characteristics.
\end{abstract}

%%
%% The code below is generated by the tool at http://dl.acm.org/ccs.cfm.
%% Please copy and paste the code instead of the example below.
%%
% \begin{CCSXML}
% <ccs2012>
%  <concept>
%   <concept_id>00000000.0000000.0000000</concept_id>
%   <concept_desc>Do Not Use This Code, Generate the Correct Terms for Your Paper</concept_desc>
%   <concept_significance>500</concept_significance>
%  </concept>
%  <concept>
%   <concept_id>00000000.00000000.00000000</concept_id>
%   <concept_desc>Do Not Use This Code, Generate the Correct Terms for Your Paper</concept_desc>
%   <concept_significance>300</concept_significance>
%  </concept>
%  <concept>
%   <concept_id>00000000.00000000.00000000</concept_id>
%   <concept_desc>Do Not Use This Code, Generate the Correct Terms for Your Paper</concept_desc>
%   <concept_significance>100</concept_significance>
%  </concept>
%  <concept>
%   <concept_id>00000000.00000000.00000000</concept_id>
%   <concept_desc>Do Not Use This Code, Generate the Correct Terms for Your Paper</concept_desc>
%   <concept_significance>100</concept_significance>
%  </concept>
% </ccs2012>
% \end{CCSXML}
%\ccsdesc[500]{None?}
%\ccsdesc[300]{None?}
%\ccsdesc{None?}
%\ccsdesc[100]{None?}

%%
%% Keywords. The author(s) should pick words that accurately describe
%% the work being presented. Separate the keywords with commas.
%\keywords{None?}

%%
%% This command processes the author and affiliation and title
%% information and builds the first part of the formatted document.
\maketitle

\section{Introduction}

The dynamic trees problem is to maintain a forest of trees while
supporting insertion and deletions of edges, and a variety of queries
on the tree.  Queries can include, for example, determining whether two vertices are in
the same tree, the minimum weight on a path in a tree, the sum of
values on a subtree, and many more.  The problem was introduced by
Sleator and Tarjan~\cite{sleator1983data}, where they presented a data structures
that supports updates as well as path queries (e.g., max weight or sum
of weights on a path) in $O(\log n)$ amortized time per operation on a tree with $n$ vertices.
They also present important applications of dynamic trees, including to
incremental minimum spanning trees, maximum flow, and network simplex.
In all cases the dynamic trees improved the bounds over prior results.

Since the work of Sleator and Tarjan, there have been many improvements
and
extensions~\cite{henzinger1995randomized,frederickson1985data,frederickson1997ambivalent,frederickson1997data,holm1998top,tarjan2005self,alstrup2005maintaining,acar2004dynamizing,acar2005experimental}.
Furthermore, researchers have identified many more applications of
dynamic trees, including algorithms for dynamic graph
connectivity~\cite{frederickson1985data,henzinger1995randomized,holm2001poly,kapron2013dynamic,acar2019parallel},
dynamic minimum spanning
trees~\cite{frederickson1985data,henzinger1995randomized,henzinger2001maintaining,holm2001poly,anderson2020work},
and minimum
cuts~\cite{karger2000minimum,geissmann2018parallel,gawrychowski2019minimum,anderson2023cuts}.
Importantly, dynamic trees play a key role in recent groundbreaking
results on maximum flow in almost linear time~\cite{maxflow23}.

Most of this work has considered sequential algorithms that permit a single update each time step.
Given that all modern machines are now
parallel, it is important to support processing multiple updates in
parallel.  
Thus, more recent work has sought to maintain a forest in the batch-parallel dynamic model, 
where edges are inserted and deleted in batches that must be handled in parallel.
 The goal of
such batch dynamic data structures is typically to ensure the work per
update is no more than in the sequential context (i.e., work
efficient) and that the span is polylogarithmic in the
size of the forest.   The span is the critical path of dependences and hence
the minimum time that is required to run a computation in parallel.

The first such work was Tseng et al.'s batch-parallel
Euler-tour trees~\cite{tseng2019batch}, which for batch updates of size
$k$, achieves $O(k \log(1 + \frac{n}{k}))$ expected work, which matches the
sequential algorithms ($O(\log n)$ work per edge) for low values of $k$,
and is optimal ($O(n)$ work) for large values of $k$.  Furthermore, the
span for processing each batch is $O(\log n)$.  Their work also
included an implementation that showed that maintaining Euler tour
trees is efficient in practice.  However, Euler-tour trees are quite
limited in the set of operations they support.  For example, they do
not support path queries, or subtree queries where the operator does
not have an inverse (e.g., find the maximum in a subtree).

Acar et al. developed parallel batch-dynamic
Rake-Compress-trees~\cite{acar2020batch} (RC-trees) with the same cost bounds.  RC-trees
support a much broader class of queries, including all queries that
are supported by any of the other known dynamic trees.  Batch parallel
RC-trees have proven to be a powerful technique in developing other
parallel algorithms, including batch parallel incremental
MST~\cite{anderson2020work}, parallel minimum
cut~\cite{anderson2023cuts}, and parallel breadth-first
search~\cite{ghaffari2023nearly}.    
The original work on RC-trees used randomization.   Recent work has
show that the same work bounds can be achieved deterministically~\cite{anderson2024}.
RC-trees, however, only directly
support trees with constant vertex degree.  Prior work has had to apply
ad-hoc techniques to simulate trees with higher degree.
Furthermore only a minimal implementation of RC-trees exists~\cite{anderson2021efficient}.

In this paper we present several results that build on RC-trees.
We show how to layer $d$-ary trees on top of trees with constant
degree while still supporting dynamic updates and queries. We maintain the same bounds as 
with constant degree, at the cost of requiring randomization. 
We also present some new better
bounds on queries.  In particular we describe algorithms for batch Lowest Common Ancestor (LCA)
queries, batch subtree queries, batch patch queries, and batch nearest-marked-vertex queries that run in $O(k + k \log (1+\frac{n}{k} ))$ work
where $k$ is the batch size and $n$ is the number of vertices in the
forest.  The span for these queries is $O(\log n)$.  This improves
over simply running the queries independently, which would require
$O(k \log n)$ work. Theoretically, we achieve the following result:

\begin{theorem}
There exists a data structure that support the following operations on
arbitrary weighted forests:
\begin{itemize}[leftmargin=*]
  \item A batch of $k$ edge insertions and deletions in $O(k+k \log (1 + \frac{n}{k}))$
    work and $O(\log^2 n)$ span.
   \item A batch of $k$ subtree queries over weights from a commutative semigroup in $O(k \log (1 + \frac{n}{k}))$
   work and $O(\log n)$ span.
   
   \item A batch of $k$ path queries over weights from a commutative group, or min/max queries in $O(k+k \log (1 + \frac{n}{k}))$ work and $O(\log n)$ span.
   
   \item A batch of $k$ LCA queries in  $O(k+k \log (1 + \frac{n}{k}))$
     work and $O(\log n)$ span.
   \item A batch of $k$ nearest-marked vertex queries in  $O(k \log (1 + \frac{n}{k}))$
     work and $O(\log n)$ span.
   \end{itemize}
   Here $n$ is the number of vertices in the forest. If the forest has bounded degree, these bounds are deterministic. If the forest has unbounded degree then our bounds are randomized and the work holds in expectation while the span bounds hold w.h.p.
\end{theorem}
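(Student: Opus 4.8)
The plan is to build everything on batch-parallel RC-trees, taking the bounded-degree case as essentially known (from \cite{acar2020batch}, and deterministically from \cite{anderson2024}) and treating the unbounded-degree case as a reduction to it.

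\textbf{Reduction to bounded degree.} First I would \emph{ternarize} the forest: replace each vertex $v$ of degree $d_v$ by a balanced binary tree $G_v$ with $d_v$ leaves, one leaf per edge incident to $v$, and reattach the original edges to these leaves. Since a forest has $O(n)$ edges, the transformed forest $F'$ has $\sum_v (2 d_v - 1) = O(n)$ vertices and maximum degree $3$, so the bounded-degree machinery applies to it, and every original subtree/path/LCA/nearest-marked query maps to an $O(1)$-time-reducible query on $F'$. The delicate point is maintaining the gadgets under a batch of $k$ edge updates: an edge insertion or deletion at $v$ inserts or deletes one leaf of $G_v$. To keep the gadgets balanced \emph{and} to keep the number of gadget nodes that change small, I would implement each $G_v$ as a treap keyed by random priorities on the incident edges and process the updates landing on one gadget as a single batch treap operation. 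Standard treap analysis gives $O(1)$ expected structural changes per insertion/deletion and polylogarithmic span for a batch, so across the whole batch only $O(k)$ gadget nodes change in expectation; this is the only source of randomization in the unbounded-degree bounds. Each changed gadget node is a vertex of $F'$, so the gadget maintenance induces a batch of $O(k)$ expected structural updates to the RC-tree of $F'$.

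\textbf{Edge updates (item 1).} With the reduction in place, item 1 is the batch-update bound for bounded-degree RC-trees: a batch of $K$ structural changes costs $O(K + K \log(1 + n/K))$ work and $O(\log^2 n)$ span. For bounded degree $K = \Theta(k)$ and the bound is deterministic; for unbounded degree $K = O(k)$ in expectation (the treap-maintenance work and span are dominated by these figures), giving the claimed randomized bound with span holding w.h.p.

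\textbf{Batch queries (items 2--5).} All four query types fit one template. A single query is answered by walking the unique leaf-to-root path of the relevant cluster in the RC-tree, combining $O(\log n)$ precomputed cluster aggregates: for subtree queries, the semigroup sum of each cluster; for path queries, the group sum (or min/max) together with the cluster's two boundary vertices; for LCA, the identity of each cluster's representative vertex, from which the unique vertex lying on all three of the $u$--$v$, $u$--$r$, $v$--$r$ paths is read off; for nearest-marked-vertex, the nearest marked vertex to each boundary. To batch $k$ queries I would form the union $U$ of the $k$ leaf-to-root paths in the RC-tree. Since an RC-tree has height $O(\log n)$ with geometrically shrinking level sizes, $|U| = O(k + k \log(1 + n/k))$ by the classical bound on the union of $k$ leaf-to-root paths in a balanced tree. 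The answers follow from a constant number of sweeps over $U$: a bottom-up pass accumulating, at each node of $U$, the aggregate of the off-path siblings, and, for path and LCA queries, a top-down pass propagating prefix information to the query leaves. Each sweep costs $O(|U|)$ work and $O(\log n)$ span. Subtree and nearest-marked-vertex queries need only the bottom-up pass (no inverse is used, and the RC-tree's canonical cluster decomposition already fixes a valid associative combination order), yielding $O(k \log(1 + n/k))$ work; path and LCA queries incur an extra $O(k)$ for per-query assembly (resolving the meet vertex, or combining the two endpoint contributions by group subtraction), yielding $O(k + k \log(1 + n/k))$ work. Span is $O(\log n)$ throughout.

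\textbf{Main obstacle.} I expect the crux to be the gadget/RC-tree interface in the unbounded-degree case: one must show that a batch of $k$ edge updates changes only $O(k)$ vertices of $F'$ in expectation — charging treap rotations carefully, and, when many updates hit one high-degree vertex, summing the batch-treap bound $c_v \log(1 + d_v / c_v)$ over gadgets against convexity to stay within $O(k + k \log(1 + n/k))$ — and that feeding these changes into the bounded-degree RC-tree update does not inflate the work. A secondary subtlety is the path-union bound $|U| = O(k + k \log(1 + n/k))$, which is immediate for the deterministic RC-trees of \cite{anderson2024} but for the randomized construction requires that the level sizes shrink geometrically with high enough probability for the bound to hold w.h.p.
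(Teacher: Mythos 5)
Your overall architecture does match the paper's: reduce arbitrary degree to degree three by a per-vertex gadget, then run bounded-degree batch-parallel RC-trees, use the balance property to bound the union of the $k$ root paths by $O\left(k\log\left(1+\frac{n}{k}\right)\right)$, and answer queries by bottom-up augmented values plus query-time passes over the marked nodes. Your gadget, however, is different and heavier than needed: the paper replaces each vertex by a \emph{linked list} of dummy vertices joined by identity-weight edges (maintained with a parallel hash table, semisort, and list contraction), which gives $O(1)$ structural changes per inserted or deleted edge with no rebalancing at all -- long chains are harmless because RC contraction compresses them -- and the randomization comes only from those primitives. A treap gadget with $O(1)$ expected rotations can be made to work, but you still owe the correctness half of the reduction that the paper proves explicitly: that identity weights preserve path and subtree sums, and, nontrivially, that the \emph{owner} of the LCA in the ternarized tree is the LCA in the original tree (the paper needs the distance-minimizing characterization of LCA and a zero-weight-path lemma for this).

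The genuine gaps are in the query algorithms. First, your uniform template cannot deliver the min/max path bound: with no inverse, the values on root-to-$u$, root-to-$v$ and root-to-LCA paths do not determine the minimum on the $u$--$v$ path, and the paper points out that general semigroup batch path queries are in fact impossible in $O\left(k\log\left(1+\frac{n}{k}\right)\right)$ work (Tarjan's MST-verification lower bound); it handles min/max by a different mechanism entirely -- build the compressed path tree on the $O(k)$ query endpoints and run King et al.'s offline path-minima algorithm on it -- which is absent from your proposal. Second, your claim that subtree and nearest-marked-vertex queries need only a bottom-up pass is incorrect: in both cases the answer depends on the part of the forest \emph{outside} the query vertex's cluster (the subtrees growing out of its boundary vertices; marked vertices beyond the cluster), which is ancestor information, and this is precisely why the paper pairs the bottom-up augmented values with a query-time \emph{top-down} computation over the marked nodes that computes, once per cluster, the contribution through each boundary vertex; without it, per-query assembly would walk an $O(\log n)$ path and the work degrades to $O(k\log n)$. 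Third, for batch LCA, ``reading off the vertex on all three paths'' is a hand-wave: the meet of the RC-tree root paths (the common boundary) is generally \emph{not} the LCA; the paper must build static LCA and level-ancestor structures on the $O\left(k\log\left(1+\frac{n}{k}\right)\right)$ marked subtree to locate, in $O(1)$ per query, the closest vertex on a cluster path, and it reduces arbitrary per-query roots $r_i$ to a fixed root via $LCA(u,v,r')=LCA(u,v,r)\oplus LCA(u,r',r)\oplus LCA(v,r',r)$ -- none of which appears in your sketch.
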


\noindent In addition to theoretical contributions, we have implemented RC-trees along with several additional features. We have implemented the mapping from arbitrary degree to constant degree trees.
We have implemented batch subtree and LCA queries.
Furthermore,
we have implemented compressed path trees using RC-trees.   Given a subset
of the vertices, a compressed path tree generates a new tree that has
size proportional to the number of marked vertices and maintains path
lengths within each tree.   These are used both in batch incremental
MST~\cite{anderson2020work} and in parallel minimum
cut~\cite{anderson2023cuts}.

We run a suite of experiments on these trees. This includes
developing a forest-generator that allows us to generate forests with
a wide range of characteristics, including having widely varying tree
size, and having both deep and shallow trees. The experiments we run
include:
\begin{enumerate}[leftmargin=*]
\item Building the data structure given a set of edges.  
\item Batch inserting and deleting edges.  
\item Batch queries: path, subtree, batch subtree, and LCA.
\item Incremental MST.  We measure the performance of incremental
  MST, which uses compressed trees.   
  \end{enumerate}

\noindent Experiments 1, 2, and 4 are run with four very four different forest structures; Experiment 2 is run with 2 of these forest structures. In experiment 1, we vary $n$, whereas in experiments 2, 3, and 4, we fix $n$ and vary the batch size. In all experiments we vary the number of processors.
We plan to release this code for public use and believe individual components of our contribution (such as the general purpose ternarization algorithm and the tree generation scheme) have use cases beyond batch dynamic RC trees.

\section{Background}

\subsection{Analysis and Models of Computation}

\myparagraph{The PRAM and work-span analysis} 
We use the PRAM (Parallel Random Access Machine) model with work-span analysis~\cite{JaJa92,Blelloch96}.   In particular, we assume an algorithm has access to an unbounded shared memory and takes a sequence of $s$ steps.  Each step $i$ performs $w_i$ constant-time operations in parallel (each has access to its index, $[0,\ldots,w_i)$).   The steps are executed one after
the other sequentially.
The \emph{work} of the algorithm is the total number of operations performed, i.e., $\sum_{i=1}^s w_i$, and the \emph{span} is the number of steps $s$.    Any algorithm with $W$ work and $S$ span can be scheduled on
a traditional $p$-processor PRAM~\cite{pram78} in $O(W/p + S)$ time~\cite{brent1974parallel}, and any $p$ processor PRAM algorithm that takes $T$ time, does $O(pT)$ work and has $O(T)$ span. Variants of the PRAM differ in their assumptions
about whether or not concurrent operations within a single step can access the same memory location. \emph{Common}-\textsc{CRCW PRAM} permits concurrent writes but requires that all concurrent writes to the same location write the same value,
else the computation is invalid. The algorithms that we use as subroutines were designed for the Common-CRCW PRAM.

\myparagraph{Simulations} We note that the PRAM model can be mapped onto less synchronous models.  For example any PRAM algorithm can be mapped onto the binary-forking model while preserving the work and increasing the span by a factor of $O(\log n)$~\cite{blelloch2020optimal}.    It is likely the span for some of the algorithms in this paper can be improved for the binary-forking model over this naive simulation.

\myparagraph{Parallel primitives} Given a list of length $n$ and operator $f$ with an identity, a prefix sum accumulates the value of $f$ as $f$ is applied to the elements of the list, left to right. Prefix sums take $O(n)$ work and $O(\log n)$ span \cite{blelloch1990pre}. A filter operation, which returns a subset of a list that satisfies some predicate, can be implemented with prefix sums in $O(n)$ work and $O(\log n)$ depth \cite{blelloch1990pre}. Flatten, which takes a 2D list and concatenates each element to return a 1D list, uses prefix sums and takes $O(n)$ work and $O(\log n)$ depth \cite{blelloch1990pre}. 

Parallel semisort groups equal values together. However, a semisort does not sort the values from least to greatest. Parallel semisorting takes $O(n)$ expected work and $O(\log n)$ span w.h.p. \cite{Valiant91}. Using semisort, we implement a groupBy, which, given a sequence of pairs, groups the pairs with the same first element together.

We use a parallel hash table which can support parallel look-ups, parallel deletes and parallel inserts in $O(k)$ expected work and $O(\log^* k)$ span w.h.p. in $k$ \cite{Gil91a}. We also use parallel list contraction to remove marked vertices from a linked list, which takes $O(n)$ expected work and has $O(\log n)$ depth w.h.p. \cite{cole1986approximate}.

\subsection{Rake-Compress Trees}

\myparagraph{Parallel tree contraction}
Tree contraction is a procedure for computing functions over trees in parallel in low span~\cite{miller1985parallel}. It involves repeatedly applying \emph{rake} and \emph{compress} operations to the tree while aggregating data specific to the problem. The rake operation removes a leaf from the tree and aggregates its data with its parent. The compress operation replaces a vertex of degree two and its two adjacent edges with a single edge joining its neighbors, aggregating any data associated with the vertex and its two adjacent edges.

Rake and compress operations can be applied in parallel as long as they are applied to an independent set of vertices. Miller and Reif~\cite{miller1985parallel} describe a linear work and $O(\log n)$ span randomized algorithm for tree contraction. Their algorithm performs a set of \emph{rounds}, each round raking every leaf and an independent set of degree two vertices by flipping coins. They show that it takes $O(\log n)$ rounds to contract any tree to a singleton w.h.p. They also describe a deterministic algorithm, but it is not work efficient. Later, Gazit, Miller, and Teng~\cite{gazit1988optimal} obtain a work-efficient deterministic algorithm with $O(\log n)$ span.

These algorithms are defined for constant-degree trees, so non-constant-degree trees are handled by converting them into bounded-degree equivalents, e.g., by ternarization~\cite{frederickson1985data}.

\myparagraph{Rake-Compress Trees (RC-Trees)}
RC-Trees~\cite{acar2005experimental,acar2020batch} are based on viewing the process of parallel tree contraction as inducing a \emph{clustering}. A cluster is a connected subset of edges and vertices. The base clusters are the $n+m$ singletons containing the individual edges and vertices. The internal (non-base) clusters that arise have the following properties:

\begin{enumerate}[leftmargin=*]
	\item The subgraph induced by the vertex subset is connected,
	\item the edge subset contains all of the edges in the subgraph induced by the vertex subset,
	\item every edge in the edge subset has at least one endpoint in the vertex subset.
\end{enumerate}

\noindent This makes them somewhat of a hybrid of topology trees~\cite{frederickson1985data,frederickson1997ambivalent,frederickson1997data} and top trees~\cite{holm1998top,tarjan2005self,alstrup2005maintaining}, which cluster just the vertices or just the edges respectively.  Importantly but somewhat unintuitively, an RC cluster may contain an edge without containing both endpoints of that edge.  A vertex that is an endpoint of an edge, but is not contained in the same cluster as that edge is called a \emph{boundary vertex} of the cluster containing the edge.

The clusters of the RC-Tree always have at most two boundary vertices, and hence can be classified as \emph{unary clusters}, \emph{binary clusters}, or \emph{nullary clusters}. Unary clusters
arise from rake operations and have one boundary vertex. Binary clusters arise from compress operations and have two boundary vertices. A binary cluster with boundary vertices $u$ and $v$ always corresponds to an edge $(u,v)$ in the corresponding round of tree contraction.  Binary clusters can therefore be thought of as ``generalized edges'' (this notion is also used by top trees~\cite{alstrup2005maintaining}). 

To form a recursive clustering from a tree contraction, we begin with the base clusters and the uncontracted tree. On each round, for each vertex $v$ that contracts via rake or compress (which, remember, form an independent set), we identify the set of clusters that are adjacent to $v$ (equivalently, all clusters that have $v$ as a boundary vertex). These clusters are merged into a single cluster consisting of the union of their contents. We call $v$ the \emph{representative} vertex of the resulting cluster. The boundary vertices of the resulting cluster are the union of the boundary vertices of its constituents, minus $v$.

Rake operations always create unary clusters and compress operations always create binary clusters. When a vertex has no neighbors, it \emph{finalizes} and creates a nullary cluster representing the root of its connected component. Since each vertex rakes, compresses, or finalizes exactly once, there is a one-to-one mapping between representative vertices of the original tree and internal clusters. 

An RC-Tree then encodes this recursive clustering. The leaves of the RC-Tree are the base edge clusters. Since the base cluster for vertex $v$ is always a direct child of the cluster for which $v$ is the representative, we can omit base vertex clusters from the RC-tree without losing information. Internal nodes of the RC-Tree are clusters formed by tree contraction, such that the children of a node are the clusters that merged to form it. The root of the RC-Tree is a cluster representing the entire tree, or connected component in the case of a disconnected forest.

\myparagraph{Lowest Common Ancestor (LCA)} The LCA is relevant to batch query algorithms on RC-trees. \begin{definition} \label{ancDef} Given an unrooted tree $T$ and vertices $u$, $v$, and $r$, we say that a vertex $c$ is the LCA of $u$ and $v$ in $T$ with respect to the root $r$, denoted $LCA_T(u,v,r')$, if
\begin{enumerate}[leftmargin=*]
\item $c$ is an ancestor of $u,v$ 
\item For all $c'$ such that $c'$ is an ancestor of $u,v$, $c'$ is an ancestor of $c$
\end{enumerate} LCA can be generalized to a forest, where a null value is returned if $u,v,r$ are not in the same component. \end{definition}

% \section{Prior Work} 
% \textcolor{red}{Add additional or delete this section header; since we describe some in intro, perhaps can delete}

% sequential RC trees

% \section{Improvements to algorithms to make them more practical}

% \textcolor{red}{Add or delete section header4}
% high level, shouldn't need to much information on datastructures

\newcommand{\commonboundary}{common boundary}
\newcommand{\Commonboundary}{Common boundary}
\newcommand{\CommonBoundary}{Common Boundary}

\newcommand\textproc{\textsc}

% Dynamic trees API

\newcommand{\linkop}{\textbf{\textproc{Link}}}
\newcommand{\cutop}{\textbf{\textproc{Cut}}}
\newcommand{\connectedop}{\textbf{\textproc{connected}}}
\newcommand{\subtreeweight}{\textbf{\textproc{SubtreeWeight}}}

\newcommand{\pathweight}{\textbf{\textproc{PathWeight}}}
\newcommand{\lcaop}{\textbf{\textproc{LCA}}}
\newcommand{\diameterop}{\textbf{\textproc{Diameter}}}
\newcommand{\centerop}{\textbf{\textproc{Center}}}
\newcommand{\markop}{\textbf{\textproc{Mark}}}
\newcommand{\unmarkop}{\textbf{\textproc{Unmark}}}
\newcommand{\nearestop}{\textbf{\textproc{NearestMarked}}}

% Batch operations
\newcommand{\tcbuild}{\textbf{$\textproc{Build}$}}
\newcommand{\tcaddvertices}{\textbf{$\textproc{AddVertices}$}}
\newcommand{\tcremovevertices}{\textbf{$\textproc{RemoveVertices}$}}
\newcommand{\tcbatchinsert}{\textbf{$\textproc{BatchLink}$}}
\newcommand{\tcbatchdelete}{\textbf{$\textproc{BatchCut}$}}
\newcommand{\tcbatchisconnected}{\textbf{$\textproc{BatchConnected}$}}
\newcommand{\tcbatchsubtree}{\textbf{$\textproc{BatchSubtreeWeight}$}}
\newcommand{\tcbatchpath}{\textbf{$\textproc{BatchPathWeight}$}}
\newcommand{\tcbatchpathmin}{\textbf{$\textproc{BatchPathMin}$}}
\newcommand{\tcbatchpathsum}{\textbf{$\textproc{BatchPathSum}$}}
\newcommand{\tcbatchlca}{\textbf{$\textproc{BatchLCA}$}}
\newcommand{\tcfindrep}{\textbf{$\textproc{FindRepr}$}}
\newcommand{\tcisconnected}{\textbf{$\textproc{IsConnected}$}}
\newcommand{\tcbatchmark}{\textbf{\textproc{BatchMark}}}
\newcommand{\tcbatchunmark}{\textbf{\textproc{BatchUnmark}}}
\newcommand{\tcbatchnearest}{\textbf{\textproc{BatchNearestMarked}}}

% Set notation
\providecommand{\set}[1]{\ensuremath{\left\{#1\right\}}}

\section{Batch Queries on RC-Trees}\label{sec:batch-queries-on-rc-trees}

One of the advantages of parallelism is the ability to support batch queries, where the goal is to answer
a set of queries in less work than the work of answering each query individually and in parallel with low span.
RC-Trees are highly ammenable to batch query algorithms due to having a number of very useful local and global
properties. In this section we enumerate some of these useful properties and describe batch algorithms for
subtree queries, path queries, LCAs, and nearest marked vertices.

\subsection{Useful properties of RC-Trees}

\myparagraph{The Cluster Path}
Binary clusters are particularly useful because they can simultaneously be used to represent information about the entire subtree that they contain,
or to represent information purely about the path between the two boundary vertices. The latter is used quite extensively so it gets a name, the \emph{cluster path}.

\begin{definition}[Cluster path]
	Given an RC-Tree for an input forest $F$ and a binary cluster of the RC-Tree with boundary vertices $u$ and $v$, we define the
	\emph{cluster path} of the binary cluster as the path connecting $u$ and $v$ in $F$.
\end{definition}

\noindent Since the cluster path of a composite binary cluster is just the union of the cluster paths of its binary children, it is
easy to aggregate and store information about the cluster paths.

\myparagraph{The \CommonBoundary{}}
When describing the algorithms and properties for path and subtree queries, we will often encounter a special vertex, so we also give it a name. Given
any two vertices $u$ and $v$ that are connected (in the same tree) in $F$, we define the \commonboundary{} of $u$ and $v$.

\begin{definition}[\Commonboundary{}]
	Given an RC-Tree for an input forest $F$ and a pair of vertices $u$ and $v$ that are connected in $F$, the \commonboundary{}
	is the representative vertex of the LCA of the clusters $U$ and $V$, where $U$ and $V$ are the clusters
	represented by $u$ and $v$ respectively. The LCA is with respect to the root cluster $R$ of the connected component containing $u$ and $v$,
\end{definition}

\noindent The \commonboundary{} $c$ can be found by simply walking up the tree from $u$ and $v$ until they meet at their LCA. Note that
an edge case is possible where the lowest common ancestor of $U$ and $V$ is one of $U$ or $V$. This means that $u$ or $v$ is a boundary vertex of one of
the clusters containing the other one.

\myparagraph{Path Decompositions}\label{sec:path-decompositions}
The key fact that allows RC-Trees to handle path-based operations is the following \emph{path decomposition} property. Every path
in the underlying forest $F$ can be represented by a small set of cluster paths that are all located adjacent to a corresponding
path in the RC-Tree.

\begin{theorem}[Path decomposition property]\label{thm:path-decomposition-property}
	Suppose we are given an RC-Tree for an input forest $F$ on $n$ vertices and a pair of vertices $u$ and $v$ that are connected in $F$. Let $U$ be the cluster
	represented by $u$ and $V$ be the cluster represented by $v$, and let $P$ be the path connecting $u$ and $v$ in $F$. There exists a set of
	disjoint binary RC Clusters such that:
	\begin{enumerate}[leftmargin=*]
		\item the union of their cluster paths is exactly $P$,
		\item each cluster is a direct child of the path in the RC-Tree between $U$ and $V$.
	\end{enumerate}
\end{theorem}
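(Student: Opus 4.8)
The plan is to prove existence constructively by walking the RC-Tree from the two anchor clusters $U$ and $V$, after first recording two structural facts that do the real work. Write $\partial X$ for the boundary vertices of a cluster $X$, and let $G_X$ be the subgraph of $F$ with vertex set $V(X)\cup\partial X$ and edge set $E(X)$. The first fact (a \emph{footprint lemma}) is that $G_X$ is a subtree of $F$ in which every boundary vertex is a leaf: connectivity follows from cluster properties (1) and (2); a boundary vertex is incident to some edge of $E(X)$ by definition, and cannot be incident to two, since two such edges would close a cycle through the connected core $V(X)$, impossible in a forest. The second fact is that if $v\in V(X)$ then every edge of $F$ incident to $v$ lies in $E(X)$: at the round $v$ contracts, all clusters with $v$ on their boundary merge into $C_v$ and these jointly contain every edge at $v$, while $v\in V(X)$ forces $X$ to be $C_v$ or an ancestor of $C_v$. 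I also use the routine facts that incomparable clusters have disjoint vertex and edge sets, that the children of a cluster edge-partition its edge set, and that $V(X)=\{\mathrm{rep}(X)\}\cup\bigcup_{Y}V(Y)$ over the RC-Tree children $Y$ of $X$ (so $\mathrm{rep}(X)$ is the only vertex of $G_X$ that lies in $V(X)$ but in no child's vertex set).

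First I would reduce to a one-sided claim using the \commonboundary{}. Let $C$ be the LCA of $U$ and $V$ with representative $c$, and let $C_U, C_V$ be the children of $C$ that equal or are ancestors of $U, V$ respectively; both have $c$ on their boundary. Since $u\in V(U)\subseteq V(C_U)$ and $c\in\partial C_U$, the footprint lemma puts the $u$--$c$ path of $F$ inside $G_{C_U}$, and symmetrically the $v$--$c$ path inside $G_{C_V}$; as $C_U$ and $C_V$ are incomparable, these paths share only $c$, so $c\in P$ and $P=P_U\cup P_V$ with $P_U$ the $u$--$c$ subpath. It then suffices to decompose $P_U$ into cluster paths of binary children of the RC-Tree path $U=Z_0, Z_1, \dots, Z_m=C_U$ (each $Z_{i+1}$ the parent of $Z_i$), and do the symmetric thing for $P_V$; the two collections are disjoint because one lives under $C_U$ and the other under $C_V$. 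Degenerate cases ($u=v$, or $U$ an ancestor of $V$ so $c=u$ and the $u$-side is empty) are dispatched directly.

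For the one-sided decomposition I would walk up from $U$ to $C_U$ tracking the prefix $P_i:=P_U\cap G_{Z_i}$. As $G_{Z_i}$ is a subtree containing $u$, $P_i$ is an initial segment $[u, q_i]$ of $P_U$, and $q_i$ is a boundary vertex of $Z_i$: were $q_i\in V(Z_i)$, the second fact would place the next edge of $P_U$ inside $E(Z_i)$, contradicting maximality of $P_i$. The increments $\sigma_0=P_0$ and $\sigma_i=P_i\setminus P_{i-1}$ partition $P_U$ (trivial increments contribute nothing). Now fix $i$: every interior vertex of $\sigma_i$ has degree at least two in $G_{Z_i}$, hence is not a leaf, hence lies in $V(Z_i)$; since the only such vertex not in any child's vertex set is $\mathrm{rep}(Z_i)$, the path $\sigma_i$ changes children of $Z_i$ at most once (only possibly at $\mathrm{rep}(Z_i)$), so its edges lie in one or two children $Y$ of $Z_i$. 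For each such $Y$, the two endpoints of the corresponding piece of $\sigma_i$ are each either a boundary vertex of $Z_i$ or $\mathrm{rep}(Z_i)$ — in particular, not in $V(Y)$ — yet incident to edges of $E(Y)$, so they are precisely $\partial Y$; thus $Y$ is binary and its cluster path is exactly that piece. Moreover $Y\ne Z_{i-1}$ since $\sigma_i$ avoids $G_{Z_{i-1}}$, and $Y$ does not lie on the $U$--$V$ path. Treating $\sigma_0$ inside $G_U$ the same way (with $u=\mathrm{rep}(U)$ at an end, giving a single child) and collecting over both sides yields a family of disjoint binary clusters, each a direct child of the $U$--$V$ path, whose cluster paths tile $P$.

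The main obstacle is the third step: showing the exit vertices $q_i$ are boundary vertices (the one place the ``all incident edges lie in the cluster'' fact is really needed), and, relatedly, resisting the tempting but false claim that each $\sigma_i$ sits inside a single child of $Z_i$ — one must notice the possible split at $\mathrm{rep}(Z_i)$ and carry two clusters there. The remainder is bookkeeping: the degenerate cases, trivial increments, and cross-level disjointness (children of $Z_i$ and of $Z_j$ are incomparable for $i\neq j$ once the path children are excluded).
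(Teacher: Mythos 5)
Your proposal is correct, and it reaches the theorem by the same overall strategy as the paper (a constructive argument that splits $P$ at the common boundary $c$ and ascends the RC-Tree from $U$ and from $V$ toward their LCA), but the supporting machinery is genuinely different. The paper maintains, for the current ancestor cluster of $u$ and \emph{each} of its boundary vertices, a partial decomposition of the corresponding path in $F$, and advances it by an explicit four-way case analysis on whether the child and parent clusters are unary or binary, adding at most one binary cluster per step; this stays close to the rake/compress mechanics and makes the ``one new cluster per level'' structure explicit. You instead prove two structural lemmas (the cluster footprint $G_X$ is a subtree of $F$ whose boundary vertices are leaves, and a vertex in $V(X)$ has all its incident edges in $E(X)$), then track the maximal prefix of $P$ inside each ancestor's footprint and show each increment is tiled by cluster paths of binary children; this avoids the case analysis and the bookkeeping of two decompositions for binary clusters, at the cost of establishing the auxiliary lemmas. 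One small inaccuracy in your closing remarks: the claim that each increment lies in a single child is not in fact false under your bookkeeping. For $i\ge 1$, if the exit vertex $q_{i-1}$ were a boundary vertex of $Z_i$, it would be a leaf of $G_{Z_i}$ whose unique incident $E(Z_i)$-edge is already the last edge of $P_{i-1}$, so the increment would be empty; hence every nonempty increment enters $Z_i$ through its representative, which is then an endpoint (not an interior vertex) of the increment, and the split-at-representative case never actually occurs---consistent with the paper adding at most one cluster per level. Since you handle the (vacuous) two-child case correctly anyway, this does not affect the validity of your argument.
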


\noindent A proof and example can be found in the supplementary material. 

\myparagraph{Subtree Decompositions}\label{sec:subtree-decomposition}
Subtree queries supported by dynamic tree data structures typically come in one of two flavors. Either the underlying tree has a particular fixed root (which can
sometimes be changed by an explicit re-root operation), or it is a free/unrooted tree. In the rooted case, subtree queries take a single vertex $u$ and ask
for the sum of the weights on the vertices or edges in the subtree rooted at $u$.  For unrooted trees, a single vertex is not enough information to define
a subtree, so a query will typically take two arguments: a root vertex for the subtree $u$, and the parent vertex $p$.  This kind of query is more
general but consequently more tricky to implement since the orientation of a subtree could change between queries. Thus, the tree cannot in advance be preprocessed with respect to a particular root. 

Similar to the path decomposition property, RC-Trees are robust enough to handle arbitrarily rooted subtree queries because of a powerful \emph{subtree decomposition}
property, which says that any rooted subtree in $F$ can be represented by a small set of clusters that are all adjacent to a corresponding path in the RC-Tree.

\begin{theorem}[Subtree decomposition property]\label{thm:subtree-decomposition-property}
	Consider an RC-Tree for an input forest $F$ on $n$ vertices and a rooted subtree $S$ of $F$ defined by a pair of vertices $u$ and $p$, such
	that $S$ is the subtree rooted at $u$ if $F$ were rooted at $p$. Let $U$ be the RC Node represented by $u$. There exists a set of RC Clusters
	such that
	\begin{enumerate}[leftmargin=*]
		\item the union of their contents is exactly $S$,
		\item each cluster is a direct child of the path in the RC-Tree from $U$ to its root cluster.
	\end{enumerate}
\end{theorem}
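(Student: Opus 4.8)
The plan is to walk up the RC-tree along the path $U = C_0, C_1, \ldots, C_\ell = R$ from $U$ up to the root cluster $R$ of $u$'s connected component, building the decomposition incrementally by induction on $i$. Root $F$ at $p$, so that $S$ is well defined, and write $w_p$ for the neighbor of $u$ on the path toward $p$ (we may assume $p$ lies in $u$'s component; otherwise $S$ is the whole component, which is the contents of $R$ alone). For each neighbor $w$ of $u$, call the pair ``edge $(u,w)$ together with the component of $F - u$ containing $w$'' the $w$-branch of $u$; then $F$ is the disjoint union of $\{u\}$ and all the branches of $u$, and $S$ is exactly $\{u\}$ together with all branches of $u$ other than the $w_p$-branch. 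The invariant to maintain is that there is a set $\mathcal{A}_i$ of clusters, each a direct child of some $C_j$ with $j \le i$ but not itself on the path, whose contents union to exactly $C_i \cap S$. Since $R$ is nullary, its contents are the whole component, so $C_\ell \cap S = S$ and $\mathcal{A}_\ell$ is the desired decomposition: condition (2) holds by construction and condition (1) is the invariant at $i = \ell$.

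The engine is a classification lemma: \emph{for every direct child $D$ of a path node $C_j$ not itself on the path, either $(\text{contents of } D) \subseteq S$ or $(\text{contents of } D) \cap S = \emptyset$.} If $D$ is a base vertex cluster the claim is immediate, so assume not. First, $D$ omits $u$: since $u$ is the representative of $C_0$, it is an interior vertex of every $C_j$, so $u$ is among the contents of $C_j$; but the contents of $C_j$'s children are pairwise disjoint and $u$ sits in whichever child is (or is an ancestor of) the base vertex cluster $B_u$ of $u$, namely $C_{j-1}$ if $j \ge 1$ and $B_u$ itself if $j = 0$, so $u \notin D$. Second, the contents of $D$ form a connected subgraph of $F$ (clause (1) of the cluster definition, plus clauses (2)--(3) for the at most two edges reaching $D$'s boundary vertices), and this subgraph omits $u$, so it lies within a single branch of $u$. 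Hence $(\text{contents of } D)$ is contained in that one branch, which is inside $S$ unless it is the $w_p$-branch, in which case it is disjoint from $S$.

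Granting the lemma, the induction is a telescoping of disjoint unions. The contents of $C_0 = U$ are the disjoint union of the contents of $U$'s children, one of which is $B_u$ (contributing $\{u\}$); intersecting with $S$ and applying the lemma to each child, $U \cap S$ is the union of the contents of exactly those children of $U$ whose contents lie in $S$, and $B_u$ is among them since $u \in S$, so we set $\mathcal{A}_0$ to that collection. For the step, $C_{i+1}$ is the merge of $C_i$, its off-path children $D^{(1)}, \ldots, D^{(m)}$, and the base vertex cluster of its representative $r_{i+1}$, so $C_{i+1} \cap S = (C_i \cap S) \cup \bigcup_k(\text{contents of } D^{(k)} \cap S) \cup (\{r_{i+1}\} \cap S)$; we let $\mathcal{A}_{i+1}$ be $\mathcal{A}_i$ augmented with those $D^{(k)}$ whose contents lie in $S$ and, if $r_{i+1} \in S$, also the base vertex cluster of $r_{i+1}$. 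Each added cluster is a direct child of $C_{i+1}$ off the path, so the invariant survives. The structure mirrors the proof of the path decomposition property (Theorem~\ref{thm:path-decomposition-property}).

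The main obstacle is the classification lemma — in particular the accounting of which clusters contain $u$ and the reduction of ``contents sit on one side of $u$'' to convexity in the forest. A secondary, purely bookkeeping point is that the representative vertices lying on the path, and $u$ itself, are collected through their base vertex clusters, which are direct children of the path nodes but are nominally omitted from the drawn RC-tree; the statement is implicitly counting these (equivalently, in an implementation one stores such a vertex's weight at the path node it represents and accumulates it while traversing the path).
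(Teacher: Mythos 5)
Your proof is correct, but it takes a genuinely different route from the paper's. The paper proves the theorem in two stages: a lemma showing that the ``subtree growing out of a boundary vertex $b$'' of $U$ decomposes into children of the root path, proved constructively by following where each successive boundary vertex contracts and collecting the other children of that ancestor, and then an assembly step at $U$ (all children of $U$ except the one containing or adjacent to $p$, plus the subtrees growing out of the non-excluded boundary vertices). You instead run a single induction up the root path $C_0,\ldots,C_\ell$ with the invariant that the off-path children collected so far cover exactly $C_i\cap S$, powered by a classification lemma: every off-path child of a path node misses $u$ (sibling contents are disjoint and $u$ lives in the on-path child or in $B_u$), its contents are connected and hence lie in a single branch of $F-u$, and $S$ is a union of branches of $u$, so each such child is entirely inside or entirely outside $S$. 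What your argument buys is a cleaner correctness proof that avoids the paper's delicate side-tracking of boundary vertices (which boundary lies on which side of $u$ or $b$), and it handles an arbitrary, not necessarily adjacent, orientation vertex $p$ uniformly; what the paper's version buys is the explicit local rule (walk the contractions of boundary vertices, take the other children) that the batch subtree-query algorithm actually implements in $O(1)$ work per node, whereas your existence argument decides membership by the global criterion ``is this child inside $S$,'' which still needs the boundary-vertex bookkeeping to become an algorithm. Two negligible loose ends: your ``immediate'' base case should say base \emph{cluster} rather than base \emph{vertex} cluster, since a base edge cluster has empty vertex set and falls outside your connectivity argument, though the dichotomy is equally immediate for a one-edge content set; and the pairwise disjointness of sibling clusters' contents, which you assert, is the standard fact that the live clusters at each contraction round partition the vertices and edges of $F$ -- both are one-line fixes. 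Your handling of the representative vertices via their (implicit) base vertex clusters matches the paper's own convention in its worked example.
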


\noindent A proof and example can be found in the supplementary material.

\subsection{Batch Query Strategies}

Sequential/single queries on RC-Trees typically consist of aggregating some information along
a path or set of paths within the RC-Tree. Most commonly, a query will begin with a set of RC nodes and then sweep upwards towards either
their \commonboundary{}, or all the way to the root cluster, then sometimes propagate some information back down the tree along the
same paths. Parallel RC-Trees afford us the opportunity to implement batch queries in low span. Since the RC-Tree nodes involved in different queries overlap, we can answer multiple queries in work better than the work of answering individual queries.

A batch query typically consists of \emph{two kinds} of computations, which we will distinguish as \emph{bottom-up} and
\emph{top-down} computations.  Some queries use only one or the other, while others perform both, typically a bottom-up followed by
a top-down. A \emph{bottom-up} computation is one in which every cluster wants to compute some data which is a function of its children. For example,
computing the sum of the edge weights on the cluster path of a binary cluster is a bottom-up computation, because it can be implemented
by summing the edge weights of the cluster paths of its two binary children. Bottom-up computations are not performed at query time, but rather
they are stored as \emph{augmented values} on the clusters. This means that they are computed at build time and then maintained during update
operations. They will then be available precomputed for queries to utilize.

A \emph{top-down} computation is one in which every cluster wants to compute some data which is a function of its \emph{ancestors}, most
commonly its boundary vertices. Note that the boundary vertices of a cluster \emph{always} represent ancestors of that cluster,
and furthermore that one of them is guaranteed to represent the parent cluster. Top-down computations can not be efficiently stored
as augmented values since updating a cluster high in the tree could require updating all the descendants of that cluster which would
cost linear work. Instead, top-down computations are always performed at query time.  The query algorithm is responsible for identifying
the set of clusters for which the data is needed, which for $k$ queries typically consists of some $O(k)$ clusters and all of their
ancestors, i.e., at most $O\left(k\log\left(1+\frac{n}{k}\right)\right)$ clusters in total. The algorithm then traverses
the RC-Tree from the root cluster (or multiple root clusters in parallel in the case of a disconnected forest), visits
those relevant clusters, and computes the desired data from the data on the ancestors.

For algorithms that utilize both, the bottom-up and top-down computations are typically not independent. Our batch-query algorithms
will often maintain a bottom-up computation using augmented values and then proceed at query time by performing a top-down computation
that makes use of these values. This combination strategy is what allows us to derive some of the more complicated batch-query algorithms.

\subsection{Batch Connectivity Queries}

A batch-connectivity query over a forest $F$ takes as input a sequence of pairs of vertices $u, v$ and must answer for each of them whether they
are connected, i.e., whether they are contained in the same tree. That is, we want to support the following.

\begin{itemize}[leftmargin=*]
	\item \tcbatchisconnected$(\set{\set{u_1,v_1}, \ldots, \set{u_k, v_k}})$ takes an array of tuples representing
	queries. The output is an array where the $i^\text{th}$ entry is a boolean denoting
	whether vertices $u_i$ and $v_i$ are connected by a path in $F$.
\end{itemize}

\noindent Connectivity queries are arguably the
simplest kind of query since they require no augmented data to be stored on the clusters and no additional auxiliary data structures; only the structure of the RC-Tree is
needed to determine the answer. They are described by Acar et al.~\cite{acar2020batch}. Essentially, the problem of batch connectivity is reduced to, given a batch of vertices, find the representatives of their root clusters. This information can then solve the batch
connectivity problem by checking for each pair whether they have the same representative.

\begin{theorem}[Batch connectivity queries~\cite{acar2020batch}]\label{thm:batch-connectivity-performance}
	Given an RC-Tree for a forest on $n$ vertices, a batch of $k$ connectivity queries can be answered
	in $O\left(k + k\log\left(1+\frac{n}{k}\right)\right)$ work and $O(\log n)$ span.
\end{theorem}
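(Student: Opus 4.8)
The plan is to follow the two-step reduction indicated above: turn each connectivity query into the problem of identifying the representative of the root cluster of a given vertex, and then observe that a pair $u_i, v_i$ is connected iff their root representatives coincide. Once every queried vertex knows its root representative, the second step costs only $O(k)$ additional work and $O(1)$ span (no grouping is needed, just a per-query equality test). So the theorem reduces to the following: given a set of at most $2k$ vertices of $F$, compute for each the representative of the nullary (root) cluster lying above it in the RC-Tree, within $O(k + k\log(1+n/k))$ work and $O(\log n)$ span. I would begin by deduplicating the queried vertices and mapping each vertex $v$ to the unique internal cluster $V$ for which it is the representative (this cluster is a node of the RC-Tree, since base vertex clusters are omitted); what we want is the representative of the topmost ancestor of $V$.

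Next I would establish the key quantitative fact: the minimal subtree $T'$ of the RC-Tree consisting of the queried clusters together with all of their ancestors has $O(k + k\log(1+n/k))$ nodes. Within a single component this is the standard ``union of $k'$ root-to-leaf paths in a balanced bounded-degree tree'' bound: after ternarization the RC-Tree has constant maximum degree $\Delta$ and depth $O(\log n)$, so at depth $j$ the spanned subtree contains at most $\min(\Delta^j, k')$ nodes, where $k'$ is the number of queries in that component --- at most $\Delta^j$ because that bounds the number of nodes at depth $j$, and at most $k'$ because each root-path contributes one node per level. Summing $\min(\Delta^j, k')$ over the $O(\log n)$ levels gives a geometric part of size $O(k')$ and a linear tail of $O(k'\log(n/k'))$ levels. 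Summing this over the components touched by queries, with $\sum k' = O(k)$ and total size $\le n$, a routine convexity argument keeps the bound at $O(k + k\log(1+n/k))$.

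The algorithm itself is a level-synchronous bottom-up sweep followed by a top-down sweep. Starting from the frontier of deduplicated queried clusters, each round advances every frontier cluster that is not a root to its parent, then deduplicates the resulting clusters and discards any that were already visited on an earlier round, recording for each newly visited cluster the child edge it was reached from and flagging clusters with no parent as roots. Because every cluster of $T'$ is visited exactly once and has $O(1)$ children, the total work of the sweep is $O(|T'|) = O(k + k\log(1+n/k))$, and it runs for $O(\log n)$ rounds since $T'$ has depth $O(\log n)$. A second, top-down sweep over the recorded edge structure of $T'$ then pushes each root cluster's representative down to all the queried clusters in $O(|T'|)$ work and $O(\log n)$ span, after which every queried vertex knows its root representative and the reduction of the first paragraph finishes the proof.

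The main obstacle is making the per-round deduplication cheap enough in span: a comparison-based primitive such as semisort costs logarithmic span per round and would yield only $O(\log^2 n)$ span overall, so to reach $O(\log n)$ one must deduplicate in amortized $O(1)$ span per round --- for instance by storing a ``visited'' flag directly on each RC-Tree cluster and compacting frontiers with prefix sums (the parallel hash table from the background section gives only the intermediate bound of $O(\log^* n)$ per round). This is precisely the argument carried out by Acar et al.~\cite{acar2020batch}. A secondary, routine point to verify carefully is the forest version of the $O(k + k\log(1+n/k))$ size bound for $T'$, together with the observation that having up to $2k$ distinct queried vertices rather than $k$ only affects constants.
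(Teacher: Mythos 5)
Your overall reduction (batch find-representative, then a per-query equality test) matches the paper's, but the key quantitative step is argued incorrectly. You bound the size of the spanned subtree $T'$ by summing $\min(\Delta^j,k')$ over depths $j$, using only that the RC-Tree has constant degree and $O(\log n)$ height. That does not give $O\left(k\log\left(1+\frac{n}{k}\right)\right)$: the ``tail'' in your sum has $O(\log n)-\log_\Delta k'$ levels, which is not $O\left(\log\frac{n}{k'}\right)$ unless the depth constant happens to match $\log_\Delta$. Indeed the claim is false for general bounded-degree trees of logarithmic height: take a complete binary tree with $k$ leaves and hang a path of length $\Theta(\log n)$ below each leaf (choosing $k=\Theta(n/\log n)$ so the total size is $n$); the tree has degree $3$ and height $O(\log n)$, yet the union of the root paths of the $k$ bottom vertices has $\Theta(k\log n)$ nodes, whereas $k\log\left(1+\frac{n}{k}\right)=\Theta(k\log\log n)$. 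The paper explicitly warns that $O(\log n)$ height alone is not enough; what it actually uses is the \emph{balance property} of the contraction (at most a constant $\beta$-fraction of clusters survive each round, Theorem~\ref{thm:rc-batch-theorem}), so that the number of RC-Tree nodes at contraction level $i$ is at most $\beta^i n$ and the spanned subtree has at most $\sum_i \min(k,\beta^i n)=O\left(k\log\left(1+\frac{n}{k}\right)\right)$ nodes. Your argument needs to be replaced by (or reduced to) this level-size counting; bounded degree is not the right resource.

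A secondary issue: your span discussion is also not quite right. Prefix-sum compaction is $\Theta(\log)$ span per round, not amortized $O(1)$, so ``flag plus prefix-sum compaction'' as you describe it would give $O(\log n\log k)$ span rather than $O(\log n)$; the standard way to get $O(\log n)$ span is the race/test-and-set style upward walk on a CRCW machine (a visitor that loses at an already-claimed node simply stops), with work charged to the $O(1)$ arrivals per node of $T'$, followed by the top-down propagation you describe. Since the theorem is attributed to Acar et al.\ and the paper itself only cites their algorithm for this step, deferring is acceptable, but as written your resolution of the ``obstacle'' would not meet the stated span bound.
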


\noindent In the remainder of this section, we describe five new batch query algorithms on RC-Trees that are substantially more complex to
batch and parallelize.

\subsection{Batch Subtree Queries} \label{sec:batch-subtree-query}

In a weighted unrooted tree, a subtree query takes a subtree root $u$ and a parent $p$ and asks for the sum of the weights of
the vertices/edges in the subtree rooted at $u$, assuming the tree is rooted such that $p$ is the parent of $u$. The weights can be
aggregated using any predefined associative and commutative operator (i.e., the weights are from a commutative semigroup), such as minimum, maximum, or sum. Weights
can be present on vertices or edges or both.

\begin{itemize}[leftmargin=*]
	\item \tcbatchsubtree{}$(\{(u_1,p_1), \ldots, (u_k, p_k)\})$ takes an array of tuples representing queries. The output is array where the $i^\text{th}$
	entry contains the sum over the commutative semigroup operation of the contents of the subtree rooted at $u_i$ relative to the parent $p_i$.
\end{itemize}

\noindent The algorithm begins with a bottom-up computation that stores on each cluster the total aggregate weight of the contents of that
cluster. This is stored as an augmented value and hence already available at query time. The key step of the batched algorithm
is the subsequent top-down computation which computes the contributions of the relevant clusters. A full description of the algorithm
can be found in the supplementary material.

\begin{theorem}[Batch subtree queries]\label{thm:batch-subtree-performance}
	An RC-Tree for a forest on $n$ vertices with weights from a commutative semigroup can be
	augmented to solve a batch of $k$ subtree sum queries in $O\left(k\log\left(1+\frac{n}{k}\right)\right)$ work and $O(\log n)$ span.
\end{theorem}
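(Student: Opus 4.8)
The plan is to pair a bottom-up augmentation, computed when the RC-Tree is built and maintained under updates, with a single batched top-down traversal at query time, following the bottom-up/top-down template of Section~\ref{sec:batch-queries-on-rc-trees} and reusing the batched-traversal machinery already used for batch connectivity (Theorem~\ref{thm:batch-connectivity-performance}).

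First I would install the augmented values: for each RC cluster $C$, store $\mathrm{aug}(C)$, the semigroup sum of the weights of all vertices and edges in the contents of $C$. Since every vertex and every edge is absorbed by exactly one internal cluster, the internal clusters form a laminar family over the base clusters, so $\mathrm{aug}$ is well defined and satisfies $\mathrm{aug}(C) = \bigoplus_{C' \text{ a child of } C} \mathrm{aug}(C')$, with base clusters holding their own weight; being bottom-up, it is prepared at build time and maintained under updates, hence free at query time. Given these values, answering one query $(u,p)$ reduces, by the Subtree Decomposition Property (Theorem~\ref{thm:subtree-decomposition-property}), to identifying the set $D$ of clusters --- each a direct child of the RC-Tree path from $U$ (the node represented by $u$) to its root cluster --- whose contents union to $S$, and returning $\bigoplus_{C \in D} \mathrm{aug}(C)$. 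One observation I would add is that $D$ can be taken to have pairwise disjoint contents (two children of the same RC node have disjoint contents, and clusters hanging off distinct points of the path lie in disjoint RC-subtrees), so every vertex and edge of $S$ is counted exactly once, which is all a commutative semigroup sum requires. A single query thus walks the RC-Tree path from the root down to $U$ and, at each cluster on the path, decides in $O(1)$ from a constant amount of top-down state (the current cluster's boundary vertices, the direction toward $p$, and whether the cluster separating $u$ from $p$ has been passed) which of that cluster's $O(1)$ off-path children lie in $D$, folding their $\mathrm{aug}$ values into a running accumulator; after processing $U$ (whose relevant children are included the same way) the accumulator is the answer.

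To batch this I would (i) look up all $U_i$ in $O(k)$ work; (ii) form the union $\mathcal{U}$ of the $k$ root-to-$U_i$ RC-Tree paths and run the same $O(\log n)$-span batched traversal of it used for batch connectivity, which by the balance of RC-Trees visits $|\mathcal{U}| = O(k \log(1 + n/k))$ clusters; and (iii) carry the per-query top-down state and accumulator along these paths, using $O(1)$ further parallel-primitive sweeps (prefix sums, filter, groupBy, hashing) so the span remains $O(\log n)$. Each query then emits its accumulator.

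The step I expect to be the main obstacle is showing that phase (iii) costs $O(k\log(1+n/k))$ rather than $O(\sum_i |D_i|)$, which can be as large as $\Theta(k\log n)$ when $k$ is large (for instance, taking all subtree-prefix sums along a path graph). The key point to establish is that the aggregate folded in when descending an RC-Tree edge $(C,C')$ --- the set of off-path children of $C$ that enter $D_i$ there, and hence the value passed down --- can be shared across all queries routed through $(C,C')$, so that the top-down value at each cluster of $\mathcal{U}$ is computed once rather than once per query. This needs a careful analysis of how a subtree's decomposition clusters sit along the RC-Tree path relative to that subtree's $u$-versus-$p$ branch, with queries that share $u_i$ but differ in $p_i$ (of which there can be $\Theta(k)$) being the delicate case. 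Correctly handling the boundary-crossing edge cases --- $u$ or $p$ being a boundary vertex of a cluster involved, or $u = p$ --- is fiddly but adds only $O(1)$ work per query. Combined with the $O(k)$ lookup and the $O(k\log(1+n/k))$-size shared traversal, this gives the claimed work $O(k\log(1+n/k))$ and span $O(\log n)$.
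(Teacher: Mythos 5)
Your setup matches the paper's: the same bottom-up augmented value $\mathrm{aug}(C)$, the same marking of all ancestors of the query nodes (an $O\left(k\log\left(1+\frac{n}{k}\right)\right)$-size set by Theorem~\ref{thm:rc-batch-theorem}), and the same intent to finish with one top-down pass. But the step you yourself flag as ``the main obstacle''---showing that the top-down phase costs $O\left(k\log\left(1+\frac{n}{k}\right)\right)$ rather than $O\bigl(\sum_i |D_i|\bigr)=O(k\log n)$---is precisely the heart of the proof, and your proposal does not supply it. You frame each query as a root-to-$U_i$ descent carrying \emph{per-query} state (the direction toward $p_i$, whether the cluster separating $u_i$ from $p_i$ has been passed) and an accumulator, and then hope the values folded in at a shared RC-Tree edge can be shared across queries; you leave the required ``careful analysis'' unproved. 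As stated, distinct queries through the same cluster carry distinct state, so nothing forces the per-cluster work to be $O(1)$, and the bound you can actually certify is $O(k\log n)$.

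The missing idea, which is how the paper closes exactly this gap, is to make the top-down quantity \emph{query-independent}: for every marked cluster $U$ and each of its (at most two) boundary vertices $b$, compute the total weight of the subtree growing out of $b$ with respect to $u$ (Lemma~\ref{lem:boundary-subtree}). This value depends only on the cluster, not on any query, and satisfies an $O(1)$ recurrence down the marked subtree: letting $B$ be the ancestor represented by $b$, it is the sum of $\mathrm{aug}$ over the children of $B$ other than the one containing $u$, plus the already-computed boundary-subtree values of $B$ for boundaries not shared with that child. Because $p_i$ is \emph{adjacent} to $u_i$, all dependence on $p_i$ is confined to the node $U_i$ itself: the answer is the sum of $\mathrm{aug}$ over the children of $U_i$ except the one containing/adjacent to $p_i$, plus the boundary-subtree values of $U_i$ except for at most one toward $p_i$ (Figure~\ref{fig:subtree-decomposition-proof}). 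So no per-query information ever travels along the path, the top-down pass does $O(1)$ work per marked cluster, and each query finishes in $O(1)$, giving $O\left(k\log\left(1+\frac{n}{k}\right)\right)$ work and $O(\log n)$ span. Your ``decide at each path node which off-path children enter $D_i$'' picture both creates the sharing problem and obscures that the exclusion happens only at the bottom; replacing it with the boundary-subtree recurrence is what your argument needs.
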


\subsection{Batch LCA Queries}\label{sec:batch-lca}

Lowest common ancestors (LCAs) are a useful subroutine for several tree and graph algorithms, and there exists a wide variety of algorithms
for the problem including parallel algorithms~\cite{SV88} and algorithms that work on dynamic trees~\cite{sleator1983data}.
Link/cut trees~\cite{sleator1983data,ST85}, Euler-tour trees~\cite{henzinger1995randomized}, and sequential RC-Trees~\cite{acar2005experimental}
are all able to solve LCA queries. However, we know of no existing algorithm that
can efficiently solve \emph{batches} of LCA queries on a dynamic tree. We will describe how to do just this using parallel RC-Trees.

\begin{itemize}[leftmargin=*]
	\item \tcbatchlca{}$(\{(u_1,v_1,r_1), \ldots, (u_k, v_k, r_k)\})$ takes an array of tuples representing queries. The output is array where the $i^\text{th}$
	entry is the LCA of $u_i$ and $v_i$ with respect to the root $r_i$.
\end{itemize}

\noindent The batch-LCA algorithm makes use of the static parallel LCA algorithm of Schiber and Vishkin~\cite{SV88}, the
static parallel level ancestor algorithm of Berkman and Vishkin~\cite{BV94} as well as several new tricks to efficiently
batch the computation of multiple LCA's in parallel. The full algorithm is deferred to the supplementary material.

\begin{theorem}[Batch LCA queries]\label{thm:batch-lca}
	Given an RC-Tree for a forest on $n$ vertices, a batch of $k$ LCA queries can be answered
	in $O\left(k + k\log\left(1+\frac{n}{k}\right)\right)$ work and $O(\log n)$ span.
\end{theorem}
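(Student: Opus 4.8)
The plan is to reduce each arbitrary-root query $LCA_F(u,v,r)$ (Definition~\ref{ancDef}) to a small number of \emph{fixed-root} LCA queries and then to batch those on the RC-Tree. The key observation is the classical median identity: the vertex $LCA_F(u,v,r)$ is exactly the \emph{median} of $u$, $v$, $r$ in $F$, i.e.\ the unique vertex lying on all three of the paths $u$--$v$, $v$--$r$, $u$--$r$ (which exists precisely when $u,v,r$ lie in one tree; otherwise the answer is null). Moreover, if we fix \emph{any} rooting $\rho$ of the tree containing $u,v,r$, then among the three pairwise ancestors $\{LCA_\rho(u,v), LCA_\rho(v,r), LCA_\rho(u,r)\}$ either all three coincide (and then that common vertex is the median), or exactly two coincide and the third is the median. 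Hence the query reduces to (i)~computing, per query, the three pairwise fixed-root LCAs and (ii)~selecting the distinguished one in $O(1)$ time. For the forest case we first run a \tcbatchisconnected{}-style query (Theorem~\ref{thm:batch-connectivity-performance}) on the pairs $(u_i,v_i)$ and $(u_i,r_i)$ to discard queries whose three vertices are not co-located, at cost $O(k + k\log(1+\frac{n}{k}))$ work and $O(\log n)$ span. We are left with a batch of at most $3k$ fixed-root $F$-LCA queries, where component $c$ is rooted at a canonical vertex $\rho_c$ (e.g.\ the representative of its nullary root cluster), the same for all queries in $c$.

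It then remains to answer a batch of $O(k)$ fixed-root $F$-LCA queries on the RC-Tree. First we gather the \emph{skeleton}: the RC-Tree leaf clusters $\{U_j\}$ appearing in the queries, together with the $\rho_c$-clusters and all their RC-Tree ancestors. By the standard density argument this has $O(k\log(1+\frac{n}{k}))$ clusters and it is assembled in $O(\log n)$ span via semisort and a parallel hash table. For a single query $LCA_{\rho_c}(x,y)$ we use the path-decomposition property (Theorem~\ref{thm:path-decomposition-property}): the $F$-path $x$--$y$ is covered by $O(\log n)$ disjoint binary clusters hanging off the RC-Tree path between $X$ and $Y$, and likewise for $x$--$\rho_c$ and $y$--$\rho_c$. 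Storing on each binary cluster, as a bottom-up augmented value, the length of its cluster path, a single top-down sweep over the skeleton determines for each query which of these clusters (or which shared boundary vertex) contains the median; when the median is a boundary vertex we are done immediately.

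The main obstacle is the remaining case, in which the median lies strictly \emph{inside} some binary cluster's cluster path — a path in $F$ that may have length $\Theta(n)$ — so that we must name the exact interior vertex, and do so for all $\Theta(k)$ such queries simultaneously in $O(\log n)$ span while touching only $O(k\log(1+\frac{n}{k}))$ RC-Tree nodes. The plan here is to turn this into a \emph{static} problem: build an auxiliary rooted tree $T'$ of size $O(k\log(1+\frac{n}{k}))$ that simulates the relevant portion of $F$, contracting each long cluster path to a single labeled edge while recording, through that cluster path's recursive binary decomposition, how to expand a given offset back into a genuine vertex of $F$; then invoke the static parallel LCA algorithm of Schieber and Vishkin~\cite{SV88} and the static parallel level-ancestor algorithm of Berkman and Vishkin~\cite{BV94} on $T'$. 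Both preprocess in work linear in $|T'|$ and span $O(\log|T'|) = O(\log n)$ and answer queries in $O(1)$; the level-ancestor queries recover the target vertex once its position along a cluster path is pinned down (a logarithmic-depth descent through that cluster's RC-subtree, which can itself be unrolled and run in parallel across all queries).

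Finally, for each original query $i$ we combine its three fixed-root answers by the distinguished-element rule above, obtaining $LCA_F(u_i,v_i,r_i)$ in $O(1)$ time per query. Summing the costs — batch connectivity $O(k+k\log(1+\frac{n}{k}))$, skeleton construction and the bottom-up/top-down sweeps $O(k\log(1+\frac{n}{k}))$, and the static parallel preprocessing plus $O(k)$ constant-time queries — yields $O\!\left(k + k\log\!\left(1+\frac{n}{k}\right)\right)$ work and $O(\log n)$ span, as claimed. The delicate points to verify are the skeleton-size bound, the correctness of the interior-vertex localization (that the augmented cluster-path data together with one top-down sweep really do determine the median), and that every stage — parallel list contraction, semisort, the two Vishkin-style algorithms, and the cluster-path descents — stays within $O(\log n)$ span.
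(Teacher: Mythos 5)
Your high-level architecture matches the paper's: reduce each arbitrary-root query to fixed-root queries at the canonical RC root via the three-pairwise-LCA identity (the paper's Lemma~\ref{logic}), mark the $O\left(k\log\left(1+\frac{n}{k}\right)\right)$ skeleton of ancestors, build the static Schieber--Vishkin LCA and Berkman--Vishkin level-ancestor structures on that skeleton, and finish each query in $O(1)$. However, there is a genuine gap at exactly the step you flag as ``the main obstacle.'' When the answer is not the \commonboundary{} (or a query vertex), you propose to locate it as an interior vertex of some long cluster path by contracting cluster paths to labeled edges in an auxiliary tree $T'$, pinning down an \emph{offset} along that path, and then expanding the offset back to a real vertex by a logarithmic-depth descent through that cluster's RC-subtree. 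As sketched, this descent visits RC-tree nodes that are \emph{descendants} of the skeleton, not ancestors of any query vertex, so the skeleton-size bound does not cover them; at $O(\log n)$ nodes per query this is $O(k\log n)$ work, which exceeds $O\left(k + k\log\left(1+\frac{n}{k}\right)\right)$ when $k$ is large (e.g.\ $k=\Theta(n)$), and ``unrolling across queries'' does not obviously share this work since the descents can be disjoint. Your auxiliary-tree construction also leaves unaddressed how $T'$, with cluster paths collapsed, certifies which offset is the median when the meeting point of the three $F$-paths is hidden inside a collapsed edge.

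The paper avoids this entirely with a structural observation you are missing: in every case of the fixed-root computation, the answer is \emph{not} a generic interior vertex of a cluster path but the attachment point of the query vertex's side onto that cluster path, and this attachment point is always a boundary vertex of an \emph{ancestor} cluster of the query vertex --- namely, the boundary vertex of the highest unary ancestor of $U$ (resp.\ $V$) inside the relevant binary cluster (the paper's Lemma~\ref{closePath}). Hence the answer always lies in the skeleton and is retrieved in $O(1)$ per query using an ancestor unary/binary bitset together with a level-ancestor query \emph{on the skeleton itself}, with no descent, no offsets, and no distance arithmetic along cluster paths. Supplying this observation (or an equally work-efficient substitute for your offset-expansion step, with a proof that it stays within the skeleton-size work bound) is what your argument needs to close the gap; the remaining ingredients of your proposal --- the median/three-LCA reduction, the connectivity filtering, and the skeleton-based SV/BV preprocessing --- are sound and essentially the paper's.
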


\subsection{Batch Path Queries with Inverses}\label{sec:batch-invertible-path-queries}

In the sequential single-query setting, path queries on RC-Trees were defined to operate on weighted vertices or edges where the weights are combined by
any specified associative and commutative operation (formally, the weights form a commutative \emph{semigroup}). This meant that a single algorithm could handle
queries for the sum of the edge weights, the maximum/minimum edge weight, or the total weight with respect to any arbitrarily complicated associative
and commutative operation. Unfortunately, this turns out to be \emph{impossible} to do efficiently in the batch setting (we can not hope to achieve
$O\left(k\log\left(1+\frac{n}{k}\right)\right)$ work.)

Tarjan~\cite{tarjan1978complexity} shows that the problem of verifying the edges of an MST with edge weights from a semigroup requires
$\Omega((m+n)\alpha(m+n, n))$ time, where $\alpha$ is the inverse ackerman function. MST verification consists in querying for the
minimum weight edge in a given spanning tree between the endpoints of the $\Theta(m)$ non-tree edges of the graph, and checking that the non-tree edge is
no lighter than the MST edge. This can be reduced to a batch path query, and hence this implies a superlinear lower bound
on batch path queries. 

Fortunately, not all hope is lost. The MST verification problem, or more generally, the offline path query problem admits several efficient solutions
both sequentially and parallel when additional assumptions are made about the weights. For example, if the weights have an inverse (i.e., we can perform
subtraction), linear-work algorithms are plentiful. In this section, we will give an algorithm for batch path queries on RC-Trees when the weights
admit an inverse (formally, the weights form a commutative \emph{group}). This covers the case where the weights are real-valued numbers and the goal
is to compute the sum of the weights along a path.  It does not work for computing the minimum- or maximum-weight edge on a path since there is no
inverse operation for the minimum or maximum operation.

\begin{itemize}[leftmargin=*]
	\item \tcbatchpathsum{}$(\{(u_1,v_1), \ldots, (u_k, v_k)\})$ takes an array of tuples representing queries. The output is array where the $i^\text{th}$
	entry contains the sum over the commutative group operation of the weights on the path between $u_i$ and $v_i$.
\end{itemize}

\noindent The main idea is a simple and classic technique involving the use of LCAs. Suppose we are able to compute the total weight of the path from the RC-Tree
root to any other vertex in the tree. Then the total weight on the path between two arbitrary vertices $u$ and $v$ is the sum of the paths from the root
to $u$, plus the root to $v$, less twice the weight of the path from root to LCA of $u$ and $v$. Using our batch LCA algorithm from Section~\ref{sec:batch-lca},
all that remains is to be able to compute the total weight from the root to any other vertex. We describe such an algorithm in the supplementary material.

\begin{theorem}[Batch path queries over a commutative group]\label{thm:batch-path-query-group-performance}
	An RC-Tree for a forest on $n$ vertices with weights from a commutative group can be augmented to solve a batch of
	$k$ path sum queries in $O\left(k + k\log\left(1+\frac{n}{k}\right)\right)$ work and $O(\log n)$ span.
\end{theorem}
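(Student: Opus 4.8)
The plan is to reduce batch path-sum queries to two ingredients: (1) a bottom-up augmentation that lets us read off the root-to-vertex distance for any vertex, and (2) the batch LCA algorithm from Theorem~\ref{thm:batch-lca}. The classical identity $d(u,v) = d(r,u) + d(r,v) - 2\,d(r,\mathrm{LCA}(u,v,r))$ (where $d$ denotes the group-sum of weights along a path, and $r$ is the RC-tree root of the component) handles everything once we can evaluate $d(r,\cdot)$ at an arbitrary batch of $O(k)$ vertices. Crucially this identity is where we use that the weights form a group rather than merely a semigroup: subtraction is needed to cancel the doubled root-to-LCA segment. So the first step is: for each query $(u_i,v_i)$, check connectivity (reuse Theorem~\ref{thm:batch-connectivity-performance}) and discard/flag disconnected pairs; then invoke \tcbatchlca{} on the triples $(u_i,v_i,r)$ where $r$ is the common root, obtaining the batch of LCA vertices $c_i$, all in $O(k + k\log(1+n/k))$ work and $O(\log n)$ span.

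The second step is to evaluate the root-distance function $d(r,\cdot)$ simultaneously at the $O(k)$ vertices $\{u_i, v_i, c_i\}$. This is a top-down computation in the sense of Section~3.3: we want, for each target vertex $w$, the sum of weights on the path in $F$ from $w$ up to the component root. Using the path decomposition property (Theorem~\ref{thm:path-decomposition-property}) with $U$ being the cluster for $w$ and $V$ being the root cluster $R$, the path from $w$ to the root decomposes into binary cluster paths that are direct children of the RC-tree path from $U$ up to $R$. If each binary cluster stores, as an augmented value, the group-sum of the weights along its cluster path (a bottom-up computation: the cluster-path sum of a composite binary cluster is the sum of the cluster-path sums of its binary children, and unary children contribute nothing to the path), then $d(r,w)$ is just the sum of these augmented values over the decomposing clusters. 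To do this for a whole batch, we collect the $O(k)$ target clusters, walk up to the roots, mark the union of their ancestor paths (this is the standard $O(k\log(1+n/k))$-size set of clusters), and perform a single parallel sweep down the marked subtree of the RC-forest accumulating the path sums; each marked cluster passes its accumulated root-distance prefix to the marked children that lie on some query's ancestor path, adding in the augmented cluster-path values of the appropriate sibling binary clusters. This is the same computational pattern used for the batch subtree and batch connectivity algorithms, so it runs in $O(k + k\log(1+n/k))$ work and $O(\log n)$ span; one has to be a little careful that each vertex $w$ may itself be a boundary vertex of the cluster containing it (the edge case noted after the \commonboundary{} definition), but that only affects which endpoint of the relevant clusters $w$ plays and is handled locally.

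The final step is the pointwise arithmetic: having the three values $d(r,u_i)$, $d(r,v_i)$, $d(r,c_i)$ in hand for every $i$, compute $d(r,u_i) + d(r,v_i) - 2 d(r,c_i)$ in $O(1)$ work each, $O(k)$ total, $O(1)$ span. Composing the three steps, the dominant cost is the batch LCA call and the top-down root-distance sweep, each $O(k + k\log(1+n/k))$ work and $O(\log n)$ span, and the result follows. If the forest has unbounded degree we are working on its ternarized image, so the bounds become randomized with work in expectation and span w.h.p., consistent with the umbrella Theorem.

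I expect the main obstacle to be the top-down root-distance computation, specifically organizing the single parallel downsweep over the union of ancestor paths so that each target vertex correctly receives the sum of exactly the binary cluster-path augmented values that appear in its path decomposition — without double-counting and without touching clusters off the marked subtree — while keeping the work proportional to the total number of marked clusters, $O(k + k\log(1+n/k))$, rather than $O(k\log n)$. The correctness of the pointwise identity and the reduction to LCA are routine; the care is all in the batched tree traversal bookkeeping, which is exactly why we defer the detailed description to the supplementary material and lean on the decomposition theorems to certify that the needed set of clusters has the right size.
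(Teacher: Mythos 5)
Your proposal matches the paper's own proof in all essentials: the same reduction via the identity $\textproc{PathSum}(r,u)+\textproc{PathSum}(r,v)-2\,\textproc{PathSum}(r,\text{LCA}(u,v))$ using the batch-LCA algorithm of Theorem~\ref{thm:batch-lca}, the same bottom-up augmented values storing cluster-path sums on binary clusters, and the same top-down sweep over the $O\left(k\log\left(1+\frac{n}{k}\right)\right)$ marked ancestor clusters to obtain root-to-vertex sums in constant work per cluster. The only detail the paper makes explicit that you gloss over is that deciding which boundary vertex of a binary cluster is closer to the root reuses the batch-LCA top-down computation, but this is bookkeeping, not a gap.
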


\subsection{Batch Path-Minimum/Maximum Queries}

In Section~\ref{sec:batch-invertible-path-queries}, we discussed the impossibility of a general batch path query algorithm that works for any 
commutative semigroup (i.e., any desired associative and commutative operation over the weights with no additional assumptions), and gave an
algorithm that works for any commutative group, i.e., the case where the weights admit an inverse, such as computing the sum of the weights on
a path. Notably, this doesn't cover perhaps the most well studied kinds of path queries, which are path minima and maxima queries. These
are often referred to as \emph{bottleneck} queries, and are important because they form the basis of many algorithms for MSTs and MST verification~\cite{tarjan1979applications,king1997simpler,karger1995randomized}, as well as network flow algorithms~\cite{tarjan1983data}.
In this section we show that the special case of path minima/maxima queries also circumvents the lower bound and is efficiently solvable.

\begin{itemize}[leftmargin=*]
	\item \tcbatchpathmin{}$(\{(u_1,v_1), \ldots, (u_k, v_k)\})$ takes an array of tuples representing queries. The output is array where the $i^\text{th}$
	entry is the lightest edge on the path between $u_i$ and $v_i$.
\end{itemize}

\noindent  A classic technique for solving bottleneck problems on weighted trees is to first shrink the tree to a smaller
tree such that the minimum weight edge on the paths between the query vertices is unaffected. Anderson et. al~\cite{anderson2020work} give a parallel
algorithm that achieves this. Specifically, given an RC-Tree over a weighted input tree and $k$ query vertices, their algorithm produces a so-called
\emph{compressed path tree} containing the query vertices and at most $O(k)$ additional vertices such that the path maxima between every pair of
query vertices is preserved. Using this tool, the idea is then to reduce the problem to a small static offline path minima problem and then use the
existing algorithm of King et al.~\cite{king1997optimal} to solve it. Further details are provided in the supplementary material.

\begin{theorem}[Batch path minima/maxima queries]\label{thm:batch-path-minima-performance}
	An RC-Tree for a forest on $n$ vertices with comparable edge weights can be augmented to solve a batch of
	$k$ path minima/maxima queries in $O\left(k + k\log\left(1+\frac{n}{k}\right)\right)$ work and $O(\log n)$ span.
\end{theorem}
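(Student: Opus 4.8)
The plan is to reduce the batch path-minimum problem to a \emph{static}, offline path-minimum problem on a tree of size $O(k)$, and then invoke a known linear-work, low-span algorithm for that static problem. Path maxima are symmetric (reverse the comparison operator), so I only discuss path minima. As a preprocessing augmentation I would store on every binary cluster, as a bottom-up augmented value, the minimum-weight edge on its cluster path; because the cluster path of a composite binary cluster is the union of the cluster paths of its two binary children, this is a valid bottom-up aggregation, computed in $O(n)$ work at build time and maintained under updates by the standard RC-Tree machinery. This is exactly the augmentation consumed by the compressed-path-tree routine.

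Given the batch $\{(u_i,v_i)\}_{i=1}^{k}$, first run a batch connectivity query (Theorem~\ref{thm:batch-connectivity-performance}) to return \texttt{null} for pairs whose endpoints lie in different trees and, using a semisort, to group the surviving queries by connected component; all subsequent work is done per component in parallel. Within a component, collect the (at most $2k$) distinct query endpoints and feed them to the parallel \emph{compressed path tree} construction of Anderson et al.~\cite{anderson2020work}. That construction produces, in $O\!\left(k + k\log\!\left(1+\frac{n}{k}\right)\right)$ work and $O(\log n)$ span, a tree $T'$ on the query vertices plus $O(k)$ additional (Steiner) vertices, together with, on each edge of $T'$, the weight of and a witnessing original edge of $F$ for the corresponding $F$-path, such that for any two query vertices the minimum-weight edge on their path in $T'$ has the same weight as the minimum-weight edge on their path in $F$. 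After relabeling the $O(k)$ vertices of $T'$ to a contiguous range (a hash-table/semisort step costing $O(k)$ expected work and $O(\log^* k)$ span), we have a purely static weighted tree on $O(k)$ vertices with a set of $k$ path-minimum queries between pairs of its vertices.

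Finally, solve this static instance with the offline path-minima (tree-path-bottleneck) algorithm of King et al.~\cite{king1997optimal}, which on an $O(k)$-vertex tree with $k$ queries runs in $O(k)$ work and $O(\log n)$ span. Each answer is returned together with the witnessing edge of $T'$, which maps back through the compressed path tree to a genuine edge of $F$ that we report. Summing the three phases yields $O\!\left(k + k\log\!\left(1+\frac{n}{k}\right)\right)$ work and $O(\log n)$ span. The randomized caveat for unbounded-degree forests is inherited from the RC-Tree operations and from the semisort steps; for bounded degree every step above is deterministic.

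The step I expect to be the main obstacle is arguing that the reduction to a static problem is tight and loses nothing: the compressed path tree only guarantees that pairwise path minima \emph{among the designated query vertices} are preserved (not between arbitrary pairs), so one must verify that this is exactly what the query set needs, and that the witnessing edges reported are actual edges of $F$ rather than contracted ``virtual'' edges. A secondary concern is having the static offline path-minima routine available as a \emph{parallel} algorithm with $O(k)$ work and polylogarithmic span, not merely a sequential linear-time one; if the cited algorithm is stated only sequentially, one would instead invoke a parallel linear-work offline bottleneck algorithm (the $O(k)$ size of $T'$ makes this affordable), and the bookkeeping of per-component grouping, relabeling, and answer translation must be kept within $O(k)$ work rather than $O(k\log k)$.
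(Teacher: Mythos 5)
Your proposal follows essentially the same route as the paper: build the compressed path tree of Anderson et al.~\cite{anderson2020work} over the query endpoints, then solve the resulting $O(k)$-size static offline instance with the parallel path-minima subroutine from King et al.'s MST verification algorithm~\cite{king1997optimal}, handling maxima by reversing comparisons. The concerns you flag are non-issues here, since King et al.'s routine is already parallel with linear work and logarithmic span, and preserving bottleneck values among the designated query vertices is exactly what the queries require.
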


\subsection{Batch Nearest Marked Vertex Queries}

In the sequential setting, RC-Trees have been used to solve the \emph{nearest marked vertex} problem in a non-negative edge-weighted
tree~\cite{acar2005experimental}. Vertices may be marked or unmarked by update operations, and a query must then return the nearest
marked vertex to a given vertex. Optionally, edge weight updates can also be supported.

\begin{itemize}[leftmargin=*]
	\item \tcbatchnearest{}$(\{v_1, \ldots, v_k\})$ takes an array of vertices representing queries. The output is an array where the $i^\text{th}$ entry is
	the nearest marked vertex to $v_i$.
\end{itemize}

\noindent The algorithm consists in maintaining augmented values via a bottom-up computation which
locate the nearest marked vertices to the representatives inside each cluster, then using a top-down computation to find the global nearest
marked vertices by considering those both inside and outside each cluster of interest. Full details of the algorithm are given in the
supplementary material.

\begin{theorem}[Batch nearest marked vertex queries]\label{thm:batch-marked-performance}
	An RC-Tree for a forest on $n$ vertices with non-negative edge weights can be augmented to solve a batch of
	$k$ nearest marked vertex queries in $O\left(k + k\log\left(1+\frac{n}{k}\right)\right)$ work and $O(\log n)$ span.
\end{theorem}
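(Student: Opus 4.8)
The plan is to instantiate the bottom-up/top-down template for weighted distances. For the \emph{bottom-up augmentation} I would store on every cluster $C$, for each of its at most two boundary vertices $b$, the quantity $\mathrm{in}_C(b)$: the length (sum of edge weights) of the path in $F$ from $b$ to the nearest marked vertex contained in $C$, or $+\infty$ if $C$ contains none; likewise $\mathrm{in}_C(v)$ for the representative $v$ of $C$. On binary clusters I would additionally store the cluster-path length between the two boundary vertices, and on every cluster the distances in $F$ from its representative to each of its boundary vertices. When a cluster forms at a representative $v$ by merging its $O(1)$ adjacent child clusters, all of these are recomputed in $O(1)$ work by taking minima over the children, offsetting by child cluster-path lengths, and accounting for whether $v$ is itself marked. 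Being augmented values, they are set at \tcbuild{} time and refreshed by \tcbatchmark{}, \tcbatchunmark{}, and edge updates at the usual cost of $O(1)$ per touched cluster (hence $O(k + k\log(1 + n/k))$ per batch of $k$ updates), and are free at query time.

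For the \emph{query}, let $C_q$ be the unique internal cluster whose representative is the query vertex $q$. Since a path in a tree is unique, the nearest marked vertex to $q$ is the better of: the nearest one inside $C_q$, whose distance is exactly the stored value $\mathrm{in}_{C_q}(q)$; and the nearest one outside $C_q$, which must leave $C_q$ through one of its $\le 2$ boundary vertices, so its distance is $\min_{b \in B(C_q)}\bigl(\mathrm{dist}(q,b) + \mathrm{out}_{C_q}(b)\bigr)$, where $\mathrm{out}_C(b)$ is the distance from $b$ to the nearest marked vertex that is \emph{not} inside $C$. The $\mathrm{out}_C(\cdot)$ are the top-down quantities: the root cluster of each component is nullary so its $\mathrm{out}$ is vacuous, and descending from a cluster $C$ with known $\mathrm{out}_C$ to one of its children $C'$, the value $\mathrm{out}_{C'}(b)$ for a boundary vertex $b$ of $C'$ is the minimum over (i) routing out of $C$ entirely, obtained by combining $\mathrm{out}_C$ with $C$'s stored representative-to-boundary distances as needed, and (ii) routing through a \emph{sibling} $C''$ of $C'$, obtained from $C''$'s stored $\mathrm{in}$-value together with path lengths through $v = \mathrm{rep}(C)$ if needed. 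As $C$ has $O(1)$ children each descent reads $O(1)$ stored values. Edge cases --- $q$ or a relevant boundary vertex coinciding with a representative or a boundary vertex of a cluster on the path, or a cluster whose nearest marked vertex is one of its own boundary vertices --- are handled by carrying the $\le 2$ boundary vertices of each cluster explicitly.

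To \emph{batch}, note the query for $q_i$ touches $C_{q_i}$ and all of its RC-Tree ancestors; the union over all $i$ has size $O(k + k\log(1 + n/k))$ and occupies $O(\log n)$ levels, and --- grouped by level --- it can be produced in $O(k + k\log(1 + n/k))$ work and $O(\log n)$ span by exactly the ancestor-enumeration routine behind Theorem~\ref{thm:batch-connectivity-performance}, using semisort/groupBy to deduplicate where query paths merge. We then sweep this relevant subtree downward one level at a time: at each level every relevant node, in parallel, reads $\mathrm{out}$ from its parent and the stored $\mathrm{in}$ and path-length augmented values of its $O(1)$ children and computes its own $\mathrm{out}$ in $O(1)$ work; with $O(\log n)$ levels this is $O(\log n)$ span and $O(k + k\log(1 + n/k))$ total work. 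A final parallel pass over the $k$ queries combines, for each $q_i$, $\mathrm{in}_{C_{q_i}}(q_i)$ with the two boundary terms in $O(k)$ work and $O(1)$ span, yielding the stated bounds; all randomization is confined to the semisort/hashing, so the span holds w.h.p.

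I expect the main obstacle to be getting the inside/outside bookkeeping exactly right: defining $\mathrm{out}_{C'}$ so a path from a boundary vertex of $C'$ is never permitted to re-enter $C'$, correctly fusing a cluster's own outside information with its siblings' inside information across the representative vertex (with the right cluster-path-length offsets, and the possibility that the representative is marked or coincides with a boundary of the child), and then verifying that this needs only $O(1)$ data per RC node so that the batched level-by-level sweep stays within budget. The size bound on the relevant-node set rests on the balance of RC-Trees, as in the batch-connectivity and decomposition arguments already used in this section.
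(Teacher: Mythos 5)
Your proposal is correct and follows essentially the same route as the paper: bottom-up augmented values recording the nearest marked vertex within each cluster (to the representative and to each boundary vertex) together with cluster-path lengths, a top-down pass over the $O\left(k\log\left(1+\frac{n}{k}\right)\right)$ marked ancestors to fold in the "outside" contribution through the at most two boundary vertices, and a constant-time per-query readout. The only cosmetic difference is that you index the top-down quantity by boundary vertices ($\mathrm{out}_C(b)$, strictly outside $C$) whereas the paper stores the globally nearest marked vertex at each representative and exploits that boundary vertices are representatives of ancestors; with non-negative weights these bookkeepings are interchangeable and yield the same bounds.
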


\section{Ternarizing arbitrary degree trees\label{sec:ternarization}}

Ideally, we want to perform batch queries on arbitrary degree trees. However, the algorithms discussed in Section \ref{sec:batch-queries-on-rc-trees} require constant degree trees. We use a technique called \emph{ternarization} to turn an arbitrary degree tree into a degree 3 tree, as illustrated in figure \ref{fig:ternarization}. However, it is not guaranteed a priori that a batch query will return the same result on the ternarized version of the tree. In this section, we discuss the ternarization process and its interaction with various batch queries. Ternarization has been used in the literature \cite{johnson1992optimal} but never implemented nor rigorously described.

%Because ternarization maintains connectivity, batch connectivity queries automatically work on a ternarized tree. However, not all batch queries automatically work on a ternarized tree. Batch subtree and path queries only work if the operation (e.g., sum,min,max) used has an \textit{identity}. 

%A technique called \emph{ternarization} may be used to extend these algorithms to operate with arbitrary degree trees if an \textit{identity} value exists 
%for queries that will be performed on the graph.
%The idea is that an arbitrary degree tree can be converted into a fixed degree tree by adding dummy vertices and edges connected to these vertices which have an \textit{identity} weight.

\begin{figure}[h]
	\centering
	\includegraphics[width=0.6\textwidth]{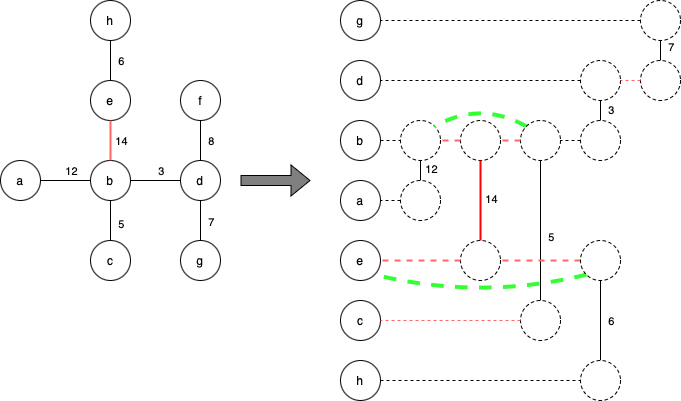}
	\caption{An arbitrary degree tree may be ternarized by inserting edges to dummy vertices with an identity weight (dotted edges). 
	An edge deleted (red) in the original graph may correspond with upto 5 delete edges (red) and 2 add edges (green).
    }
	\label{fig:ternarization}
\end{figure}

\begin{theorem}[Cost of ternarization]
	Given $k$ add and delete edges, ternarization can be performed in $O(k)$ work in expectation and $O(\log k)$ span, w.h.p. in $k$.
	\label{theorem:ternarization-cost}
\end{theorem}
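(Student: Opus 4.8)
The plan is to exhibit a ternarization scheme in which each original edge operation induces only $O(1)$ edge operations on the ternarized tree, and then to show that, for a whole batch, all of the bookkeeping reduces to a constant number of calls to standard parallel primitives (groupBy, hash-table lookup, list contraction, prefix sum). Concretely, I would replace each vertex $v$ of the input forest by a path (``caterpillar spine'') of $\deg(v)$ dummy vertices joined by identity-weight edges, where each dummy vertex \emph{owns} exactly one edge incident to $v$; interior spine dummies then have degree $3$ and the two end dummies have degree $2$. We maintain on each dummy vertex pointers to its spine predecessor, spine successor, and owned edge; on each vertex $v$ a pointer to the current end of its spine; and a hash table mapping each original edge to the two dummies that own it. Since $\sum_v \deg(v) \le 2n$, the ternarized forest has $O(n)$ vertices, and the (possibly large) depth of a spine is harmless because RC-tree contraction is insensitive to tree depth.

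Given this scheme, a single edge deletion $(v,u)$ is realized by deleting the edge between the two owner dummies and splicing each owner out of its spine --- delete its two incident spine edges, add one spine edge between its surviving neighbours --- i.e.\ at most $5$ deletions and $2$ additions, exactly as in Figure~\ref{fig:ternarization}; a single edge insertion $(v,u)$ is realized by allocating one fresh dummy at each of $v$ and $u$, appending it to that vertex's spine (one addition each), and adding the edge between the two new dummies, i.e.\ $3$ additions; every pointer update is local. For a batch of $k$ operations I would then: (1)~form the at most $2k$ pairs (affected vertex, operation) and group them by vertex with groupBy, in $O(k)$ expected work and $O(\log k)$ span w.h.p.; (2)~look up, in the hash table, the owner dummies of all deleted edges, in $O(k)$ expected work and $O(\log^* k)$ span; (3)~within each vertex $v$, splice out all of its deleted owner dummies at once (detailed below); (4)~allocate $O(k)$ fresh dummy ids with a prefix sum, and for each $v$ chain its new dummies into a sub-path and attach it to the post-splice end of $v$'s spine, in $O(k)$ work and $O(\log k)$ span; and (5)~flatten the per-vertex outputs into one array of $O(k)$ ternarized edge operations (partitioned, if needed, into a deletion phase followed by an insertion phase so that they can be handed to the batch cut and batch link routines) and commit the pointer updates.

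For step (3), I would first probe the set of deleted dummies in the hash table so that each deleted dummy learns whether its spine neighbours are also deleted; this breaks the deleted dummies at $v$ into maximal spine \emph{runs}, each bordered by at most two surviving dummies. Viewing each run together with its border dummies as a short doubly-linked list (pointers into surviving dummies treated as terminal markers holding their ids), we feed the whole collection of runs --- of total size $O(k)$ --- to the parallel list-contraction primitive to recover, for each run, the pair of surviving spine endpoints, and emit one identity-weight spine edge between them (or none, in the degenerate cases where a run reaches a spine end or where $v$'s entire spine is deleted). This costs $O(k)$ expected work and $O(\log k)$ span w.h.p. Summing steps (1)--(5), the total is $O(k)$ expected work and $O(\log k)$ span w.h.p., as claimed.

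The step I expect to be the main obstacle is exactly this splicing when several deletions (or a deletion and a nearby freshly-appended dummy) fall on the same vertex in one batch: the naive per-edge rule ``reconnect $\mathrm{prev}(d)$ to $\mathrm{next}(d)$'' is wrong as soon as $\mathrm{prev}(d)$ or $\mathrm{next}(d)$ is itself deleted, so one must contract each maximal deleted run to recover the true surviving endpoints --- which is what list contraction provides --- while being careful (i)~to feed the primitive only the $O(k)$ deleted dummies and their immediate surviving neighbours rather than the whole (possibly enormous) spines, so the work stays $O(k)$ and not $O(\sum_v \deg v)$, and (ii)~to handle the boundary cases cleanly: runs reaching a spine end, a vertex whose degree drops to $0$ or $1$, and an edge that is the last one owned by an end dummy. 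A secondary point to check is that the emitted operations form a legal input to the batch link/cut routines --- the deletions all name currently-existing edges, and the additions, applied after the deletions, keep the ternarized graph a forest --- which follows from the invariant that after each batch every gadget is again a simple spine whose length equals the new degree.
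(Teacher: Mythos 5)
Your proposal is correct and follows essentially the same construction as the paper: each vertex is expanded into a chain (spine) of identity-weight dummy vertices, one per incident edge, with a hash table mapping original edges to their owner dummies; batch insertions are handled by grouping per vertex, allocating dummies via prefix sums, and appending to the chain tail (3 added edges per insertion), while batch deletions mark the owner dummies and use parallel list contraction over the marked runs to re-splice the chain (at most 5 deletions and 2 additions per deletion), all within a constant number of calls to groupBy, hashing, prefix sum, flatten, and list contraction, giving $O(k)$ expected work and $O(\log k)$ span w.h.p. The paper's proof differs only in bookkeeping details (an explicit free-list and cursor for dummy-slot reuse, and a semisort on the second endpoint), not in substance.
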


\begin{theorem}[Bound on number of ternarized edges]
	Given $k$ add and delete edges, ternarization will convert these into $O(k)$ add and delete edges with constant degree.
	\label{theorem:ternarization-bound}
\end{theorem}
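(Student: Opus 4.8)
\myparagraph{Proof plan}
The plan is to charge the cost on a per-gadget basis. Recall that ternarization replaces each original vertex $v$ by a small gadget tree $G_v$ of degree-$\le 3$ dummy vertices whose leaves are in bijection with the real edges currently incident to $v$, and that each real edge of the forest connects a leaf of $G_u$ to a leaf of $G_v$ (see Figure~\ref{fig:ternarization}). I would first observe that a batch of $k$ original updates induces, for each original vertex $v$, a set $A_v$ of real edges to be newly attached at $v$ and a set $D_v$ of real edges to be detached at $v$, and that $\sum_v (|A_v| + |D_v|) \le 2k$, since every original insertion or deletion contributes to the $A$- or $D$-set of exactly its two endpoints. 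It therefore suffices to show that $G_v$ can be updated to a valid gadget for the new degree of $v$ by changing only $O(|A_v| + |D_v|)$ of its edges while again having maximum degree $3$; summing over $v$ then gives $O(k)$.

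For a single gadget I would process the deletions first and the insertions second. For deletions, remove the $|D_v|$ leaves corresponding to $D_v$. This can create two kinds of dummies with too-low degree: dummies all of whose leaf-descendants were deleted, which form a union of ``dead'' subtrees of $G_v$, and dummies with exactly one dead child, which now have degree $2$. The key combinatorial point is that maximal dead subtrees are sibling-free (if both children of a dummy are dead, that dummy is itself dead), so there are at most $|D_v|$ of them and their total size is $O(|D_v|)$; hence deleting the dead subtrees and then contracting each induced degree-$2$ dummy (replacing it and its two incident edges by a single edge) touches only $O(|D_v|)$ edges and creates no vertex of degree more than $3$. For insertions, build a balanced binary tree on $|A_v|$ fresh leaves, one per edge of $A_v$, with $O(|A_v|)$ internal dummies, and splice its root into $G_v$ by subdividing a single existing leaf edge, with the obvious special cases when $v$ previously had degree $0$ or $1$; this changes $O(|A_v|)$ edges and again keeps all degrees $\le 3$. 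Adding the $|A_v|$ new and removing the $|D_v|$ old inter-gadget real edges costs $O(|A_v| + |D_v|)$ more. The ``$5$ deletes and $2$ adds'' of Figure~\ref{fig:ternarization} is just this bookkeeping for the instance $|A_v| = 0$, $|D_v| = 1$, counted across both endpoint gadgets.

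I expect the main obstacle to be the cascading contraction after batched deletions: without the sibling-free observation one might fear that deleting a few leaves forces re-contracting a long chain of the gadget (a single delete can indeed trigger an arbitrarily long cascade in a degenerate ``caterpillar'' gadget), so the argument has to pin down exactly which dummies lose degree and bound both the number of dead subtrees and the number of induced degree-$2$ contractions by $|D_v|$. A secondary, fiddly point is verifying that maximum degree $3$ is preserved through the whole delete-then-insert sequence, including the boundary cases where $\deg v$ crosses $0,1,2,3$ and where a spliced-in leaf is adjacent to a just-contracted region. Once these are in hand the bound $\sum_v O(|A_v| + |D_v|) = O(k)$ is immediate; as a byproduct the same accounting maintains $|G_v| = O(\deg v)$, so the ternarized forest also stays at $O(n)$ vertices across any sequence of batches (not needed for this statement, but used elsewhere).
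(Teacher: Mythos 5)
Your argument is sound, but it proves the bound for a different ternarization scheme than the paper's, so the routes are genuinely different. The paper represents each high-degree vertex by a \emph{chain} (linked list) of dummy vertices, one per incident real edge, maintained with a per-vertex tail pointer, a free list, and a hash table keyed by the original edge; its proof of the $O(k)$ bound is a direct constant-per-update count: each original insertion yields exactly 3 ternarized insertions (one chain edge at each endpoint plus the cross edge between the two dummies), and each original deletion yields at most 5 deletions (the cross edge plus up to two chain edges per endpoint) and at most 2 insertions from re-splicing the chain across each contiguous run of deleted dummies, found by parallel list contraction; constant degree is immediate since every dummy has at most two chain neighbours and one cross edge. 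Your per-gadget charging $\sum_v O(|A_v|+|D_v|) \le 2k$ with balanced binary gadgets, pruning of dead subtrees, and contraction of induced degree-2 dummies reaches the same conclusion, and your repair step plays the role of the paper's list-contraction re-splicing (both bound post-deletion repair by the number of deletions: one contraction per maximal dead subtree versus two add edges per marked segment). The paper's chain layout buys a very simple parallel implementation (groupBy/semisort, prefix sums, hashing, list contraction), the explicit constants quoted in the figure caption, and a proof that doubles as the work/span analysis for Theorem~\ref{theorem:ternarization-cost}; gadget depth is irrelevant there because RC compression collapses chains anyway. Your scheme buys balanced gadgets and the explicit invariant $|G_v|=O(\deg v)$, which the paper only implicitly maintains via its free list and re-splicing. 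One small tightening: the number of maximal dead subtrees is at most $|D_v|$ simply because each contains a deleted leaf and they are disjoint; the sibling-free observation is what you actually need in the contraction step (a surviving parent loses at most one child, hence drops to degree 2 rather than lower), so phrase it that way to fully close the cascading-contraction concern you raise.
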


\begin{theorem}[Equivalence of path sums]
	The path sum between any pair of vertices in the original tree is preserved in the ternarized tree.
	\label{theorem:preserve-path-sums}
\end{theorem}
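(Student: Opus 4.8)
The plan is to reduce the statement to a \emph{local} claim about expanding a single vertex and then compose. Recall that ternarization replaces each vertex $v$ of degree $d > 3$ by a small tree $T_v$ of \emph{dummy} vertices --- keeping $v$ itself as one of them --- in which every dummy vertex and every newly inserted (dotted) edge is assigned the identity element of the commutative group, and then reattaches each original edge formerly incident to $v$ to some vertex of $T_v$ (Figure~\ref{fig:ternarization}). This operation keeps the structure a forest, induces a bijection between the original vertices and the non-dummy vertices of the result, and does not change which original vertices lie in the same tree.

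First I would establish the local claim: if $G'$ is obtained from a forest $G$ by expanding one vertex $v$ in this way, then for every pair of original vertices $a, b$ lying in the same tree, the total weight of the $a$--$b$ path is the same in $G$ and in $G'$. Since $G$ is a forest, the $a$--$b$ path $P$ is unique; I construct the corresponding walk $P'$ in $G'$ by keeping every edge of $P$ and, for the (at most one) time $P$ passes through $v$ --- entering along an edge $e_1$ and leaving along an edge $e_2$ --- rerouting through the unique path in $T_v$ between the dummy vertices to which $e_1$ and $e_2$ are now attached (if $a$ or $b$ equals $v$, the corresponding end of $P'$ is routed within $T_v$ to that attachment point as well). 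Because $T_v$ is a tree meeting the rest of $G'$ only at these attachment points, $P'$ is a simple path, hence it \emph{is} the $a$--$b$ path of the tree $G'$; and its weight equals the weight of $P$ plus a sum of dummy vertex and edge weights, all of which are the identity. Hence the two path sums agree (if vertex weights are in use, dummy vertices carry the identity and the same argument applies verbatim).

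Ternarizing the whole forest is the composition of one such expansion per high-degree vertex, so applying the local claim inductively yields the statement for the statically ternarized forest. To cover the dynamic setting I would maintain the invariant that at every point in the update sequence the current ternarized forest is an expansion, in the above sense, of the current original forest. Edge insertions and deletions on the ternarized side, together with the rebalancing of a dummy tree $T_v$ caused by a degree change (the ``up to 5 delete and 2 add edges'' of Figure~\ref{fig:ternarization}), only delete and insert identity-weight dummy edges and move attachment points within a single $T_v$; since \emph{any} tree structure on the dummy set of $T_v$ that respects the required attachments induces exactly the same path sums among original vertices, the invariant --- and with it correctness of all path sums --- is preserved.

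The main obstacle is the dynamic part: one has to verify that the rebalancing scheme never does anything beyond rearranging identity-weight dummy edges and reassigning attachments inside one $T_v$ --- in particular that it never identifies two original vertices and never migrates an original edge to a different $T_v$. The observation that makes this clean once the bookkeeping is in place is that the internal topology of $T_v$ is irrelevant to path sums among original vertices: every sub-path inside $T_v$ contributes only the identity, so correctness is robust to whatever balancing strategy ternarization employs.
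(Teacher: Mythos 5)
Your proposal is correct and rests on the same key observation as the paper's own (much terser) argument: the path between two original vertices in the ternarized tree consists of the original weighted edges plus detours through dummy vertices whose edges carry the identity, so the path sum is unchanged. Your local-expansion-plus-composition framing and the explicit invariant for the dynamic updates are just a more careful elaboration of what the paper states in one sentence, not a different route.
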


\begin{proof} Given that we know the maximum number of vertices is $n$, we need a few data structures to perform ternarization:

\begin{enumerate}
	\item A parallel hash table \cite{Gil91a}, initialized $Ln$ where $L > 1$.
	\item A list of vertices which can hold three $<$index, weight pairs$>$ as well as an \textit{owner} index. We need $3n-2$ of these vertices in a list. The first $n$ of these have their owner initialized to themselves while the rest are dummies without an owner. We will call this the vertex list.
	\item A free-list which is a list of indices, initialized as $[n, 3n-2)$.
	\item A list of indices as the tail which is initialized to $[0, n)$.
	\item A \textit{cursor} which is the index of the first dummy vertex not in use. This is set to $n$ initially.
\end{enumerate}

Given the input edge list of \textit{add} edges represented as two indices and a weight, we first apply a groupBy on the smaller index (say $v$) in the edge. 
Then, we do a prefix-sum over each group size to generate offsets within the free-list and allocate entries from the vertex list.
Each edge now has a dummy vertex whose owner we set to $v$.
We connect these edges in a linked-list with identity weights, gathering the edges using flatten.
In the parallel hash table, we then insert $(v,w)$ as the key and the index of the dummy vertex as the value. 
The leftmost dummy vertex is connected to the (previous) tail of $v$ and the (new tail) is set to the rightmost vertex for $v$.
A \textit{cursor} (initialized to $n$) is maintained and incremented with the sum of the size of all groups.

Then, using a parallel semisort, we sort according to $w$ in the input list.
We do a similar prefix-sum over the group sizes and assign another dummy vertex to each edge, returning a list of edges with identity weights and updating the tails, owners and cursor accordingly.
Now, in parallel, we look up the $(v,w)$ from the hash table and connect the two dummy indices, returning a list of edges between dummy vertices that carry the original weight between $(v,w)$.
As such, an original add edge will contribute 3 add edges after ternarization. 

We need a different approach for delete edges. 
Assuming we have a list of $k$ pairs $(v,w)$ where $v < w$, we first use the hash table to look up the dummy vertices that are connecting them.
Each of the dummy vertex (with owner $v$) will be connected to another dummy vertex with owner $w$. We mark both of these.
We gather the delete edges from all of these using flatten.
To prevent duplicate deletions, we only delete identity weighted edges if the neighbour on the right is unmarked and edges from $v$ to $w$ if $v < w$. 
As such, an incoming delete edge will, at most, contribute 5 delete edges (one real edge, and two dummy edges from the dummy lists of each of the two real vertices).
The index of the deleted nodes if written to the free list and the cursor is decremented accordingly.

New edges, however, must be added when doing deletes. 
This can be accomplished via randomized list contraction over all marked vertices.
For each section of contigious marked vertices, we connect the rightmost dummy vertex to the leftmost vertex and update the tail if necessary.
As such, each segment will contribute two add edges.

Since each step takes $O(k)$ expected work and $O(\log k)$ span w.h.p. in $k$, we require $O(k)$ work in expectation and $O(\log k)$ span w.h.p. in $k$ to insert $k$ add or delete edges, satisfying Theorem \ref{theorem:ternarization-bound}
As each add edge contributes three add edges and each delete edge contributes, at most, five delete and two add edges and no vertex (dummy or otherwise) has a degree greater than 3, Theorem \ref{theorem:ternarization-bound} also holds. 

Note that the path from $v$ to $w$ in $T'$ consists of the path in $T$, with additional dummy edges in between. Since the dummy edges have weight of identity, the sum of a path is preserved by ternarization proving Theorem \ref{theorem:preserve-path-sums}.

\end{proof}

\subsection{Batch queries with ternarization}

An ideal property of ternarization is that batch queries are preserved. Let $T$ be a (possibly) high degree tree and $T'$ its ternarized form. 

\myparagraph{Connectivity queries} Because ternarization maintains connectivity, batch connectivity queries automatically work on a ternarized tree. 

\myparagraph{Path queries}  By Theorem \ref{theorem:preserve-path-sums}, this is immediate.
% We thus conclude the following theorem:

% \begin{theorem}[Equivalence of path sums] 
% 	Let $T$ be a tree and $T'$ its ternarized version. For any pair of vertices $v,w \in T$, if we give weight identity to the dummy edges, the sum of the weights on the path from $v$ to $w$ is the same in $T$ and $T'$.
% 	\label{theorem:ternarization-path-equivalence}
% \end{theorem}

\myparagraph{Subtree queries} Consider a subtree query on a root vertex $r$ and a neighbouring vertex $d$ that gives orientation (the direction giver). 
Let $r'$ be the dummy node of $r$ that connects to the dummy node $d'$ of $d$. 
Observe that the subtree sum of $r'$ in $T'$ is a sum of path sums, of real edges i.e. edges that correspond to edges in the subtree of $r$ w.r.t. $d$ (all in $r'$s subtree) and fake edges (which hold the identity). 
Since the one-hop path sums remain the same in a ternarized tree (via Theorem \ref{theorem:preserve-path-sums}), the subtree sum of $r'$ in $T'$ with direction giver $d'$ is equivalent to the sum of $r$ with direction giver in $d$ in $T$. 
We conclude the following theorem.

\begin{theorem} A subtree query on $r,d$ in $T$ is equivalent to a query on vertices $r'$,$d'$ in $T'$. 
The lookup for $r'$ and $d'$ incurs overhead with $O(1)$ expected work and $O(1)$ span w.h.p.
 \end{theorem}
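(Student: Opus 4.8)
The plan is to make the informal correspondence of the preceding discussion precise and then verify it in two halves: the combinatorial identity of the two subtrees, and the cost of translating a query. First I would fix notation for $r'$ and $d'$. Recall that ternarization replaces the original edge $(r,d)$ by a three-edge path in $T'$: a dummy $x$ spliced into $r$'s dummy chain by an identity edge, a dummy $y$ spliced into $d$'s dummy chain by an identity edge, and a single weight-carrying edge $(x,y)$ holding $w(r,d)$. Define $r':=x$ and $d':=y$. One endpoint of this pair is exactly the value stored in the ternarization hash table under the key $(\min(r,d),\max(r,d))$, and the other is recovered either from a companion table or as the unique weight-carrying neighbor of the first (among its at most three incident edges); in either case this is $O(1)$ hash-table operations, so by~\cite{Gil91a} the translation costs $O(1)$ expected work and $O(1)$ span w.h.p., which is the second assertion of the theorem.

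For correctness, let $S$ be the subtree of $T$ rooted at $r$ when $T$ is rooted at $d$, i.e., the component of $T\setminus\{(r,d)\}$ containing $r$; the query on $(r,d)$ must return the semigroup aggregate of the vertices and edges of $S$. The heart of the argument is a structural lemma: the component of $T'\setminus\{(r',d')\}$ that contains $r'$ is exactly the ternarized image of $S$, meaning it consists of the real vertices of $S$, the entire dummy chain of each such vertex, and the dummies and identity edges introduced for edges of $T$ with both endpoints in $S$, and nothing more. I would prove this using that $T'$ is a tree and $(r',d')$ one of its edges, so its removal produces exactly two components. Each real vertex $a$'s dummy chain is a connected path attached, by weight-carrying edges, only to the dummy chains of $a$'s $T$-neighbors; hence $a$ lies on the same side of the cut as the dummies of all of its $T$-neighbors. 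Since the only edge of $T$ crossing $\{S,\, V(T)\setminus S\}$ is $(r,d)$, whose two dummies are precisely $r'$ and $d'$, a straightforward induction along paths of $T$ shows that a real vertex lies on the $r'$ side iff it is in $S$, and similarly that the dummies and identity edges created for an edge $(a,b)$ lie on the $r'$ side iff $a,b\in S$.

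Given the lemma, the query on $(r',d')$ in $T'$ returns, by definition of an unrooted subtree query and the bottom-up augmentation of Section~\ref{sec:batch-subtree-query}, the aggregate of the contents of this component. Splitting those contents into the weight-carrying edges --- one for each edge of $S$, each carrying that edge's original weight --- together with the identity-weighted dummy vertices and dummy edges, the latter contribute the identity and drop out, so the aggregate collapses to the aggregate of the vertices and edges of $S$, which is exactly the answer to the original query. (Equivalently, one can phrase the collapse as in the preceding discussion: the subtree aggregate is assembled from the one-hop path sums of the edges of $S$, each of which is preserved by Theorem~\ref{theorem:preserve-path-sums}, plus identity contributions.) Combining this with the translation cost established above proves the theorem.

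I expect the structural lemma to be the main obstacle. The delicate points are checking that no part of $r$'s dummy chain leaks across the cut to the $d$ side (and symmetrically for $d$'s chain), and that every dummy created for an edge lying strictly inside $S$ ends up on the $r'$ side while every dummy for an edge strictly inside $T\setminus S$ ends up on the $d'$ side --- in other words, that cutting the single weight-carrying edge $(r',d')$ in $T'$ reproduces exactly the partition of $T$ induced by cutting $(r,d)$. Once that characterization of the cut is in hand, the aggregation collapse and the hash-table accounting are routine.
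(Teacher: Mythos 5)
Your proposal is correct and follows essentially the same route as the paper: identify $r'$ and $d'$ as the two dummy vertices created for the edge $(r,d)$ (retrieved via the ternarization hash table, giving the $O(1)$ expected-work, $O(1)$-span-w.h.p.\ lookup), and observe that the subtree of $r'$ with respect to $d'$ in $T'$ consists exactly of the real edges/vertices of the subtree of $r$ with respect to $d$ plus identity-weighted dummies, so the aggregates coincide. Your explicit cut lemma (that deleting $(r',d')$ in $T'$ reproduces the partition induced by deleting $(r,d)$ in $T$) simply makes rigorous what the paper states as a brief observation via Theorem~\ref{theorem:preserve-path-sums}.
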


\myparagraph{LCA queries} When we say LCA is preserved by ternarization, we mean that after dummy nodes are mapped back to their real node, that $LCA_T(u,v,r)$ is $LCA_{T'}(u,v,r)$. LCA is more difficult than other subtree and path queries because the LCA is not calculated via a repeated function application.

When reasoning about LCA and ternarization, it is more natural to use the folklore definition of LCA as the distance minimizing vertex, as described in Definition \ref{distDef2}. This definition is equivalent to the ancestor-based Definition \ref{ancDef} given earlier. We will also need Lemma \ref{zeroEdge2}. For proofs of the equivalence of the LCA definitions and of Lemma \ref{zeroEdge2}, see the supplementary material. The rough intuition behind Lemma \ref{zeroEdge2} is that moving away from $c'$ moves away from two of $\{u,v,r\}$ cannot reduce the sum of distances and so, viewing LCA through the lens of Definition \ref{distDef2}, cannot change the LCA. 

\begin{definition} \label{distDef2} Define $d(u,v)$ as the unweighted path distance from $u$ to $v$ and define $D_{u,v,r}(c):=d(u,c) + d(v,c) + d(r,c)$. Define $LCA(u,v,r') = argmin_{c \in V} D_{u,v,r}(c)$. For the forest extension of LCA, define $d(u,v)=\infty$ if $u$ and $v$ are not connected. \end{definition}

\begin{lemma} \label{zeroEdge2} Suppose $c=LCA_T(u,v,r)$. Give the edges in $E' \subseteq T$ weight 0, and all other edges weight 1. For $a,b \in T$, let $w(a,b)$ be the sum of the weight of the path from $a$ to $b$. Let $c'=argmin_{x \in T} W_{u,v,r}(x)$. Then $w(c,c')=0$. \end{lemma}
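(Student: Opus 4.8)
The plan is to exploit the elementary ``median of three points'' structure of a tree, applied twice: once in the unweighted metric $d$ to pin down where $c$ sits, and once in the reweighted metric $w$ to bound $W_{u,v,r}$. Throughout, for vertices $a,b$ let $P(a,b)$ denote the topological path between them in $T$ (this does not depend on the weights), and call $x$ \emph{between} $a$ and $b$ with respect to a metric $\delta$ if $\delta(a,x)+\delta(x,b)=\delta(a,b)$. Two facts I would record first: (i) for the unweighted metric, $x$ is between $a$ and $b$ exactly when $x\in P(a,b)$, and since weighted sub-path lengths add correctly, $x\in P(a,b)$ also implies that $x$ is between $a$ and $b$ with respect to $w$; (ii) for any tree metric $\delta$ and any $x$, summing the three triangle inequalities $\delta(u,x)+\delta(x,v)\ge\delta(u,v)$, $\delta(v,x)+\delta(x,r)\ge\delta(v,r)$, $\delta(r,x)+\delta(x,u)\ge\delta(r,u)$ yields
\[
2\bigl(\delta(u,x)+\delta(v,x)+\delta(r,x)\bigr)\ \ge\ \delta(u,v)+\delta(v,r)+\delta(r,u),
\]
with equality iff $x$ is between all three pairs $(u,v),(v,r),(r,u)$.

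First I would apply (ii) with $\delta=d$. The right-hand side is a constant, and it is attained by the median of $u,v,r$ (which always exists in a tree), so $\min_x D_{u,v,r}(x)$ equals half of it; since $c$ minimises $D_{u,v,r}$ by Definition~\ref{distDef2}, $c$ attains the bound too, hence $c$ is between all three pairs in $d$, i.e.\ $c$ lies on each of $P(u,v)$, $P(v,r)$, $P(r,u)$. Next I would apply (ii) with $\delta=w$: for every $x$, $2W_{u,v,r}(x)\ge w(u,v)+w(v,r)+w(r,u)$, and because $c$ lies on all three paths it is between all three pairs in $w$ as well, so $c$ meets this bound. Hence $c$ also minimises $W_{u,v,r}$, which gives $W_{u,v,r}(c')=W_{u,v,r}(c)$ for the given minimiser $c'$.

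The last and main step is a lower bound on $W_{u,v,r}(c')$ in terms of $w(c,c')$. I would observe that at most one of the paths $P(u,c'),P(v,c'),P(r,c')$ can avoid $c$: if two of them did, then $c'$ would lie in the same component of $T\setminus\{c\}$ as two of $u,v,r$, contradicting the fact that $c$ lies on the path joining those two. So at least two of these paths pass through $c$; pick such a pair of query vertices $a,b$, with $t$ the remaining one. Then $w(a,c')=w(a,c)+w(c,c')$ and $w(b,c')=w(b,c)+w(c,c')$, while $w(t,c')\ge w(t,c)-w(c,c')$ by the triangle inequality, so, using $\{a,b,t\}=\{u,v,r\}$,
\[
W_{u,v,r}(c')\ \ge\ \bigl(w(a,c)+w(b,c)+w(t,c)\bigr)+w(c,c')\ =\ W_{u,v,r}(c)+w(c,c').
\]
Together with $W_{u,v,r}(c')=W_{u,v,r}(c)$ this forces $w(c,c')=0$. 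I would note that this computation also absorbs the degenerate cases: if $c\in\{u,v,r\}$ the corresponding term $w(\cdot,c)$ is $0$ and that path to $c'$ trivially passes through $c$; and if $c'$ lies in a branch of $T\setminus\{c\}$ containing none of $u,v,r$, then all three paths pass through $c$ and one even gets $W_{u,v,r}(c')=W_{u,v,r}(c)+3w(c,c')$.

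The step I expect to require the most care is this last one: stating ``at least two of $u,v,r$ have their path to $c'$ going through $c$'' precisely and checking the degenerate configurations. Everything else is just the tree triangle inequality invoked twice, together with the standard fact that the distance-minimising vertex of three points is their median.
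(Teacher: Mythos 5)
Your proposal is correct and follows essentially the same route as the paper's proof: you establish that $c$ lies on at least two of the three paths from $c'$ to $u,v,r$ (the paper gets this from its Lemma~\ref{d3} by the same "path avoiding $c$" contradiction), then combine path additivity on those two with the triangle inequality on the third to get $W_{u,v,r}(c')\ge W_{u,v,r}(c)+w(c,c')$, forcing $w(c,c')=0$ by minimality. The only cosmetic difference is that you re-derive the median facts and prove $W_{u,v,r}(c)=W_{u,v,r}(c')$ via the perimeter bound, where the paper simply uses $W_{u,v,r}(c')\le W_{u,v,r}(c)$ from the definition of $c'$ as minimizer.
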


\begin{theorem} \label{tern} The owner of the LCA in a ternarized tree $T'$ is equal to the LCA in the original (high degree) tree $T$.  \end{theorem}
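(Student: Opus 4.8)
The plan is to argue entirely through the distance-minimization view of Definition~\ref{distDef2}, treating $T'$ as unweighted for LCA purposes, and to reduce the claim to a single application of Lemma~\ref{zeroEdge2} on $T'$. First I would dispose of the disconnected case: if $u,v,r$ do not lie in one tree of $T$ they do not lie in one tree of $T'$ either (ternarization preserves connectivity), so both LCAs are null and there is nothing to prove; assume henceforth they are connected. The structural fact I would isolate and use as an invariant of the ternarization procedure is this: the set of vertices of $T'$ sharing an owner $o$ --- call it the \emph{super-node} of $o$ --- is always connected in $T'$ by a path of identity-weight (dummy) edges, with $o$ itself among its members; equivalently, the identity edges of $T'$ are exactly the edges internal to a super-node, while every other (``real'') edge of $T'$ --- in particular the middle edge of each ternarized triple --- joins two distinct super-nodes, since it corresponds to some $T$-edge $(v,w)$ with $v\neq w$.

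Next I would set up the reweighting required by Lemma~\ref{zeroEdge2}: give every identity edge of $T'$ weight $0$ and every real edge weight $1$, and write $W^{T'}_{u,v,r}(x)=w(u,x)+w(v,x)+w(r,x)$ for the resulting objective on $V(T')$. Two observations drive the argument. (i) For any two original vertices $a,b$, the $T'$-path between them is exactly the $T$-path between them with each $T$-edge replaced by (dummy chain)(one real edge)(dummy chain); hence its weight equals the number of $T$-edges on it, so $w(a,b)=d(a,b)$ and therefore $W^{T'}_{u,v,r}(a)=D_{u,v,r}(a)$ for every $a\in V(T)$ --- this is essentially Theorem~\ref{theorem:preserve-path-sums} with all real weights set to $1$. (ii) Stepping along an identity edge leaves $w(\cdot,x)$ unchanged for every fixed target, so $W^{T'}_{u,v,r}$ is constant on each super-node, with value equal to $D_{u,v,r}$ evaluated at that super-node's unique original vertex. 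Combining (i) and (ii), and using that $D_{u,v,r}$ has a unique minimizer $c:=LCA_T(u,v,r)$ over $V(T)$ by Definition~\ref{distDef2}, the set of minimizers of $W^{T'}_{u,v,r}$ over all of $V(T')$ is precisely the super-node of $c$.

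Finally I would apply Lemma~\ref{zeroEdge2} to the tree $T'$, taking $E'$ to be the set of identity edges of $T'$, the role of the (unweighted) LCA played by $\widehat c:=LCA_{T'}(u,v,r)$, and $c'$ the weighted minimizer the lemma produces --- which lies in the super-node of $c$, since by the previous paragraph that super-node is the entire minimizer set. The lemma then yields $w(\widehat c,c')=0$, i.e.\ $\widehat c$ and $c'$ are joined in $T'$ by a path of identity edges, so $\widehat c$ lies in that same super-node, whose owner is $c$. Hence the owner of $LCA_{T'}(u,v,r)$ is $c=LCA_T(u,v,r)$, which is the theorem. Here I also invoke the equivalence of Definitions~\ref{ancDef} and~\ref{distDef2} (from the supplementary material) so that the ``$LCA_{T'}$'' appearing in the statement is literally the distance-minimizer to which Lemma~\ref{zeroEdge2} applies.

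The step I expect to be the main obstacle is establishing the super-node invariant under the full update machinery of Section~\ref{sec:ternarization} --- in particular that the randomized list contraction which repairs dummy chains after interior deletions keeps each super-node connected through identity edges and retains $o$ as a member; once that invariant, fact (i) (a consequence of Theorem~\ref{theorem:preserve-path-sums}), and the two LCA definitions are in hand, the rest is bookkeeping. A smaller point to verify is that no update ever creates a real $T'$-edge with both endpoints in one super-node, which would break the ``identity edges $=$ intra-super-node edges'' characterization relied on in the final step.
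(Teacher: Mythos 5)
Your proposal is correct and follows essentially the same route as the paper's proof: your ``super-nodes'' are the paper's rows of a real vertex with its dummies, and both arguments reweight $T'$ with identity edges at $0$, invoke Lemma~\ref{zeroEdge2} to place the unweighted $T'$-LCA and the weighted minimizer in the same row, and use the preservation of path sums (your fact (i)) to identify that row's owner with $LCA_T(u,v,r)$. Your explicit treatment of the minimizer set, the disconnected case, and the super-node connectivity invariant just spells out what the paper's row picture takes for granted.
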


\begin{proof} Place all of the vertices in rows, placing one real vertex and all of its associated dummy vertices in each row. There are two types of edges: horizontal (within a row) and vertical (between rows). Note that the vertical edges are real tree edges and the horizontal edges are dummy edges. If we condensed an entire row into a single vertex, we would recover $T$. Let $w$ be the unweighted vertical distance between two vertices (no horizontal edges).

Let $u,v,r$ be real vertices, and suppose we want to find $LCA_T(u,v,r)$. Let $c'$ be the LCA in $T'$ (could be a dummy node). Let $c$ minimize $W_{u,v,r}$. By Lemma \ref{zeroEdge2} we have that $w(c,c')=0$ ($c$ and $c'$ must be on the same row). Since $d=w$ on $T$, on $T$ the real node associated with $c$ minimizes $D_{u,v,r}$. Thus by Definition \ref{distDef2}, the real node for $c$ is the LCA in $T$. Since $c'$ and the real node for $c$ are on the same row they map to the same real vertex. Thus ternarization preserves the LCA. \end{proof}

\section{Implementation details}

\subsection{Data structures}

% relate to background section
% important to discuss that there are actually two data structures

We have two, key, connected data structures. The first is the actual RC tree itself which is capable of facilitating all queries
discussed in Section \ref{sec:batch-queries-on-rc-trees}. % Is it okay to cite daniel's thesis
Each vertex has a corresponding \textit{cluster} struct. Linked to this is another data structure
which records the history of the original tree across rounds of contractions. 

Here are the key fields of a cluster:

\begin{itemize}
	\item The vertex's unique index.
	\item A state.
	\item A fixed size array of cluter pointers to its children.
	\item A cluster pointer to its parent.
	\item A \textit{data} field which contains accumulated data. This is modified upon contraction.
	\item A pointer to the first node in a linked list representing the clustering history of this vertex.
	\item A pointer to a \textit{last alive} node in the same linked list. 
\end{itemize}

The last alive node is the last node in the linked list before contraction. Note that a cluster does not have pointers to neighbours in the original tree. The clustering history consists of a linked list of \textit{nodes} corresponding to each cluster. Here are its fields:

\begin{itemize}
	\item Next and Prev pointers.
	\item A pointer to the corresponding cluster.
	\item A state (whether uncontracted, raked, binary, unary, basic edge etc)
	\item A list of pointers corresponding to other nodes in the same level.
\end{itemize}

% keep the figure simple

Figure \ref{fig:Compression-example} shows an example of a basic tree contraction while Figure \ref{fig:contraction-structure} shows the corresponding data structures.
Note that all the edges in the original tree also are represented as objects of the cluster struct, acting like base edges. 
Also note that a cluster continues to be represented for one level more 
after it Unary/Nullary contracts 
while it can continue to be represented as a Binary Cluster in subsequent levels as long as its boundary vertices are live.

\begin{figure}[h!]
	\includegraphics[width=0.3\linewidth]{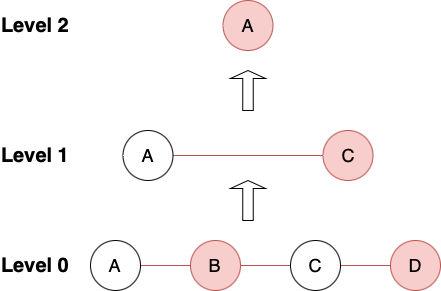}
	\caption{A simple example of contraction. In level 0, \textit{B} compresses while \textit{D} rakes into C.
		In level 1, \textit{C} rake compresses into \textit{A} using \textit{B} as the edge.
		Finally, \textit{A} nullary contracts.}
	\label{fig:Compression-example}
\end{figure}

\begin{figure}[h!]
	\includegraphics[width=0.55\linewidth]{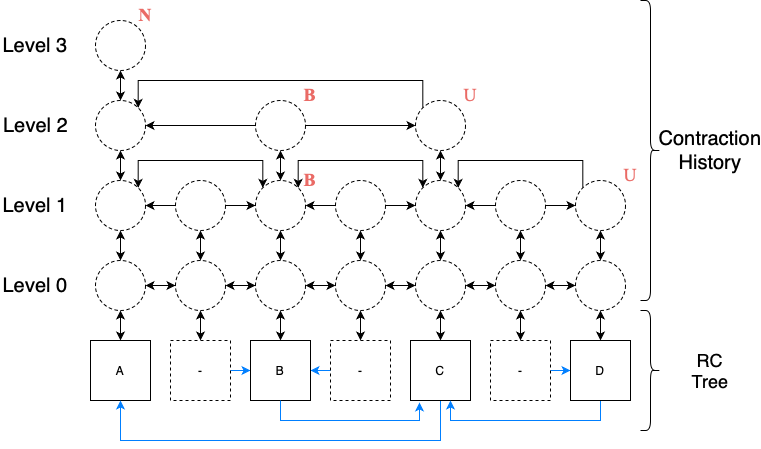}
	\caption{An illustration of our Dynamic RC tree data structure for the contraction shown in figure \ref{fig:Compression-example}. It is divided into two main parts, 
		the history that maintains the state of the tree across contraction rounds and 
		the RC tree itself. The arrows represent pointers. The squares are clusters while the circles correspond to vertices in the original tree as it contracted.
		Filled lines represent statically allocated objects while dotted lines represent dynamically allocated objects.
		The lettes in red (U/N/B) represent the state of a node as Unary, Nullary or Binary compressed. The blue lines indicate parents.}
	\label{fig:contraction-structure}
\end{figure}

\subsection{Static tree contraction} \label{sec:static-tree-contraction-practice}

Our static tree algorithm accepts a tree in the form of a list of weighted edges. This can only accept trees which have a maximum degree of 3. However, as we discuss in Section \ref{sec:ternarization}, we can represent arbitrary trees as degree 3 trees.
We use the algorithm described in \cite{anderson2024}.

Initially, we allocate a sequence of clusters, with each cluster corresponding to a vertex in the original tree. We then dynamically allocate base edge clusters, which can be done via a thread local type allocator.
Each edge is allocated once, by the thread operating on the vertex with the lower index.
Since the tree has a fixed degree, this takes $O(n)$ work and $O(\log n)$ span. % Is this okay to say right after discussing dynamic thread allocation?

We then create the zeroeth level of the clustering history, by allocating a node for each cluster. 
This is the first node in the corresponding cluster's linked list.
As before, the nodes for the base edge clusters are allocated by the thread operating on the lower index vertex.
However, in addition, the nodes at the zeroeth level must now point to each other -- creating the original, uncontracted tree. To this end, for each edge in the original tree, we first connect one side  of the the edge in parallel (corresponding to the vertex with the smaller index) and then the other in a separate pass.
This still takes $O(n)$ work and $O(\log n)$ span as the tree is constant degree. We mark all of these nodes as \textit{live}.

In order to contract live nodes, we first \textit{recreate} them by constructing a copy of the tree they represent in a new level.
Then, we find an MIS of \textit{eligible} nodes with a degree less than or equal to 2 as described in sec \ref{sec:coloring-mis}. 
We contract the MIS set of these nodes in place, changing the edgelist of itself and its neighbours. 
When a node contracts, we mark its state as not live and contracted (i.e. Unary, Binary or Nullary).
In addition, we set a pointer to its children in its corresponding children. 
We can also do \textit{accumulation} if necessary. For example, we can find the sum of all children if we later need to do a subtree query.
We then filter out dead nodes and recreate the next level with just the live nodes.
Note that a node that compressed via a rake (i.e. is a binary cluster) will continue to be recreated as it is treated no differently than a base edge.

Since the contraction results in a geometric decrease in the number of live nodes per level, this takes up $O(n)$ space, needs $O(n)$ work and needs $\log^2(n)$ span. 

\subsection{Dynamic tree contraction} \label{sec:dynamic-tree-contraction-practice}

Our dynamic tree contraction algorithm uses the contraction history created during the static RC tree generation to facilitate batch insertion of $k$ add or delete edges in
$O(k \log(1+\frac{n}{k}))$ work and $\log^2(n)$ span. Again, we use the algorithm described in \cite{anderson2020work}.

We first mark the nodes in the first level that correspond to the endpoints of these edges as ``affected". 
In order to facilitate maximal contraction, we also need to keep track of when a node's adjacency list is changed or when a node could have contracted in the original tree were it not for an affected node in a given layer as described in \cite{anderson2020work}. % word this better?
We mark these nodes as affected as well, by looking at nodes that meet these criteria in 2-hop neighbourhood of the endpoints.
In order to gather all the affected nodes, we use tiebreaks on each affected node -- each thread attempts to write its ID on each node's corresponding cluster.
In a subsequent pass, if the ID matches the thread's ID, then the nodes are gathered, otherwise they're filtered out.
Since each node can have a constant degree, the maximum number of two hop neighbours is also constant.
Thus, we can do the collection of affected nodes in $O(k)$ work and $O(\log k)$ span.

For delete edges, we delete the entire contraction history corresponding to the base edge that was removed. 
Note that this may result in dangling pointers in higher levels. 
However, we are careful not to dereference any such pointers.

We then use the contraction history first constructed during the static tree contraction. 
The first step is to \textit{uncontract} each affected node. 
Since we know the state of the node (whether it was live, unary, binary or nullary) in the next level, 
we can recreate the next level with just these nodes uncontracted. Nodes that remain live do not change.
However, contracted nodes become live again -- they reverse the changes caused to the adjacency lists of their neighbours in the next level.
This might affected the neighbour's adjacency list making them affected in the next level.

We can then contract an MIS set of eligible nodes as during the static tree contraction. 
In order to maintain the set of affected nodes, we gather affected nodes in a two hop neighbourhood being careful not to gather duplicates via tiebreaks as before.

We repeat this until we run out of affected nodes. 
Since each node can only contract once and can affect a constant number of nodes in each level, the maximum number of nodes that need to be modified is $O(k \log(1+\frac{n}{k}))$. 
In addition, accumulation and contraction takes constant work, the total work done is $O(k \log(1+\frac{n}{k}))$. The span is $O(\log^2(n))$ as there are $O(\log(n))$ levels with $O(\log(n))$ span each.
Note that this maintains the contraction history, allowing us to do subsequent adds and deletes to the tree.

\subsection{Path queries} \label{sec:path-queries-practice}

We implement path queries in $O(\log n)$ work and span. Since boundary vertices can always be obtained by looking at the neighbours of a cluster before it contracted (via the \textit{last alive} pointer), our RC tree structure contains all the information we need.
For a path query from a vertex $v$ to a vertex $w$, we start with the corresponding clusters for $v$ and $w$.
Then, for the cluster with the lower \textit{height} in the RC tree, we sum the value of any children that are edges (whether base edges or binary clusters).
We then ascend with this cluster.
We stop if a cluster has no parent (i.e. $v$ and $w$ were in different forests)
or we reach the same node in which case we sum the contributions from both the paths.
When ascending from a binary cluster, we must separately track values of both the boundary vertices as only one of those values may be correct.
Since each path will touch $O(\log n)$ nodes, the work and span are both $O(\log(n))$.
Note that this requires accumulating the values of each child edge while the tree is contracting.
%We will never need to track more than two values per path.

\subsection{Subtree queries} \label{sec:subtree-query-practice}

We also implement subtree queries in $O(\log n)$ work and span. 
Given a vertex (\textit{root}) w.r.t. a neighbour for orientation (\textit{direction giver}), we can find the weight of all edges pointing away from the direction giver from the root.
This query can be broken down into two cases:
\begin{enumerate}
	\item The direction given is a boundary vertex of the root. 
	\item The root is a boundary vertex of the direction giver.
\end{enumerate}

In case 1, we accumulate the value from all children except those which may have the direction giver as a boundary vertex.
This avoids the case in which the edge connecting the root and direction giver is a child of the root.
Then we can recursively find the subtree sum for the root's children's boundary vertices (w.r.t. the root) that are not shared with the direction giver.
This scenario falls into case 2. In case 2, we accumulate values from all children except that one that contains the root. 
Then, we recurse up the tree, finding the subtree sum of the root's boundary vertices w.r.t. the root.
Since the RC tree is constant degree, we touch at most $O(log(n))$ cluster and do a constant amount of work per node.
The work and span of this query is $O(\log n)$. Ternarization can add $O(1)$ work in expectation and $O(1)$ span (with high probability) overhead.

\subsection{Batched subtree queries}

We can utilize redundancy in subtree sums to reduce the work done when we are given a batch of roots and their corresponding direction givers, using the algorithm described in Section \ref{sec:batch-subtree-query}.
Given the roots, we first ascend the RC tree, marking each cluster.
To prevent a cluster from being marked multiple times, we maintain an atomic counter per cluster that gets incremented whenever a thread tries to ascend into it.
If the counter is non-zero when a thread shows up, it terminates execution.

After all the nodes have been marked, we can gather the RC tree roots by filtering out any clusters that don't have a parent.
Afterwards, we can calculate partial sums in a top-down manner from the RC tree roots in parallel. 
For each cluster, we write down the results of a subtree-query with respect to all of its children. 
We then recurse on each of its marked children. 
Since their parent necessarily wrote down the result w.r.t. the current node, we will not need to ascend the tree more than one level. Lastly, in order to get the final results of the batched subtree query, we can do a constant time query using the stored values from every root's parent.
Ternarization can add $O(k)$ work in expectation and $O(\log k)$ span (with high probability) overhead.

\subsection{Lowest Common Ancestor} % move here?

As described in the supplementary material, LCA requires a static level ancestors data structure. Theoretically Berkman and Vishkin's data structure is optimal \cite{BV94}, with $O(k + k \log(1+n/k))$ work for preprocessing and constant query time in our setting. However, their data structure has a $2^{2^{28}}$ constant factor. Thus, in our implementation of LCA we use brute force table lookup, which takes $O(k \log n)$ work and constant query time. Thus the work bound for our LCA implementation is $O(k \log n)$. Note that this step is the bottleneck: it is the only step in the implementation requiring more than $O(k + k\log(1+n/k))$ work. 

We handle bottom-up and top-down queries in rounds by tree level for LCA. Within each round, we need a filter operation to condense the nodes that we are to operate on in the next round. Additionally, we need a hash map for translating between the index in the original RC tree and the index in the $O(k + k\log(1+n/k))$-sized subtree we are operating on. Thus, our implementation of LCA has span $O(\log^2 n)$. Ternarization adds $O(k)$ work in expectation and $O(\log k)$ span (with high probability) overhead.

\subsection{Incremental MSF} % add references to daniel's work?

In the incremental MSF problem, we must maintain an MSF subject to batches of new, weighted edges coming in. Since finding the MSF w.r.t. to the new edges and all the edges in the original tree would not be work efficient, we need to create a \textit{compressed path tree}.
Given the endpoints of the new edges, the compressed path tree maintains the minimum number of nodes and edges to describe the weight between each of the endpoints. 
This is illustrated in Figure \ref{fig:compressed-tree-gen}.

\begin{figure}[h!]
	\includegraphics[width=0.55\linewidth]{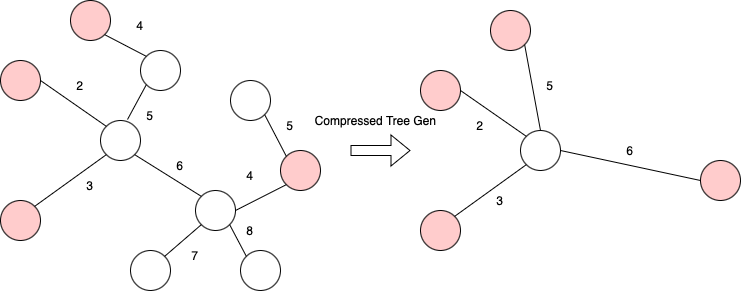}
	\caption{The compressed tree representation of a tree w.r.t. some marked nodes (red). The max between any pair of nodes is maintained in the compressed tree.}
	\label{fig:compressed-tree-gen}
\end{figure}

We construct the compressed forest gen using the algorithm in \cite{anderson2020work}. We maintain an extra set of weighted edges per cluster which correspond to represent the compressed forest. Given the set of new add edges, we first mark the endpoints. 
Then, we mark their parents as well, incrementing a \textit{per cluster} counter. 
If a thread encounters a non-zero counter, it terminates execution.

Afterwards, we gather the roots of the RC tree which contain the endpoints. We recurse to their children, creating compressed edges in a cluster local storage.
To gather the relevant vertices, we first do a top down sweep to find the number of contributions of each subtree in the RC tree.
We use these to generate offsets in a list of vertex indices where each marked cluster can write their corresponding vertex index. 
To gather the compressed edges, all of the relevant write their edges into a list if their index is the smaller of the endpoints of the compressed edge.

Once we have gathered all the compressed edges, we can append these to the list of new edges and run an MSF algorithm.
We chose to use Kruskal's because the overhead of Kruskal (despite being $O(n \log n)$ work) is minimal compared to the sparse tree generation and the dynamic insertion that follows (as shown in our evaluation).
The MSF algorithm outputs which of the original edges in the compressed tree we have to batch delete from our RC-tree and which of the add edges we have to batch insert into our RC-tree. 

Aside from Kruskal's MSF algorithm, this takes $O(k \log(1+\frac{n}{k}))$ work and $\log(n)$ span where $k$ is the number of new edges. This is because there are, at most, $O(k \log(1+\frac{n}{k}))$ marked clusters and we do a constant amount of work per cluster.
The span comes from the filters and scans.
Note that for each cluster, we need to maintain a pointer to the heaviest edge when doing tree contraction.

\subsection{Ternarization} \label{sec:ternarization-practical}

Our implementation of the ternerization framework is a blackbox, accepting a list of add edges or delete edges and returning a list of add and delete edges and changing its internal state accordingly.
We use a parlay's parallel hash table implementation under the hood for fast look-up.

\subsection{Graph Coloring and Maximal Independent Set} \label{sec:coloring-mis}
As part of deterministic tree contraction, we need to generate a Maximal Independent Set of \textit{eligible} nodes. 
A node is eligible if it has degree less than three. 
Thus, we are dealing with \textit{chains}. 
We develop a simple chain coloring scheme, we assume the tree has an initial, valid coloring (e.g. with the index).
For eligible nodes, we then assign a colour equal to the first bit that is different in their initial coloring compared to the eligible neighbour with the maximum initial color.
The exceptions are nodes which are a local maximum or a local minimum which we can assign a special colour each.
Thus we have a $O(\log n) + 2=O(\log n)$ coloring. 
This algorithm is much faster than the algorithm described in \cite{anderson2024} which has constants as high as $2^{18}$ per node for 64 bit numbers.
Using a counting sort, we can then deterministically find the MIS by iterating over each of the $O(\log n)$ colors sequentially but over each color in parallel. The only overhead is a single element in each cluster which is used to store the color.
In addition, via a flag, our ``MIS'' algorithm can return a randomly picked independent set instead. This is done by randomly colouring each node and picking only the local maximums.

\section{Methodology and Evaluation}

Our implementation is in C++ and ParlayLib \cite{blelloch2020parlaylib}.
Our experiments ran on a 48-core Amazon Web Service c7i-metal instance with
an Intel(R) Xeon(R) Platinum 8488C processor (48 cores and 3.2GHz),
and 192GB memory. 
The code was compiled with clang++14 with \texttt{-O3} and \texttt{-flto}.

\subsection{Randomized Tree Generation for Streaming}

RC trees are equipped to handles trees of arbitrary shape. As such, we designed a \textit{streaming} tree generation scheme, which offers insert and delete queries without creating a cycle.
This scheme can be used to generate add edges and delete edges for evaluation with a few different parameters.
To begin with, we connect chunks of contigious vertices as linked lists.
The length of these linked lists can be decided by a \textit{mean} parameter and a \textit{distribution}. 
The distributions include exponential, geometric, uniform and constant. 

The leftmost edge of the linked list may be connected to the linked list immediately to its left (with probability $ln$) or a random linked list anywhere on its left (with probability $1-ln$).
Varying this parameter allows us to change how deep the tree is. If $ln$ is chosen close to $1$, the tree will be very deep. An example of this is shown in Figure \ref{fig:ternarization}.

\begin{figure}
	\centering
	\includegraphics[width=0.6\textwidth]{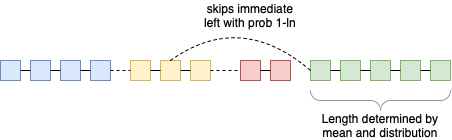}
	\caption{Our treegen scheme creates linked lists of varying lengths and connects them to vary how deep the tree is. Only the dotted edges are varied.
    }
    \vspace{-0.2in}
	\label{fig:ternarization}
\end{figure}

To generate delete edges, just the leftmost edge in each linked list is varied, allowing some structure of distinct forests to be maintained.
Lastly, every edge returned from the treegen has its vertices shuffled via a bijective map that is constructed with a parallel shuffle of $[1,n)$.

\textbf{Query generation:} Our query generation scheme is simple. For path queries, we pick $k$ random points. For subtree and batched subtree queries, we pick $k$ random points and their neighbours. For LCA, we pick $k$ random triplets.

\subsection{Results} 

The results for our static tree-gen are shown in Figure \ref{fig:static-plots}. 
The graph generation time is linear with respect to the graph size for different configurations.
The speedup is sublinear, achieving 30x speedup with 48 threads.
Furthermore, our MIS scheme performs just as well as an randomized IS scheme. 
Additionally, the depth of the tree does not affect the generation time.

The results for our batch insertion are shown in Figure \ref{fig:dynamic-plots}.
These are roughly twice as slow as constructing a static RC tree.
This is due to the overhead to deleting existing edges as well as uncontracting the next level for affected nodes. 
Note that the scenario with a mean of $1.1$ is faster as deletion of edges results in many isolated forests, bringing its speed closer to static tree generation.
The speedup remains the same as in the static case.

\begin{figure*}[t!]
    \centering
        \includegraphics[width=0.42\textwidth]{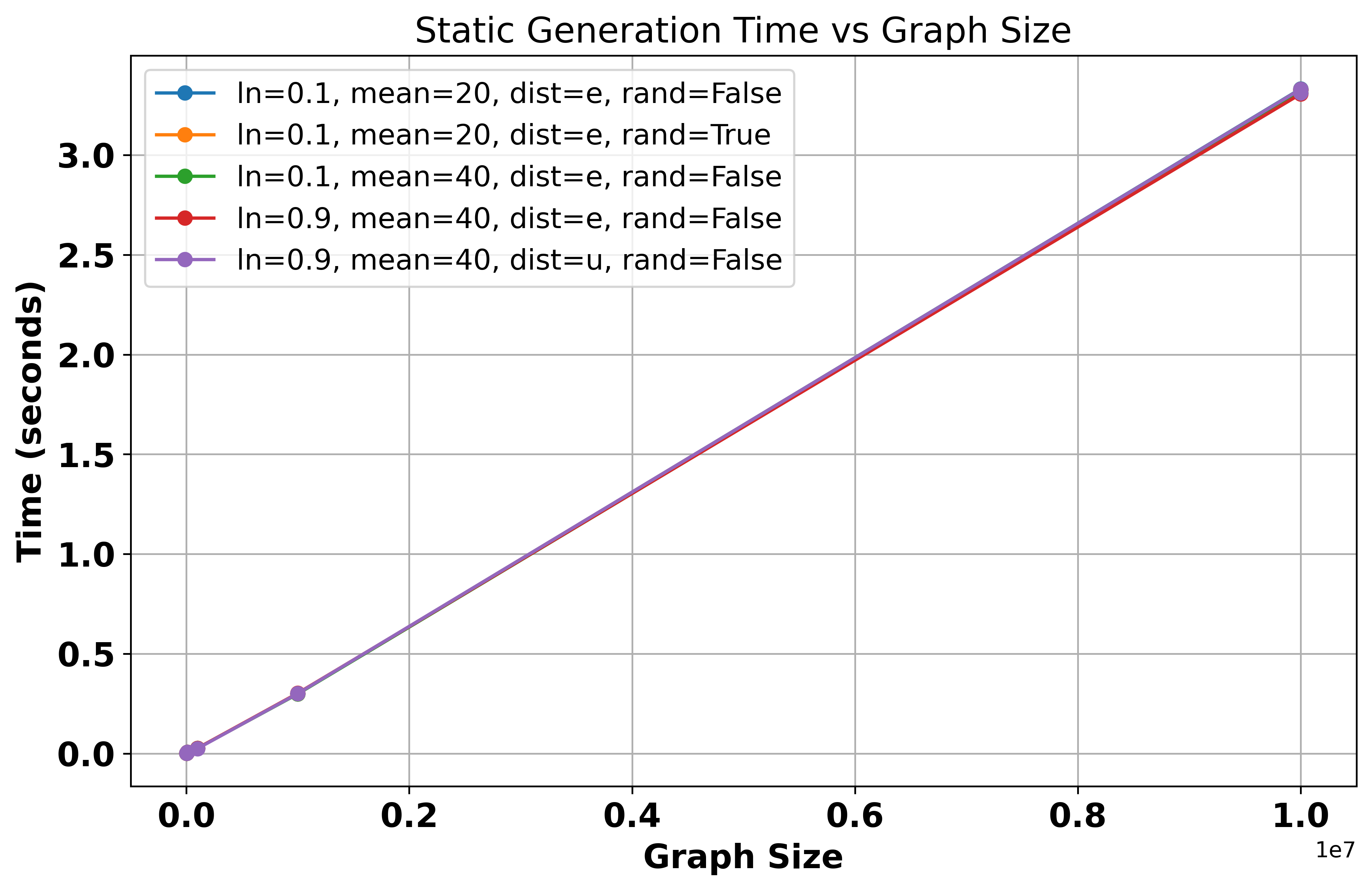}
        \includegraphics[width=0.42\textwidth]{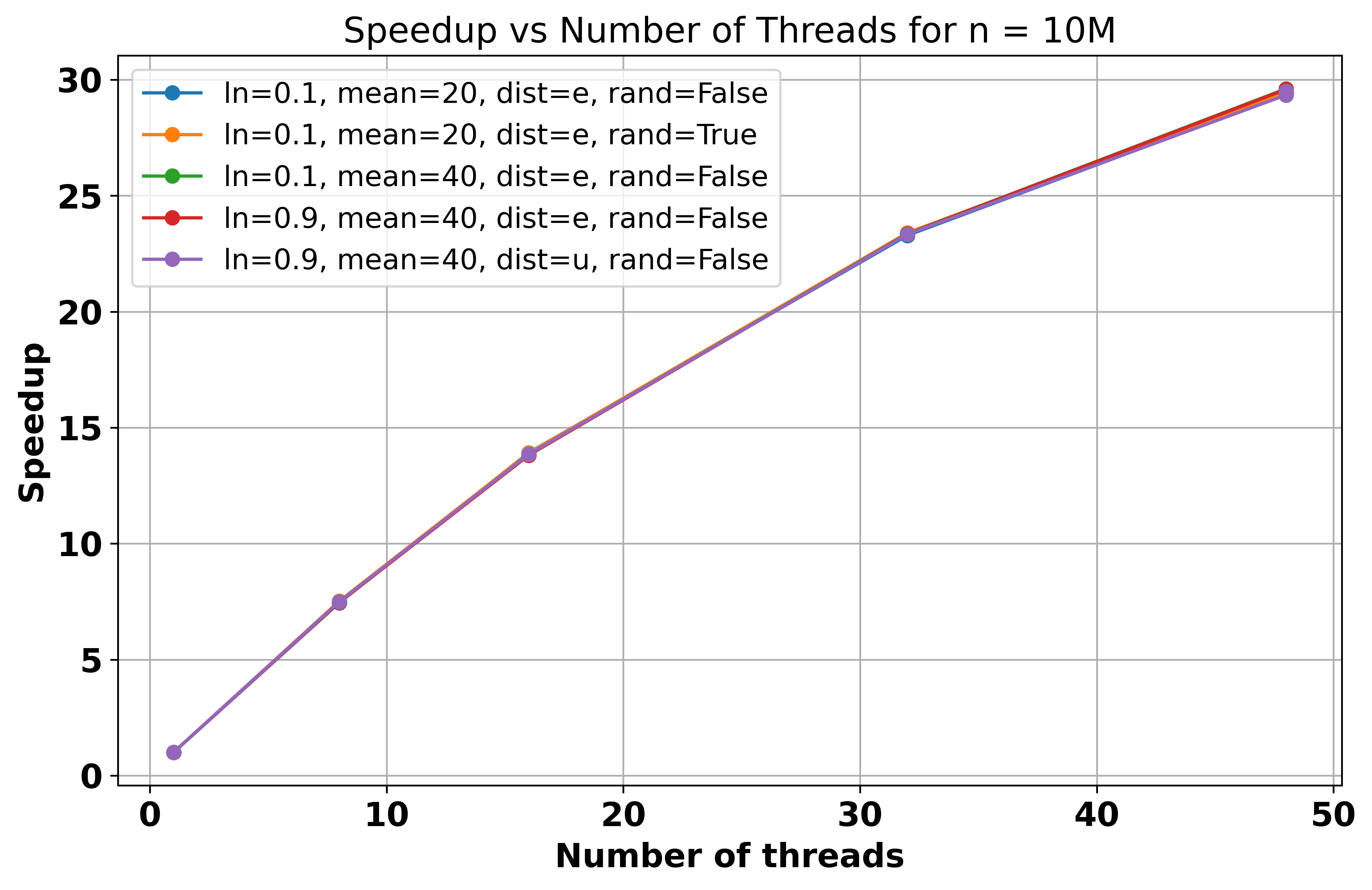}
        \caption{}
        \vspace{-0.1in}
        \label{fig:static-plots}
\end{figure*}

\begin{figure*}[t!]
    \centering
        \includegraphics[width=0.42\textwidth]{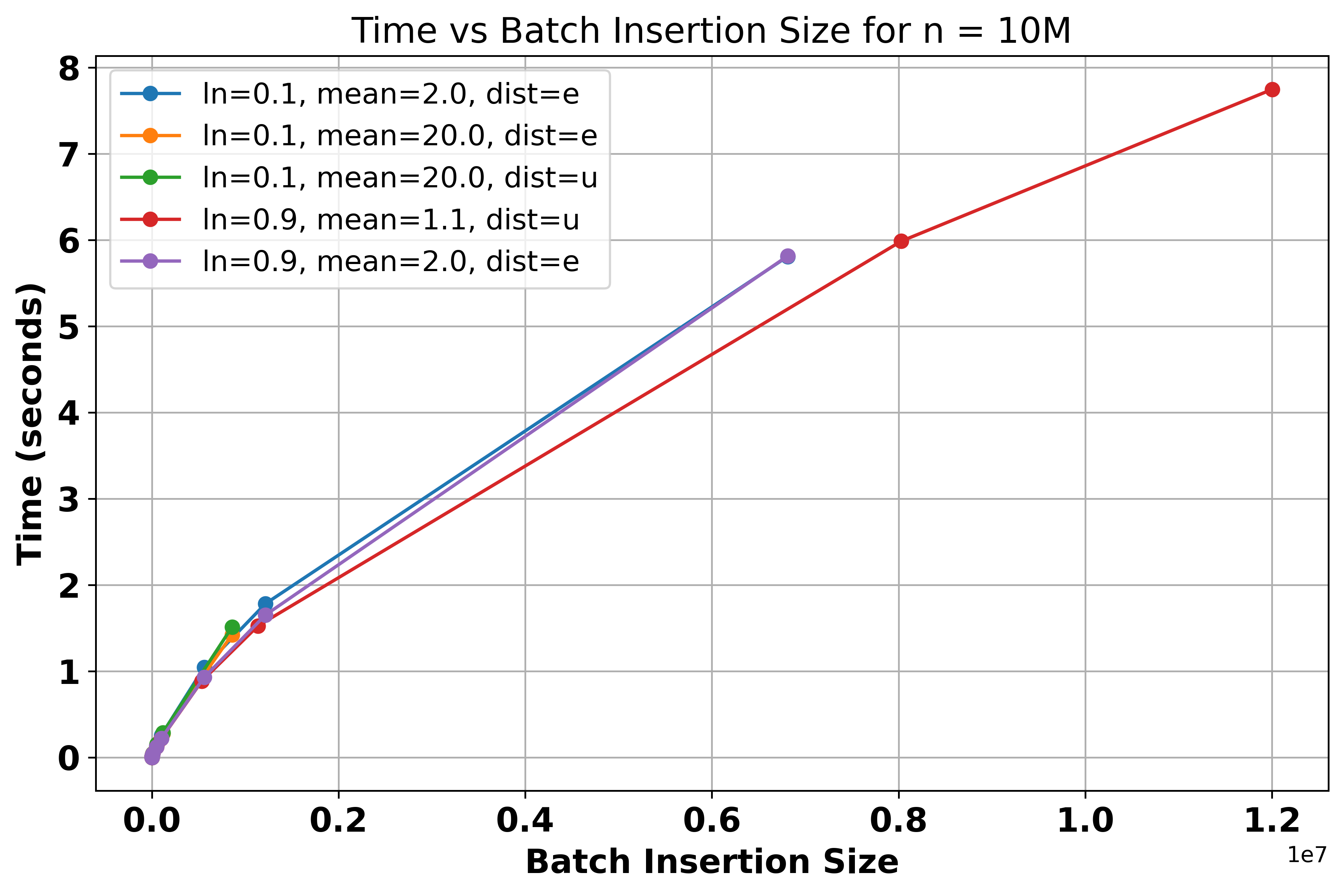}
        \includegraphics[width=0.42\textwidth]{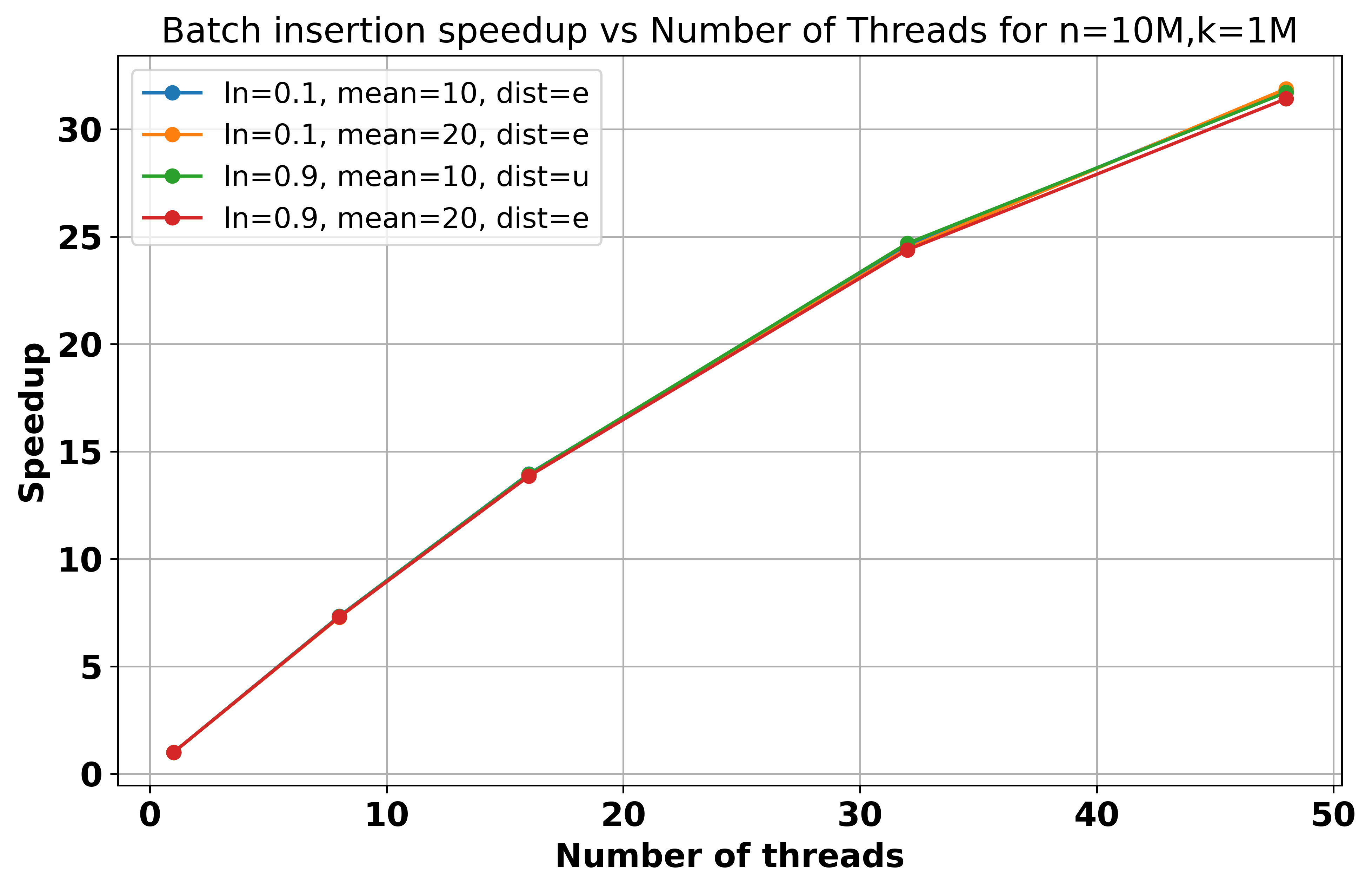}
        \caption{}
        \vspace{-0.1in}
        \label{fig:dynamic-plots}
\end{figure*}

The time for queries is shown in Figure \ref{fig:queries-time}. 
The LCA query is a an order of magnitude slower than the path and subtree queries, because each LCA batch with arbitrary roots requires 3 calls to LCA with fixed roots, which itself has multiple top-down and bottom-up sweeps.
The speedup (Figure \ref{fig:queries-speedup}) is also lower for LCA but much lower for batched subtree queries.
This is due to the atomics in batched subtree queries, which don't scale well with contention. 

\begin{figure*}[h]
    \centering
        \includegraphics[width=0.42\textwidth]{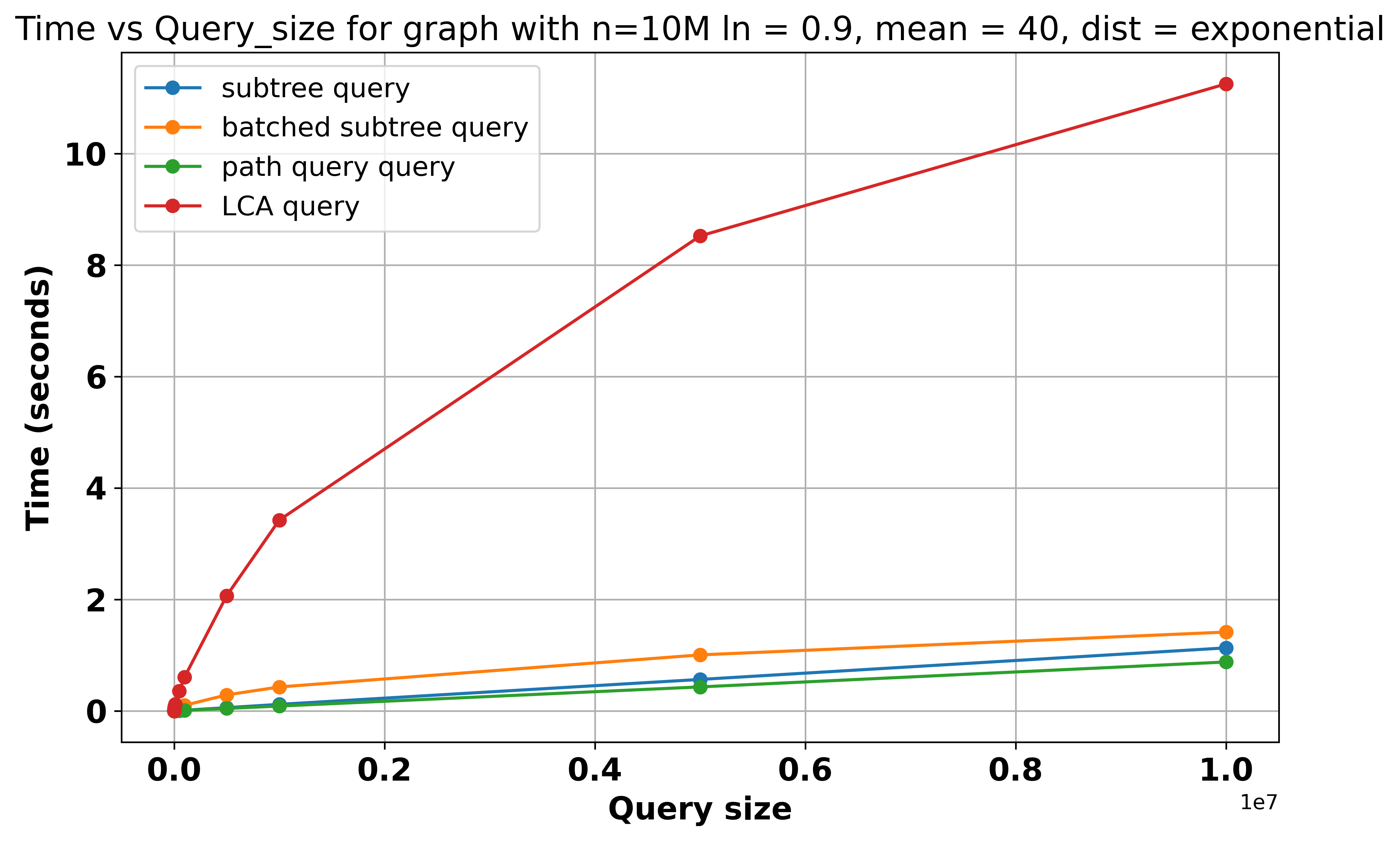}
        \includegraphics[width=0.42\textwidth]{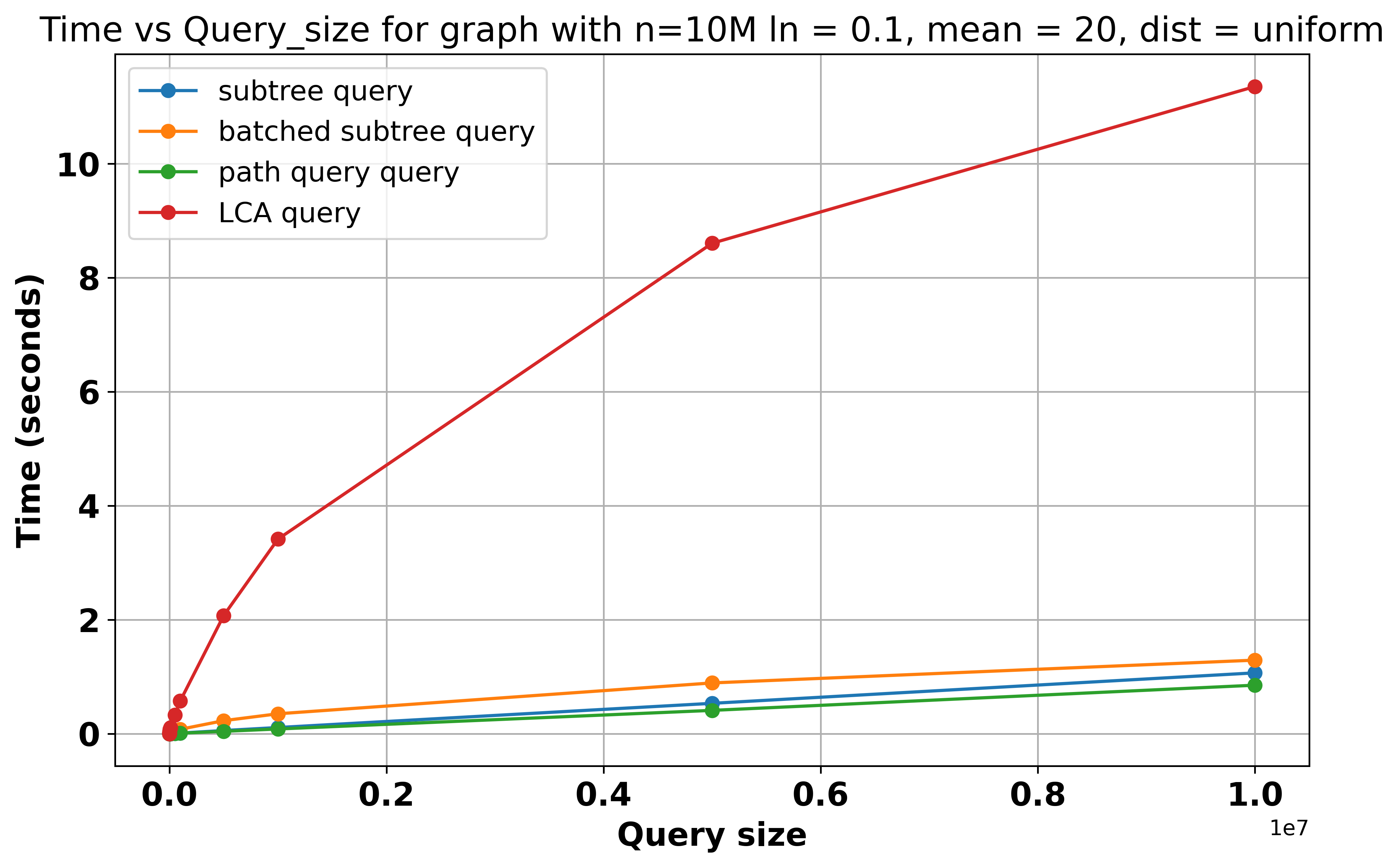}
        \caption{}
        \vspace{-0.1in}
        \label{fig:queries-time}
\end{figure*}

\begin{figure*}[h]
    \centering
        \includegraphics[width=0.42\textwidth]{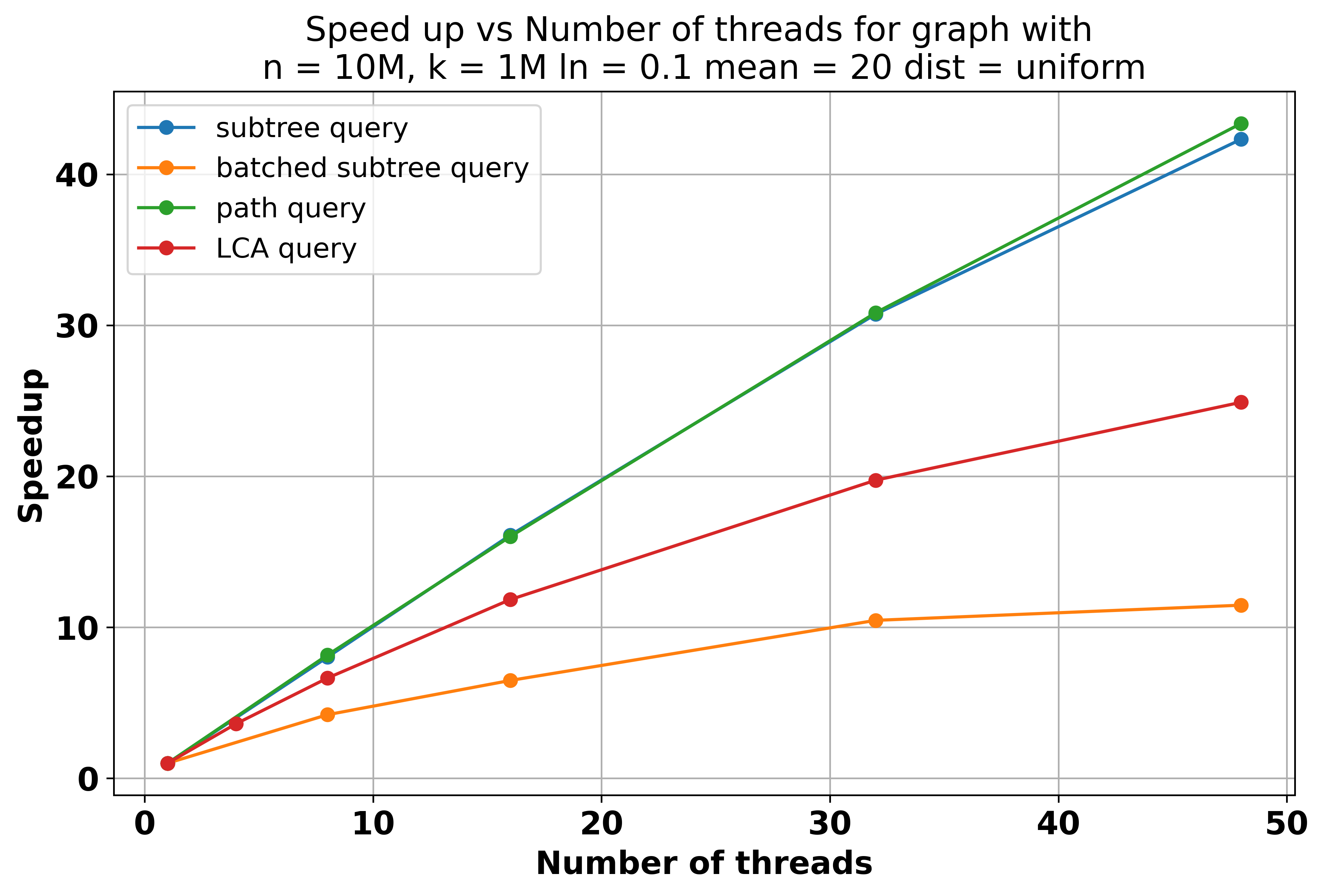}
        \includegraphics[width=0.42\textwidth]{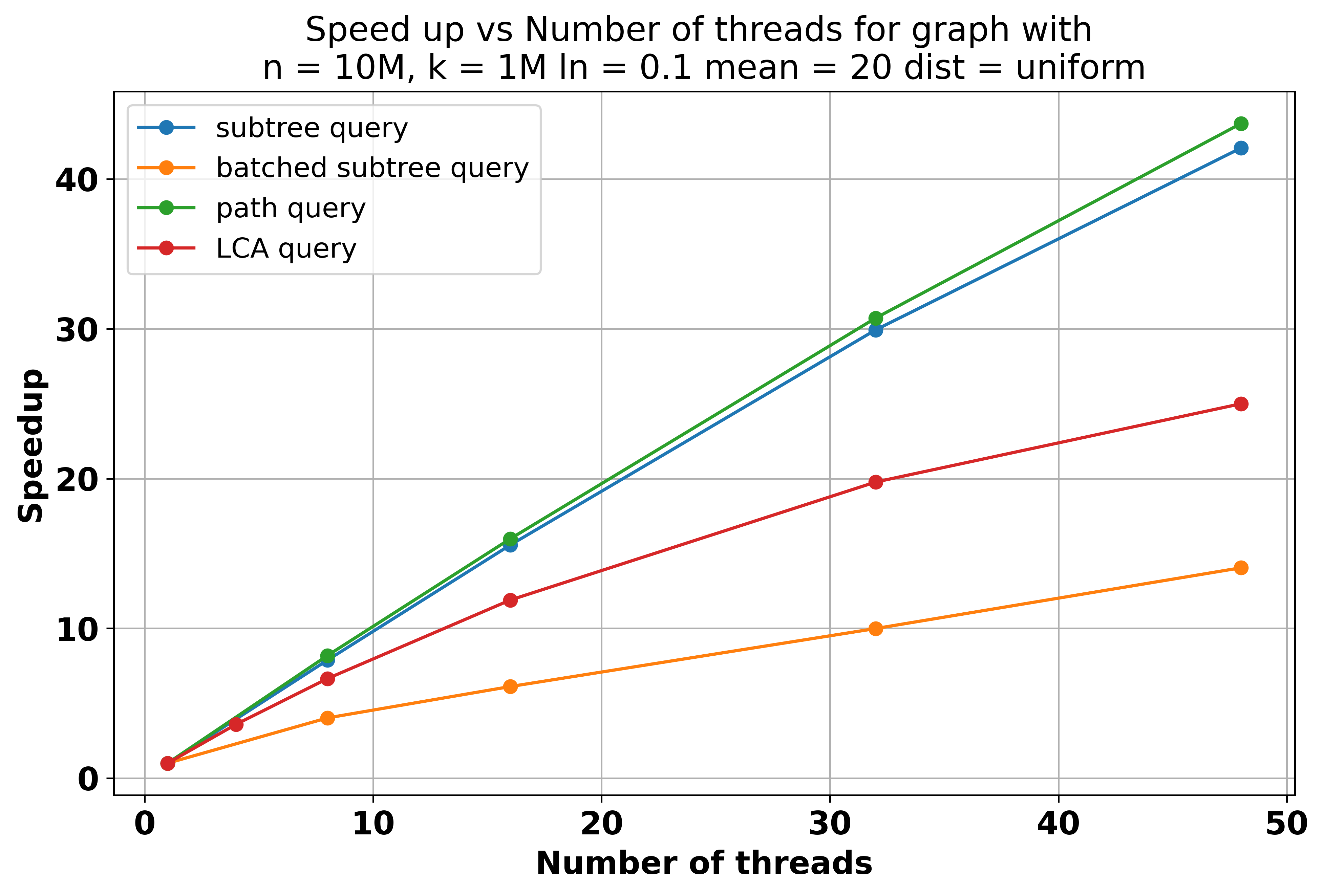}
        \caption{}
        \vspace{-0.1in}
        \label{fig:queries-speedup}
\end{figure*}

The time for the incremental MSF is shown in Figure \ref{fig:incmst}. 
We have also shown the time taken for the MSF algorithm itself, which takes minimal time.
The compressed sparse tree generation also takes about as much time as batch insertion since they have similar asymptotic work bounds.
The speed-up is higher than that of batched queries but lower than that of dynamic batch insertion -- this is because compressed tree generation uses atomics but dynamic batch insertion does not.

\begin{figure*}[t!]
    \centering
        \includegraphics[width=0.42\textwidth]{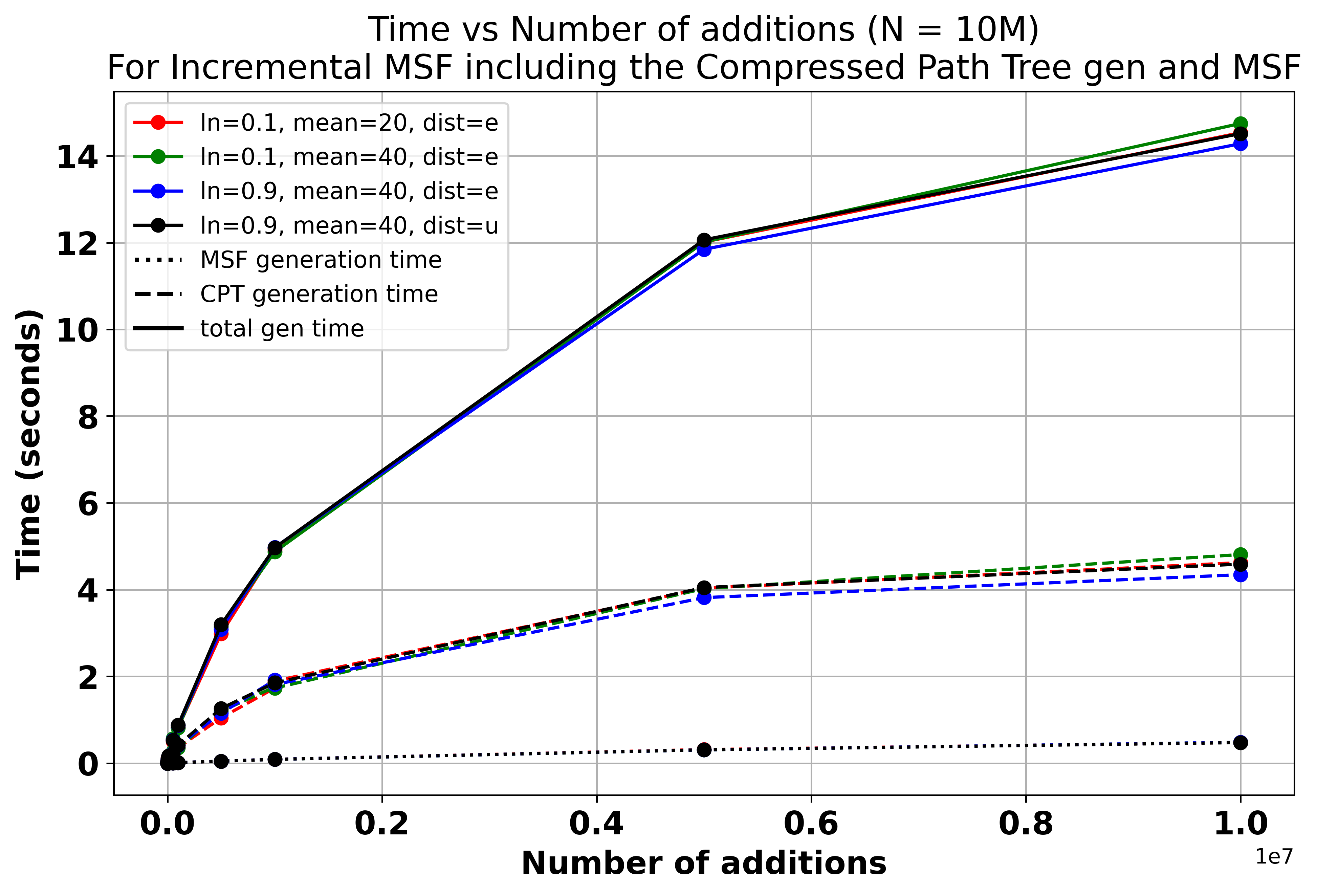}
        \includegraphics[width=0.42\textwidth]{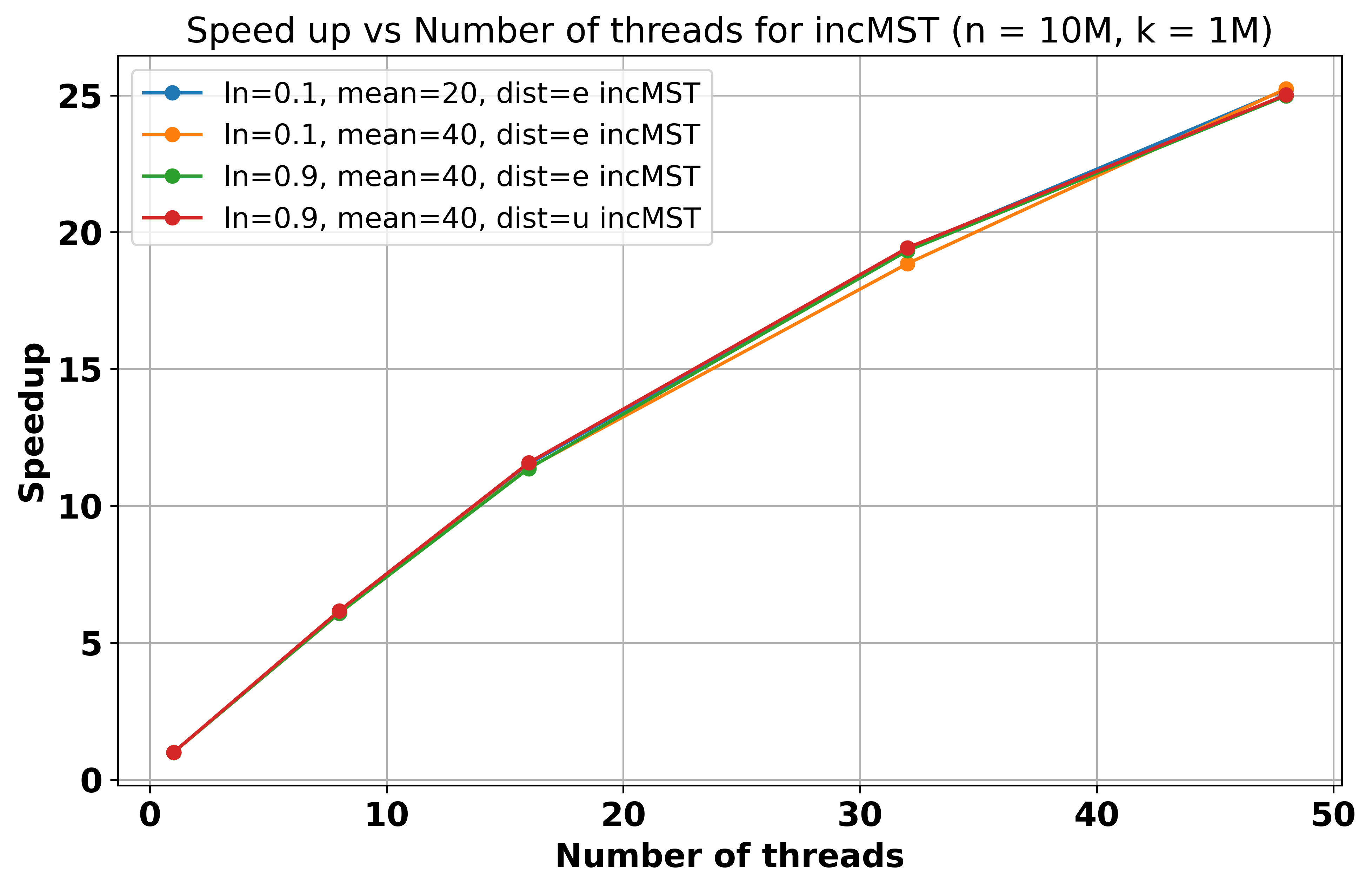}
        \caption{}
        \vspace{-0.1in}
        \label{fig:incmst}
\end{figure*}

In order to investigate subtree queries vs batched subtree queries, we ran an experiment with a graph size of 100 million.
These results are shown in Figure \ref{fig:subtree-big} -- as shown, the batched subtree query outperforms parallel subtree queries for larger values of $n$ and $k$.

\begin{figure*}[t!]
    \centering
        \includegraphics[width=0.42\textwidth]{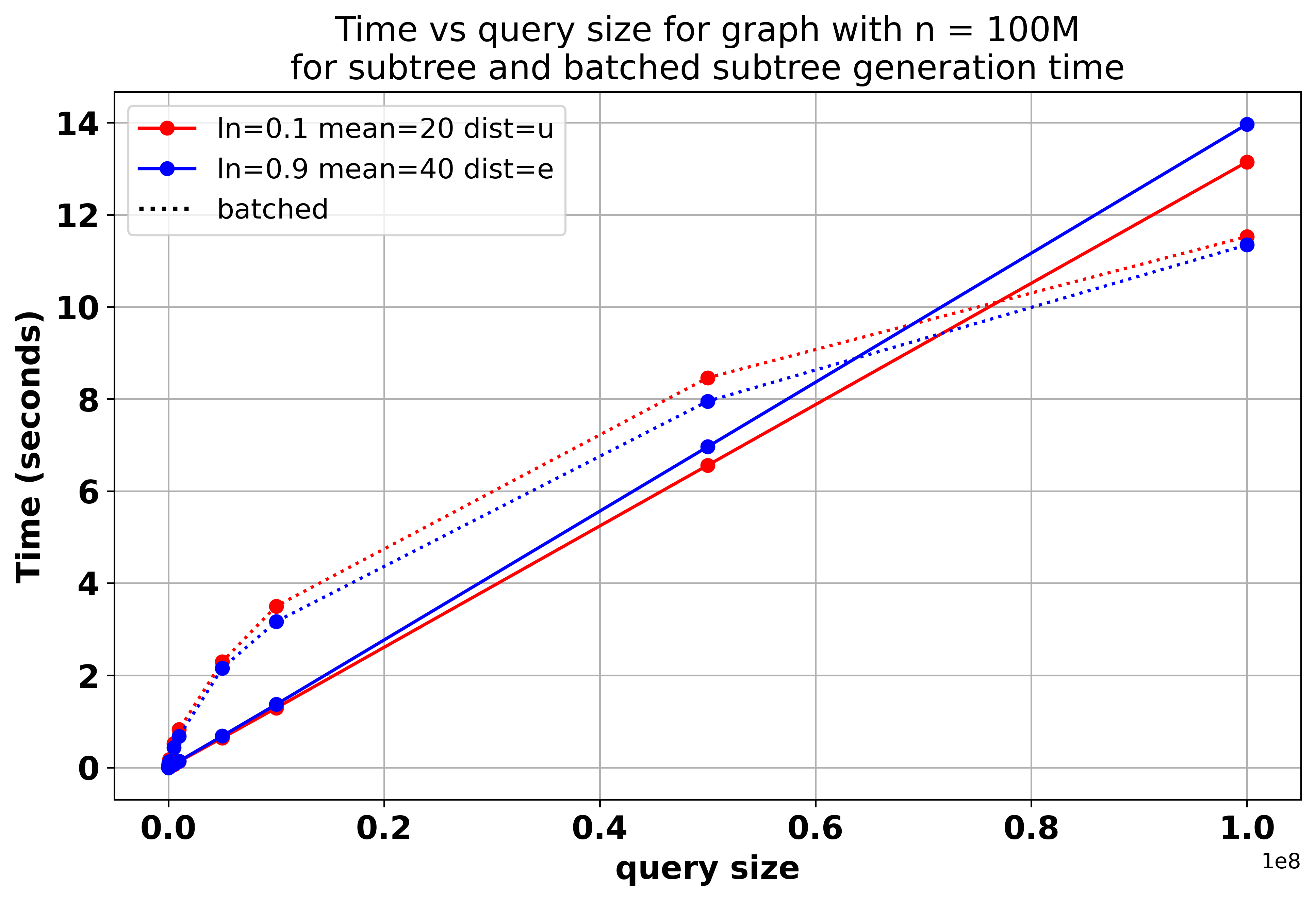}
        \caption{}
        \vspace{-0.1in}
        \label{fig:subtree-big}
\end{figure*}

% RUN EULER TOUR TREES CODE

% https://github.com/tomtseng/parallel-euler-tour-tree

\section{Conclusion}

We describe various improvements to prior work on batch dynamic RC trees including supporting batched queries in $O(k + k\log(1+\frac{n}{k}))$ work and polylog span.
We further establish a general ternarization scheme which can translate input edges into edges in a ternarized tree. 
Our ternarization scheme is work efficient (in expectation), has polylog span (with high probability) and admits batched queries. We implement this dynamic RC tree, including batch inserts and deletes, path queries, subtree queries, batched subtree queries, batched LCA queries and incremental MSF. To benchmark our implementation, we develop a random tree generator that creates trees with varying characteristics. We benchmark our implementation using this tree generator, and show that our RC tree scales well on input trees with varying characteristics. 

A next step would be to further develop the RC-tree implementation. This could involve supporting additional queries, like batch path queries and nearest-marked vertex queries. Recently, RC-trees were used to develop a min-cut algorithm with $O(m \log^2 n)$ work and polylog depth \cite{anderson2023cuts}. The min-cut result required RC-tree mixed batch updates (batch contains both tree updates and queries). One could supplement our RC-tree to support this.

%POST add acknowledgements 

\bibliographystyle{ACM-Reference-Format}
\balance
\bibliography{ref}

%%% -*-BibTeX-*-
%%% Do NOT edit. File created by BibTeX with style
%%% ACM-Reference-Format-Journals [18-Jan-2012].

\begin{thebibliography}{48}

%%% ====================================================================
%%% NOTE TO THE USER: you can override these defaults by providing
%%% customized versions of any of these macros before the \bibliography
%%% command.  Each of them MUST provide its own final punctuation,
%%% except for \shownote{} and \showURL{}.  The latter two
%%% do not use final punctuation, in order to avoid confusing it with
%%% the Web address.
%%%
%%% To suppress output of a particular field, define its macro to expand
%%% to an empty string, or better, \unskip, like this:
%%%
%%% \newcommand{\showURL}[1]{\unskip}   % LaTeX syntax
%%%
%%% \def \showURL #1{\unskip}           % plain TeX syntax
%%%
%%% ====================================================================

\ifx \showCODEN    \undefined \def \showCODEN     #1{\unskip}     \fi
\ifx \showISBNx    \undefined \def \showISBNx     #1{\unskip}     \fi
\ifx \showISBNxiii \undefined \def \showISBNxiii  #1{\unskip}     \fi
\ifx \showISSN     \undefined \def \showISSN      #1{\unskip}     \fi
\ifx \showLCCN     \undefined \def \showLCCN      #1{\unskip}     \fi
\ifx \shownote     \undefined \def \shownote      #1{#1}          \fi
\ifx \showarticletitle \undefined \def \showarticletitle #1{#1}   \fi
\ifx \showURL      \undefined \def \showURL       {\relax}        \fi
% The following commands are used for tagged output and should be
% invisible to TeX
\providecommand\bibfield[2]{#2}
\providecommand\bibinfo[2]{#2}
\providecommand\natexlab[1]{#1}
\providecommand\showeprint[2][]{arXiv:#2}

\bibitem[Acar et~al\mbox{.}(2019)]%
        {acar2019parallel}
\bibfield{author}{\bibinfo{person}{Umut~A Acar}, \bibinfo{person}{Daniel
  Anderson}, \bibinfo{person}{Guy~E Blelloch}, {and} \bibinfo{person}{Laxman
  Dhulipala}.} \bibinfo{year}{2019}\natexlab{}.
\newblock \showarticletitle{Parallel batch-dynamic graph connectivity}. In
  \bibinfo{booktitle}{\emph{{ACM} Symposium on Parallelism in Algorithms and
  Architectures (SPAA)}}.
\newblock


\bibitem[Acar et~al\mbox{.}(2020)]%
        {acar2020batch}
\bibfield{author}{\bibinfo{person}{Umut~A Acar}, \bibinfo{person}{Daniel
  Anderson}, \bibinfo{person}{Guy~E. Blelloch}, \bibinfo{person}{Laxman
  Dhulipala}, {and} \bibinfo{person}{Sam Westrick}.}
  \bibinfo{year}{2020}\natexlab{}.
\newblock \showarticletitle{Parallel Batch-dynamic Trees via Change
  Propagation}. In \bibinfo{booktitle}{\emph{European Symposium on Algorithms
  (ESA)}}.
\newblock


\bibitem[Acar et~al\mbox{.}(2004)]%
        {acar2004dynamizing}
\bibfield{author}{\bibinfo{person}{Umut~A Acar}, \bibinfo{person}{Guy~E
  Blelloch}, \bibinfo{person}{Robert Harper}, \bibinfo{person}{Jorge~L Vittes},
  {and} \bibinfo{person}{Shan Leung~Maverick Woo}.}
  \bibinfo{year}{2004}\natexlab{}.
\newblock \showarticletitle{Dynamizing static algorithms, with applications to
  dynamic trees and history independence}. In
  \bibinfo{booktitle}{\emph{{ACM-SIAM} Symposium on Discrete Algorithms
  (SODA)}}.
\newblock


\bibitem[Acar et~al\mbox{.}(2005)]%
        {acar2005experimental}
\bibfield{author}{\bibinfo{person}{Umut~A Acar}, \bibinfo{person}{Guy~E
  Blelloch}, {and} \bibinfo{person}{Jorge~L Vittes}.}
  \bibinfo{year}{2005}\natexlab{}.
\newblock \showarticletitle{An experimental analysis of change propagation in
  dynamic trees}. In \bibinfo{booktitle}{\emph{Algorithm Engineering and
  Experiments (ALENEX)}}.
\newblock


\bibitem[Alstrup et~al\mbox{.}(2005)]%
        {alstrup2005maintaining}
\bibfield{author}{\bibinfo{person}{Stephen Alstrup}, \bibinfo{person}{Jacob
  Holm}, \bibinfo{person}{Kristian~De Lichtenberg}, {and}
  \bibinfo{person}{Mikkel Thorup}.} \bibinfo{year}{2005}\natexlab{}.
\newblock \showarticletitle{Maintaining information in fully dynamic trees with
  top trees}.
\newblock \bibinfo{journal}{\emph{{ACM} Trans.\ Algorithms (TALG)}}
  \bibinfo{volume}{1}, \bibinfo{number}{2} (\bibinfo{year}{2005}),
  \bibinfo{pages}{243--264}.
\newblock


\bibitem[Anderson and Blelloch(2023)]%
        {anderson2023cuts}
\bibfield{author}{\bibinfo{person}{Daniel Anderson} {and}
  \bibinfo{person}{Guy~E. Blelloch}.} \bibinfo{year}{2023}\natexlab{}.
\newblock \showarticletitle{Parallel Minimum Cuts in ${O}(m\log^2 n)$ Work and
  Low Depth}.
\newblock \bibinfo{journal}{\emph{{ACM} Transactions on Parallel Computing
  (TOPC)}} \bibinfo{volume}{10}, \bibinfo{number}{4}, Article
  \bibinfo{articleno}{18} (\bibinfo{date}{Dec.} \bibinfo{year}{2023}),
  \bibinfo{numpages}{28}~pages.
\newblock
\href{https://doi.org/10.1145/3565557}{doi:\nolinkurl{10.1145/3565557}}


\bibitem[Anderson and Blelloch(2024)]%
        {anderson2024}
\bibfield{author}{\bibinfo{person}{Daniel Anderson} {and}
  \bibinfo{person}{Guy~E. Blelloch}.} \bibinfo{year}{2024}\natexlab{}.
\newblock \showarticletitle{Deterministic and Low-Span Work-Efficient Parallel
  Batch-Dynamic Trees}. In \bibinfo{booktitle}{\emph{{ACM} Symposium on
  Parallelism in Algorithms and Architectures (SPAA)}}.
  \bibinfo{publisher}{Association for Computing Machinery},
  \bibinfo{address}{New York, NY, USA}.
\newblock
\href{https://doi.org/10.1145/3626183.3659976}{doi:\nolinkurl{10.1145/3626183.3659976}}


\bibitem[Anderson et~al\mbox{.}(2021)]%
        {anderson2021efficient}
\bibfield{author}{\bibinfo{person}{Daniel Anderson}, \bibinfo{person}{Guy~E
  Blelloch}, \bibinfo{person}{Anubhav Baweja}, {and} \bibinfo{person}{Umut~A.
  Acar}.} \bibinfo{year}{2021}\natexlab{}.
\newblock \showarticletitle{Efficient Parallel Self-Adjusting Computation}. In
  \bibinfo{booktitle}{\emph{{ACM} Symposium on Parallelism in Algorithms and
  Architectures (SPAA)}}.
\newblock


\bibitem[Anderson et~al\mbox{.}(2020)]%
        {anderson2020work}
\bibfield{author}{\bibinfo{person}{Daniel Anderson}, \bibinfo{person}{Guy~E
  Blelloch}, {and} \bibinfo{person}{Kanat Tangwongsan}.}
  \bibinfo{year}{2020}\natexlab{}.
\newblock \showarticletitle{Work-Efficient Batch-Incremental Minimum Spanning
  Trees with Applications to the Sliding-Window Model}. In
  \bibinfo{booktitle}{\emph{{ACM} Symposium on Parallelism in Algorithms and
  Architectures (SPAA)}}.
\newblock


\bibitem[Berkman and Vishkin(1994)]%
        {BV94}
\bibfield{author}{\bibinfo{person}{Omer Berkman} {and} \bibinfo{person}{Uzi
  Vishkin}.} \bibinfo{year}{1994}\natexlab{}.
\newblock \showarticletitle{Finding level-ancestors in trees}.
\newblock \bibinfo{journal}{\emph{J. Comput. System Sci.}}
  \bibinfo{volume}{48}, \bibinfo{number}{2} (\bibinfo{year}{1994}),
  \bibinfo{pages}{214--230}.
\newblock
\showISSN{0022-0000}
\href{https://doi.org/10.1016/S0022-0000(05)80002-9}{doi:\nolinkurl{10.1016/S0022-0000(05)80002-9}}


\bibitem[Blelloch(1993)]%
        {blelloch1990pre}
\bibfield{author}{\bibinfo{person}{Guy~E. Blelloch}.}
  \bibinfo{year}{1993}\natexlab{}.
\newblock \showarticletitle{Prefix Sums and Their Applications}.
\newblock In \bibinfo{booktitle}{\emph{Synthesis of Parallel Algorithms}},
  \bibfield{editor}{\bibinfo{person}{John Reif}} (Ed.).
  \bibinfo{publisher}{Morgan Kaufmann}.
\newblock


\bibitem[Blelloch(1996)]%
        {Blelloch96}
\bibfield{author}{\bibinfo{person}{Guy~E. Blelloch}.}
  \bibinfo{year}{1996}\natexlab{}.
\newblock \showarticletitle{Programming Parallel Algorithms}.
\newblock \bibinfo{journal}{\emph{Commun. {ACM}}} \bibinfo{volume}{39},
  \bibinfo{number}{3} (\bibinfo{date}{March} \bibinfo{year}{1996}).
\newblock


\bibitem[Blelloch et~al\mbox{.}(2020a)]%
        {blelloch2020parlaylib}
\bibfield{author}{\bibinfo{person}{Guy~E Blelloch}, \bibinfo{person}{Daniel
  Anderson}, {and} \bibinfo{person}{Laxman Dhulipala}.}
  \bibinfo{year}{2020}\natexlab{a}.
\newblock \showarticletitle{ParlayLib-A toolkit for parallel algorithms on
  shared-memory multicore machines}. In \bibinfo{booktitle}{\emph{{ACM}
  Symposium on Parallelism in Algorithms and Architectures (SPAA)}}.
\newblock


\bibitem[Blelloch et~al\mbox{.}(2020b)]%
        {blelloch2020optimal}
\bibfield{author}{\bibinfo{person}{Guy~E. Blelloch}, \bibinfo{person}{Jeremy~T
  Fineman}, \bibinfo{person}{Yan Gu}, {and} \bibinfo{person}{Yihan Sun}.}
  \bibinfo{year}{2020}\natexlab{b}.
\newblock \showarticletitle{Optimal parallel algorithms in the binary-forking
  model}. In \bibinfo{booktitle}{\emph{{ACM} Symposium on Parallelism in
  Algorithms and Architectures (SPAA)}}.
\newblock


\bibitem[Brent(1974)]%
        {brent1974parallel}
\bibfield{author}{\bibinfo{person}{Richard~P Brent}.}
  \bibinfo{year}{1974}\natexlab{}.
\newblock \showarticletitle{The parallel evaluation of general arithmetic
  expressions}.
\newblock \bibinfo{journal}{\emph{J. {ACM}}} \bibinfo{volume}{21},
  \bibinfo{number}{2} (\bibinfo{year}{1974}), \bibinfo{pages}{201--206}.
\newblock


\bibitem[Chen et~al\mbox{.}(2023)]%
        {maxflow23}
\bibfield{author}{\bibinfo{person}{Li Chen}, \bibinfo{person}{Rasmus Kyng},
  \bibinfo{person}{Yang~P. Liu}, \bibinfo{person}{Richard Peng},
  \bibinfo{person}{Maximilian~Probst Gutenberg}, {and} \bibinfo{person}{Sushant
  Sachdeva}.} \bibinfo{year}{2023}\natexlab{}.
\newblock \showarticletitle{Almost-Linear-Time Algorithms for Maximum Flow and
  Minimum-Cost Flow}.
\newblock \bibinfo{journal}{\emph{Commun. {ACM}}} \bibinfo{volume}{66},
  \bibinfo{number}{12} (\bibinfo{date}{Nov.} \bibinfo{year}{2023}),
  \bibinfo{pages}{85–92}.
\newblock
\showISSN{0001-0782}
\href{https://doi.org/10.1145/3610940}{doi:\nolinkurl{10.1145/3610940}}


\bibitem[Cole and Vishkin(1986)]%
        {cole1986approximate}
\bibfield{author}{\bibinfo{person}{Richard Cole} {and} \bibinfo{person}{Uzi
  Vishkin}.} \bibinfo{year}{1986}\natexlab{}.
\newblock \showarticletitle{Approximate and exact parallel scheduling with
  applications to list, tree and graph problems}. In
  \bibinfo{booktitle}{\emph{{IEEE} Symposium on Foundations of Computer Science
  (FOCS)}}.
\newblock


\bibitem[Fortune and Wyllie(1978)]%
        {pram78}
\bibfield{author}{\bibinfo{person}{Steven Fortune} {and} \bibinfo{person}{James
  Wyllie}.} \bibinfo{year}{1978}\natexlab{}.
\newblock \showarticletitle{Parallelism in random access machines}. In
  \bibinfo{booktitle}{\emph{{ACM} Symposium on Theory of Computing (STOC)}}.
\newblock


\bibitem[Frederickson(1985)]%
        {frederickson1985data}
\bibfield{author}{\bibinfo{person}{Greg~N Frederickson}.}
  \bibinfo{year}{1985}\natexlab{}.
\newblock \showarticletitle{Data structures for on-line updating of minimum
  spanning trees, with applications}.
\newblock \bibinfo{journal}{\emph{{SIAM} J. on Computing}}
  \bibinfo{volume}{14}, \bibinfo{number}{4} (\bibinfo{year}{1985}),
  \bibinfo{pages}{781--798}.
\newblock


\bibitem[Frederickson(1997a)]%
        {frederickson1997ambivalent}
\bibfield{author}{\bibinfo{person}{Greg~N Frederickson}.}
  \bibinfo{year}{1997}\natexlab{a}.
\newblock \showarticletitle{Ambivalent data structures for dynamic
  2-edge-connectivity and k smallest spanning trees}.
\newblock \bibinfo{journal}{\emph{{SIAM} J. on Computing}}
  \bibinfo{volume}{26}, \bibinfo{number}{2} (\bibinfo{year}{1997}),
  \bibinfo{pages}{484--538}.
\newblock


\bibitem[Frederickson(1997b)]%
        {frederickson1997data}
\bibfield{author}{\bibinfo{person}{Greg~N Frederickson}.}
  \bibinfo{year}{1997}\natexlab{b}.
\newblock \showarticletitle{A data structure for dynamically maintaining rooted
  trees}.
\newblock \bibinfo{journal}{\emph{J. Algorithms}} \bibinfo{volume}{24},
  \bibinfo{number}{1} (\bibinfo{year}{1997}), \bibinfo{pages}{37--65}.
\newblock


\bibitem[Gabow(1990)]%
        {Ga90}
\bibfield{author}{\bibinfo{person}{Harold~N. Gabow}.}
  \bibinfo{year}{1990}\natexlab{}.
\newblock \showarticletitle{Data structures for weighted matching and nearest
  common ancestors with linking}. In \bibinfo{booktitle}{\emph{Proceedings of
  the First Annual ACM-SIAM Symposium on Discrete Algorithms}} (San Francisco,
  California, USA) \emph{(\bibinfo{series}{SODA '90})}.
  \bibinfo{publisher}{Society for Industrial and Applied Mathematics},
  \bibinfo{address}{USA}, \bibinfo{pages}{434–443}.
\newblock
\showISBNx{0898712513}


\bibitem[Gawrychowski et~al\mbox{.}(2020)]%
        {gawrychowski2019minimum}
\bibfield{author}{\bibinfo{person}{Pawe{\l} Gawrychowski},
  \bibinfo{person}{Shay Mozes}, {and} \bibinfo{person}{Oren Weimann}.}
  \bibinfo{year}{2020}\natexlab{}.
\newblock \showarticletitle{Minimum Cut in ${O}(m\log^{2} n)$ Time}. In
  \bibinfo{booktitle}{\emph{Intl. Colloq. on Automata, Languages and
  Programming {(ICALP)}}}.
\newblock


\bibitem[Gazit et~al\mbox{.}(1988)]%
        {gazit1988optimal}
\bibfield{author}{\bibinfo{person}{Hillel Gazit}, \bibinfo{person}{Gary~L
  Miller}, {and} \bibinfo{person}{Shang-Hua Teng}.}
  \bibinfo{year}{1988}\natexlab{}.
\newblock \showarticletitle{Optimal tree contraction in the EREW model}.
\newblock In \bibinfo{booktitle}{\emph{Concurrent Computations}}.
  \bibinfo{publisher}{Springer}, \bibinfo{pages}{139--156}.
\newblock


\bibitem[Geissmann and Gianinazzi(2018)]%
        {geissmann2018parallel}
\bibfield{author}{\bibinfo{person}{Barbara Geissmann} {and}
  \bibinfo{person}{Lukas Gianinazzi}.} \bibinfo{year}{2018}\natexlab{}.
\newblock \showarticletitle{Parallel minimum cuts in near-linear work and low
  depth}. In \bibinfo{booktitle}{\emph{{ACM} Symposium on Parallelism in
  Algorithms and Architectures (SPAA)}}.
\newblock


\bibitem[Ghaffari et~al\mbox{.}(2023)]%
        {ghaffari2023nearly}
\bibfield{author}{\bibinfo{person}{Mohsen Ghaffari}, \bibinfo{person}{Christoph
  Grunau}, {and} \bibinfo{person}{Jiahao Qu}.} \bibinfo{year}{2023}\natexlab{}.
\newblock \showarticletitle{Nearly Work-Efficient Parallel DFS in Undirected
  Graphs}. In \bibinfo{booktitle}{\emph{{ACM} Symposium on Parallelism in
  Algorithms and Architectures (SPAA)}}.
\newblock


\bibitem[Gil et~al\mbox{.}(1991)]%
        {Gil91a}
\bibfield{author}{\bibinfo{person}{J. Gil}, \bibinfo{person}{Y. Matias}, {and}
  \bibinfo{person}{U. Vishkin}.} \bibinfo{year}{1991}\natexlab{}.
\newblock \showarticletitle{Towards a theory of nearly constant time parallel
  algorithms}. In \bibinfo{booktitle}{\emph{{IEEE} Symposium on Foundations of
  Computer Science (FOCS)}}.
\newblock


\bibitem[Henzinger and King(1995)]%
        {henzinger1995randomized}
\bibfield{author}{\bibinfo{person}{Monika~Rauch Henzinger} {and}
  \bibinfo{person}{Valerie King}.} \bibinfo{year}{1995}\natexlab{}.
\newblock \showarticletitle{Randomized dynamic graph algorithms with
  polylogarithmic time per operation}. In \bibinfo{booktitle}{\emph{{ACM}
  Symposium on Theory of Computing (STOC)}}. ACM.
\newblock


\bibitem[Henzinger and King(2001)]%
        {henzinger2001maintaining}
\bibfield{author}{\bibinfo{person}{Monika~R Henzinger} {and}
  \bibinfo{person}{Valerie King}.} \bibinfo{year}{2001}\natexlab{}.
\newblock \showarticletitle{Maintaining minimum spanning forests in dynamic
  graphs}.
\newblock \bibinfo{journal}{\emph{{SIAM} J. on Computing}}
  \bibinfo{volume}{31}, \bibinfo{number}{2} (\bibinfo{year}{2001}),
  \bibinfo{pages}{364--374}.
\newblock


\bibitem[Holm and de~Lichtenberg(1998)]%
        {holm1998top}
\bibfield{author}{\bibinfo{person}{Jacob Holm} {and} \bibinfo{person}{Kristian
  de Lichtenberg}.} \bibinfo{year}{1998}\natexlab{}.
\newblock \emph{\bibinfo{title}{Top-trees and dynamic graph algorithms}}.
\newblock \bibinfo{thesistype}{Master's\ thesis}. \bibinfo{school}{Citeseer}.
\newblock


\bibitem[Holm et~al\mbox{.}(2001)]%
        {holm2001poly}
\bibfield{author}{\bibinfo{person}{Jacob Holm}, \bibinfo{person}{Kristian
  De~Lichtenberg}, {and} \bibinfo{person}{Mikkel Thorup}.}
  \bibinfo{year}{2001}\natexlab{}.
\newblock \showarticletitle{Poly-logarithmic deterministic fully-dynamic
  algorithms for connectivity, minimum spanning tree, 2-edge, and
  biconnectivity}.
\newblock \bibinfo{journal}{\emph{J. {ACM}}} \bibinfo{volume}{48},
  \bibinfo{number}{4} (\bibinfo{year}{2001}), \bibinfo{pages}{723--760}.
\newblock


\bibitem[JaJa(1992)]%
        {JaJa92}
\bibfield{author}{\bibinfo{person}{J. JaJa}.} \bibinfo{year}{1992}\natexlab{}.
\newblock \bibinfo{booktitle}{\emph{Introduction to Parallel Algorithms}}.
\newblock \bibinfo{publisher}{Addison-Wesley Professional}.
\newblock


\bibitem[Johnson and Metaxas(1992)]%
        {johnson1992optimal}
\bibfield{author}{\bibinfo{person}{Donald~B Johnson} {and}
  \bibinfo{person}{Panagiotis Metaxas}.} \bibinfo{year}{1992}\natexlab{}.
\newblock \showarticletitle{Optimal algorithms for the vertex updating problem
  of a minimum spanning tree}. In \bibinfo{booktitle}{\emph{International
  Parallel Processing Symposium (IPPS)}}.
\newblock


\bibitem[Kapron et~al\mbox{.}(2013)]%
        {kapron2013dynamic}
\bibfield{author}{\bibinfo{person}{Bruce~M Kapron}, \bibinfo{person}{Valerie
  King}, {and} \bibinfo{person}{Ben Mountjoy}.}
  \bibinfo{year}{2013}\natexlab{}.
\newblock \showarticletitle{Dynamic graph connectivity in polylogarithmic worst
  case time}. In \bibinfo{booktitle}{\emph{{ACM-SIAM} Symposium on Discrete
  Algorithms (SODA)}}.
\newblock


\bibitem[Karger(2000)]%
        {karger2000minimum}
\bibfield{author}{\bibinfo{person}{David~R Karger}.}
  \bibinfo{year}{2000}\natexlab{}.
\newblock \showarticletitle{Minimum cuts in near-linear time}.
\newblock \bibinfo{journal}{\emph{J. {ACM}}} \bibinfo{volume}{47},
  \bibinfo{number}{1} (\bibinfo{year}{2000}), \bibinfo{pages}{46--76}.
\newblock


\bibitem[Karger et~al\mbox{.}(1995)]%
        {karger1995randomized}
\bibfield{author}{\bibinfo{person}{David~R Karger}, \bibinfo{person}{Philip~N
  Klein}, {and} \bibinfo{person}{Robert~E Tarjan}.}
  \bibinfo{year}{1995}\natexlab{}.
\newblock \showarticletitle{A randomized linear-time algorithm to find minimum
  spanning trees}.
\newblock \bibinfo{journal}{\emph{J. {ACM}}} \bibinfo{volume}{42},
  \bibinfo{number}{2} (\bibinfo{year}{1995}), \bibinfo{pages}{321--328}.
\newblock


\bibitem[King(1997)]%
        {king1997simpler}
\bibfield{author}{\bibinfo{person}{Valerie King}.}
  \bibinfo{year}{1997}\natexlab{}.
\newblock \showarticletitle{A simpler minimum spanning tree verification
  algorithm}.
\newblock \bibinfo{journal}{\emph{Algorithmica}}  \bibinfo{volume}{18}
  (\bibinfo{year}{1997}), \bibinfo{pages}{263--270}.
\newblock


\bibitem[King et~al\mbox{.}(1997)]%
        {king1997optimal}
\bibfield{author}{\bibinfo{person}{Valerie King}, \bibinfo{person}{Chung~Keung
  Poon}, \bibinfo{person}{Vijaya Ramachandran}, {and} \bibinfo{person}{Santanu
  Sinha}.} \bibinfo{year}{1997}\natexlab{}.
\newblock \showarticletitle{An optimal {EREW} {PRAM} algorithm for minimum
  spanning tree verification}.
\newblock \bibinfo{journal}{\emph{Inform. Process. Lett.}}
  \bibinfo{volume}{62}, \bibinfo{number}{3} (\bibinfo{year}{1997}),
  \bibinfo{pages}{153--159}.
\newblock


\bibitem[Miller and Reif(1985)]%
        {miller1985parallel}
\bibfield{author}{\bibinfo{person}{Gary~L. Miller} {and}
  \bibinfo{person}{John~H. Reif}.} \bibinfo{year}{1985}\natexlab{}.
\newblock \showarticletitle{Parallel Tree Contraction and Its Application}. In
  \bibinfo{booktitle}{\emph{{IEEE} Symposium on Foundations of Computer Science
  (FOCS)}}. IEEE.
\newblock


\bibitem[Schieber and Vishkin(1988)]%
        {SV88}
\bibfield{author}{\bibinfo{person}{Baruch Schieber} {and} \bibinfo{person}{Uzi
  Vishkin}.} \bibinfo{year}{1988}\natexlab{}.
\newblock \showarticletitle{On Finding Lowest Common Ancestors: Simplification
  and Parallelization}.
\newblock \bibinfo{journal}{\emph{SIAM J. Comput.}} \bibinfo{volume}{17},
  \bibinfo{number}{6} (\bibinfo{year}{1988}), \bibinfo{pages}{1253--1262}.
\newblock
\href{https://doi.org/10.1137/0217079}{doi:\nolinkurl{10.1137/0217079}}
\showeprint{https://doi.org/10.1137/0217079}


\bibitem[Sleator and Tarjan(1983)]%
        {sleator1983data}
\bibfield{author}{\bibinfo{person}{Daniel~D Sleator} {and}
  \bibinfo{person}{Robert~Endre Tarjan}.} \bibinfo{year}{1983}\natexlab{}.
\newblock \showarticletitle{A data structure for dynamic trees}.
\newblock \bibinfo{journal}{\emph{J. Comput. Syst. Sci.}} \bibinfo{volume}{26},
  \bibinfo{number}{3} (\bibinfo{year}{1983}), \bibinfo{pages}{362--391}.
\newblock


\bibitem[Sleator and Tarjan(1985)]%
        {ST85}
\bibfield{author}{\bibinfo{person}{Daniel~Dominic Sleator} {and}
  \bibinfo{person}{Robert~Endre Tarjan}.} \bibinfo{year}{1985}\natexlab{}.
\newblock \showarticletitle{Self-adjusting binary search trees}.
\newblock \bibinfo{journal}{\emph{J. {ACM}}} \bibinfo{volume}{32},
  \bibinfo{number}{3} (\bibinfo{year}{1985}).
\newblock


\bibitem[Tarjan(1978)]%
        {tarjan1978complexity}
\bibfield{author}{\bibinfo{person}{Robert~Endre Tarjan}.}
  \bibinfo{year}{1978}\natexlab{}.
\newblock \showarticletitle{Complexity of monotone networks for computing
  conjunctions}.
\newblock In \bibinfo{booktitle}{\emph{Annals of Discrete Mathematics}}.
  Vol.~\bibinfo{volume}{2}. \bibinfo{publisher}{Elsevier},
  \bibinfo{pages}{121--133}.
\newblock


\bibitem[Tarjan(1979)]%
        {tarjan1979applications}
\bibfield{author}{\bibinfo{person}{Robert~Endre Tarjan}.}
  \bibinfo{year}{1979}\natexlab{}.
\newblock \showarticletitle{Applications of path compression on balanced
  trees}.
\newblock \bibinfo{journal}{\emph{J. {ACM}}} \bibinfo{volume}{26},
  \bibinfo{number}{4} (\bibinfo{year}{1979}), \bibinfo{pages}{690--715}.
\newblock


\bibitem[Tarjan(1983)]%
        {tarjan1983data}
\bibfield{author}{\bibinfo{person}{Robert~Endre Tarjan}.}
  \bibinfo{year}{1983}\natexlab{}.
\newblock \bibinfo{booktitle}{\emph{Data structures and network algorithms}}.
\newblock \bibinfo{publisher}{SIAM}.
\newblock


\bibitem[Tarjan and Werneck(2005)]%
        {tarjan2005self}
\bibfield{author}{\bibinfo{person}{Robert~E Tarjan} {and}
  \bibinfo{person}{Renato~F Werneck}.} \bibinfo{year}{2005}\natexlab{}.
\newblock \showarticletitle{Self-adjusting top trees}. In
  \bibinfo{booktitle}{\emph{{ACM-SIAM} Symposium on Discrete Algorithms
  (SODA)}}.
\newblock


\bibitem[Tseng et~al\mbox{.}(2019)]%
        {tseng2019batch}
\bibfield{author}{\bibinfo{person}{Thomas Tseng}, \bibinfo{person}{Laxman
  Dhulipala}, {and} \bibinfo{person}{Guy Blelloch}.}
  \bibinfo{year}{2019}\natexlab{}.
\newblock \showarticletitle{Batch-parallel euler tour trees}. In
  \bibinfo{booktitle}{\emph{2019 Proceedings of the Twenty-First Workshop on
  Algorithm Engineering and Experiments (ALENEX)}}. SIAM,
  \bibinfo{pages}{92--106}.
\newblock


\bibitem[Valiant(1990)]%
        {Valiant91}
\bibfield{author}{\bibinfo{person}{Leslie~G. Valiant}.}
  \bibinfo{year}{1990}\natexlab{}.
\newblock \showarticletitle{General Purpose Parallel Architectures}.
\newblock In \bibinfo{booktitle}{\emph{Handbook of Theoretical Computer Science
  (Vol. A)}}, \bibfield{editor}{\bibinfo{person}{Jan van Leeuwen}} (Ed.).
  \bibinfo{publisher}{MIT Press}, Chapter~18.
\newblock
\showISBNx{0-444-88071-2}


\end{thebibliography}

%%
%% If your work has an appendix, this is the place to put it.
\appendix
\clearpage

\section*{Supplementary Material}

\section{Batch-queries on RC-Trees: Additional Information}

\subsection{The Path Decomposition Property}

\noindent An example of a path decomposition in the RC-Tree is shown in
Figure~\ref{fig:path-decomposition-example}.

\begin{figure}
	\centering
	\begin{subfigure}{0.48\textwidth}
		\centering
		\includegraphics[width=\textwidth]{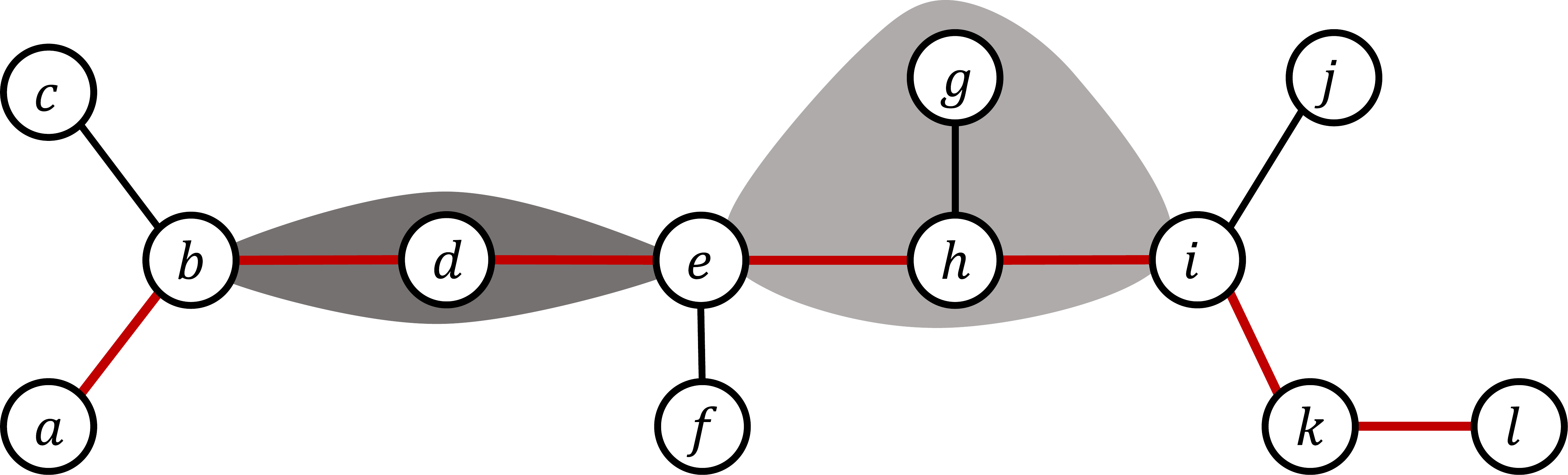}
		\caption{The path from $a$ to $l$ in the forest can be decomposed into the cluster paths of the RC clusters $(a,b), D, H, (i.k), (k,l)$.}
	\end{subfigure}\hfill
	\begin{subfigure}{0.48\textwidth}
		\centering
		\includegraphics[width=\textwidth]{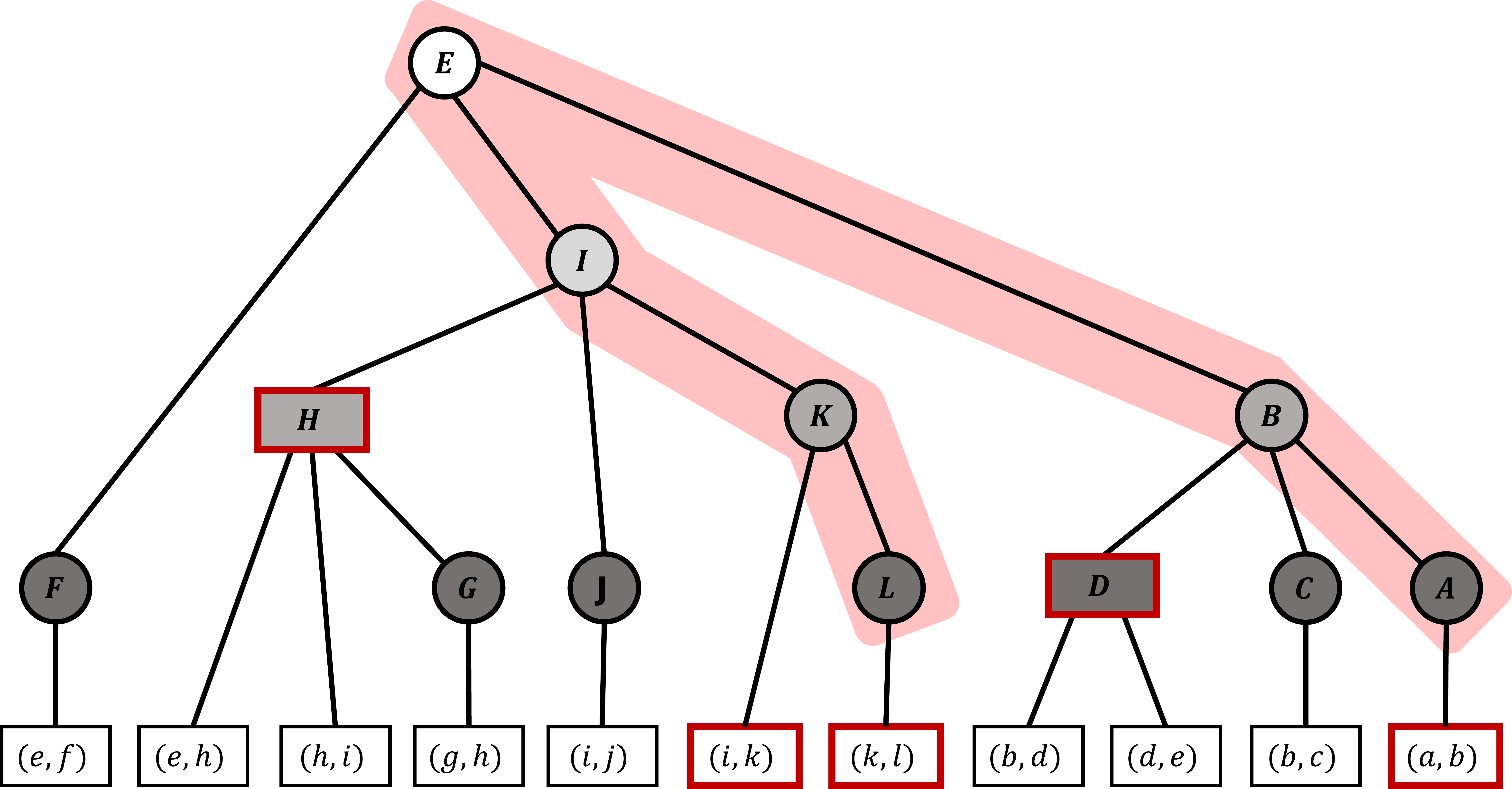}
		\caption{The clusters in the decomposition are children of the path between $A$ and $L$. The path in the RC-Tree is
			highlighted in red and the chosen clusters have a bold red outline.}
	\end{subfigure}
	\caption{An example of a path decomposition in an RC-Tree.}
	\label{fig:path-decomposition-example}
\end{figure}

\begin{proof}[Proof of Theorem~\ref{thm:path-decomposition-property}]
	We prove Theorem~\ref{thm:path-decomposition-property} constructively by describing how to find such a set of clusters. To simplify, we start by considering the \commonboundary{} $c$ of $u$ and $v$.
	Let $U, V, C$ be the clusters represented by $u, v, c$ respectively. We aim to build a path decomposition of the path from $u$ to $c$ and $v$
	to $c$, then take their union to obtain a path decomposition of $u$ to $v$. A path decomposition of $u$ to $c$ must consist of binary
	clusters that are children of the RC-Tree path from $U$ to $C$.
	
	We can show inductively that for any vertex $u$ and any cluster $B$ containing $u$ with boundary vertex $b$, there is a set of binary clusters
	that are children of $U$ to $B$ in the RC-Tree whose cluster paths form the path $u$ to $b$ in $F$. Figure~\ref{fig:path-decomposition-proof} shows
	an example of the resulting paths.
	
	Consider the first cluster to contain $u$.
	If $u$ contracts via rake then it forms some unary cluster $B$ with boundary vertex $b$. This rake operation must rake the edge $(u,b)$ which
	is represented by a binary cluster with boundary vertices $u$ and $b$ and whose cluster path is therefore the path between $u$ and $b$ in $F$.
	This cluster is a child of $B$ and hence this is a valid path decomposition as described. If instead $u$ contracts via compression and is contained
	in a binary cluster $B$, we consider either of its boundary vertices $b$. One of the two edges that compressed is $(u,b)$, which is represented
	by a binary cluster with boundaries $u$ and $b$, and hence a cluster path that covers $u$ to $b$ in $F$.
	
	Now suppose for the purpose of induction that for a cluster $B$ containing $u$ we know a path decomposition from $u$ to the boundary
	vertices of $B$. We want to show that for the parent cluster $B'$ and any of its boundary vertices $b'$ that there is a path decomposition
	from $u$ to $b'$. We will show that the new path decomposition consists of the old path decomposition with at most one additional cluster
	path added.  Consider four cases depending on whether $B$ and $B'$ are unary or binary clusters (or a combination of both).
	\begin{enumerate}[leftmargin=*]
		\item \textbf{Unary and unary}:  $b$ contracts and rakes onto $b'$ via an edge $(b,b')$ which corresponds to a binary cluster with boundaries
		$b$ and $b'$, and hence there is a path decomposition consisting of the previous path decomposition plus this cluster, which is a child of $B'$.
		
		\item \textbf{Unary and binary}: $b$ contracts and compresses the two edges $(b_1', b)$ and $(b_2', b)$, corresponding to binary clusters with boundaries
		$b$ and $b_1'$, and $b$ and $b_2'$. The resulting boundaries of $B'$ are $b_1'$ and $b_2'$, for which there are path decompositions consisting of the
		previous path decompositions plus the cluster corresponding to $(b_1', b)$ and $(b_2', b)$ respectively, each of which are children of $B'$.
		
		\item \textbf{Binary and unary}: If $B$ has boundaries $b_1$ and $b_2$, then one of them contracts and rakes the edge $(b_1, b_2)$ onto the other
		one, leaving it as the sole boundary vertex of $B'$. We therefore already know the path decomposition for this boundary.
		
		\item \textbf{Binary and binary}: If $B$ has boundaries $b_1$ and $b_2$, assume without loss of generality that $b_1$ contracts and compresses the
		two edges $(b_1, b_2)$ and $(b_1', b_1)$ resulting in an edge $(b_1', b_2)$, which corresponds to a binary cluster with boundaries $b_1'$ and $b_2$.
		A path decomposition for $b_1'$ consists of the previous path decomposition and the cluster corresponding to $(b_1', b_2)$. We already know a path decomposition of $b_2$
		since it did not change.
	\end{enumerate}
	
	\noindent The cases are illustrated in Figure~\ref{fig:path-decomposition-cases}. By induction, we can find a path decomposition from $u$ to any of its ancestors' boundaries, and hence there is a path decomposition for $u$ to $c$, the
	\commonboundary{}, and similarly for $v$ to $c$.
\end{proof}

\begin{figure}
	\centering
	\includegraphics[width=0.75\textwidth]{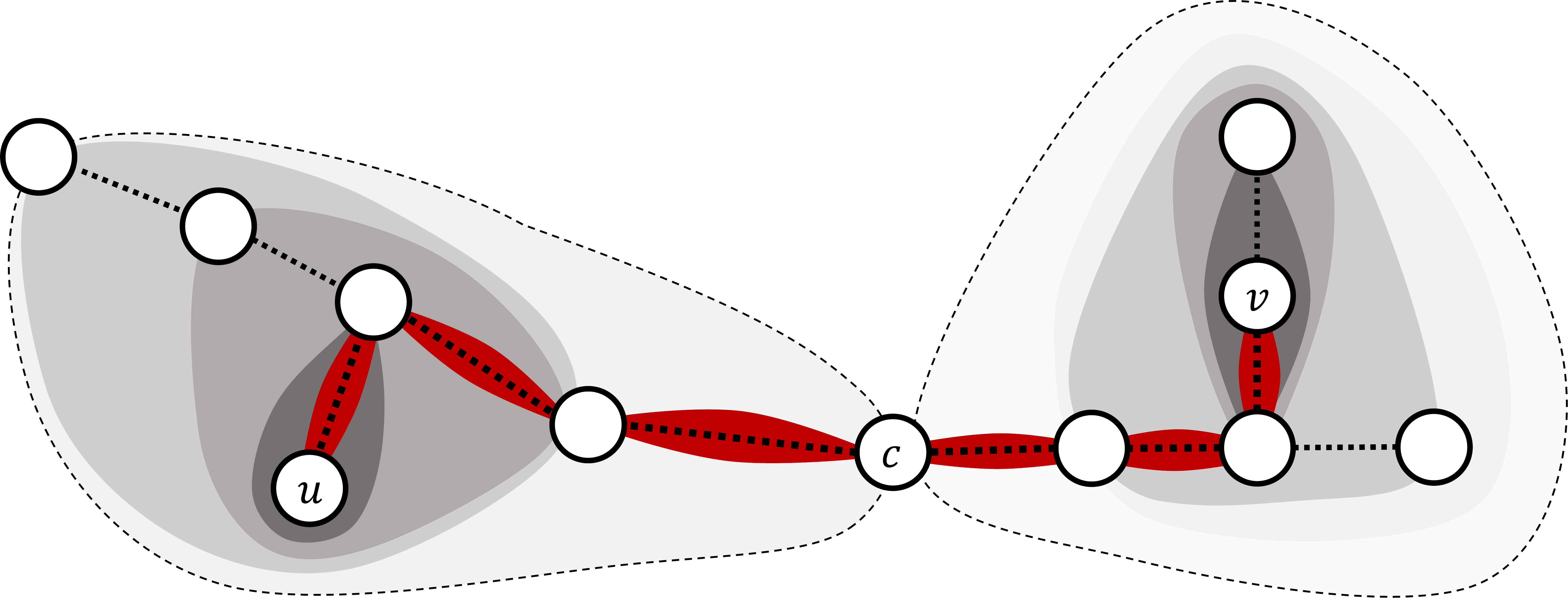}
	\caption{A path decomposition obtained by finding a set of binary clusters connecting $u$ and $v$ to their ancestors' boundary vertices until they
		meet at the \commonboundary{} $c$}\label{fig:path-decomposition-proof}
\end{figure}

\begin{figure}
	\centering
	\begin{subfigure}{0.49\textwidth}
		\centering
		\includegraphics[width=0.7\textwidth]{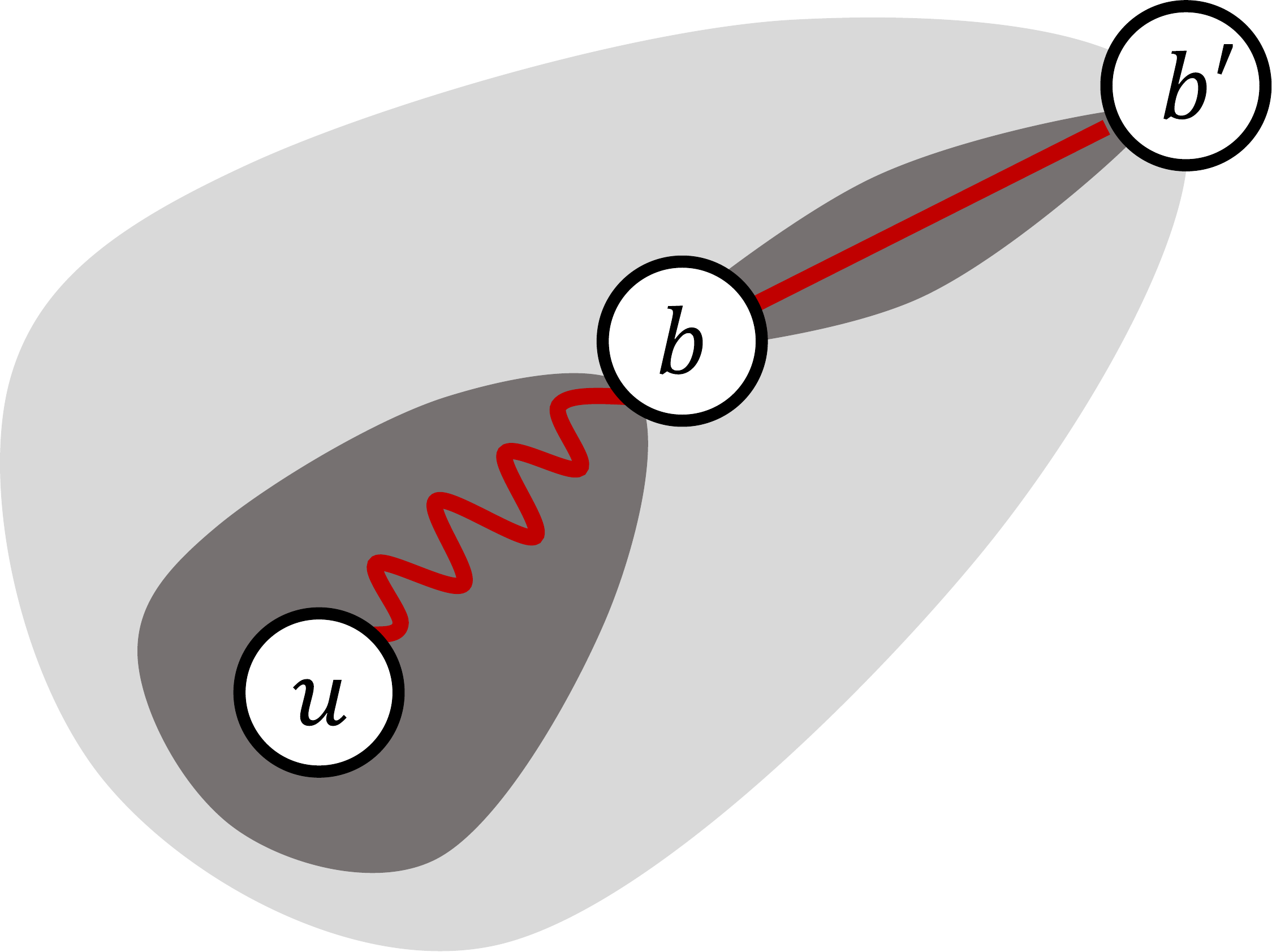}
		\caption{Case 1: A unary cluster with a unary parent.}
	\end{subfigure}
	\begin{subfigure}{0.49\textwidth}
		\centering
		\includegraphics[width=0.85\textwidth]{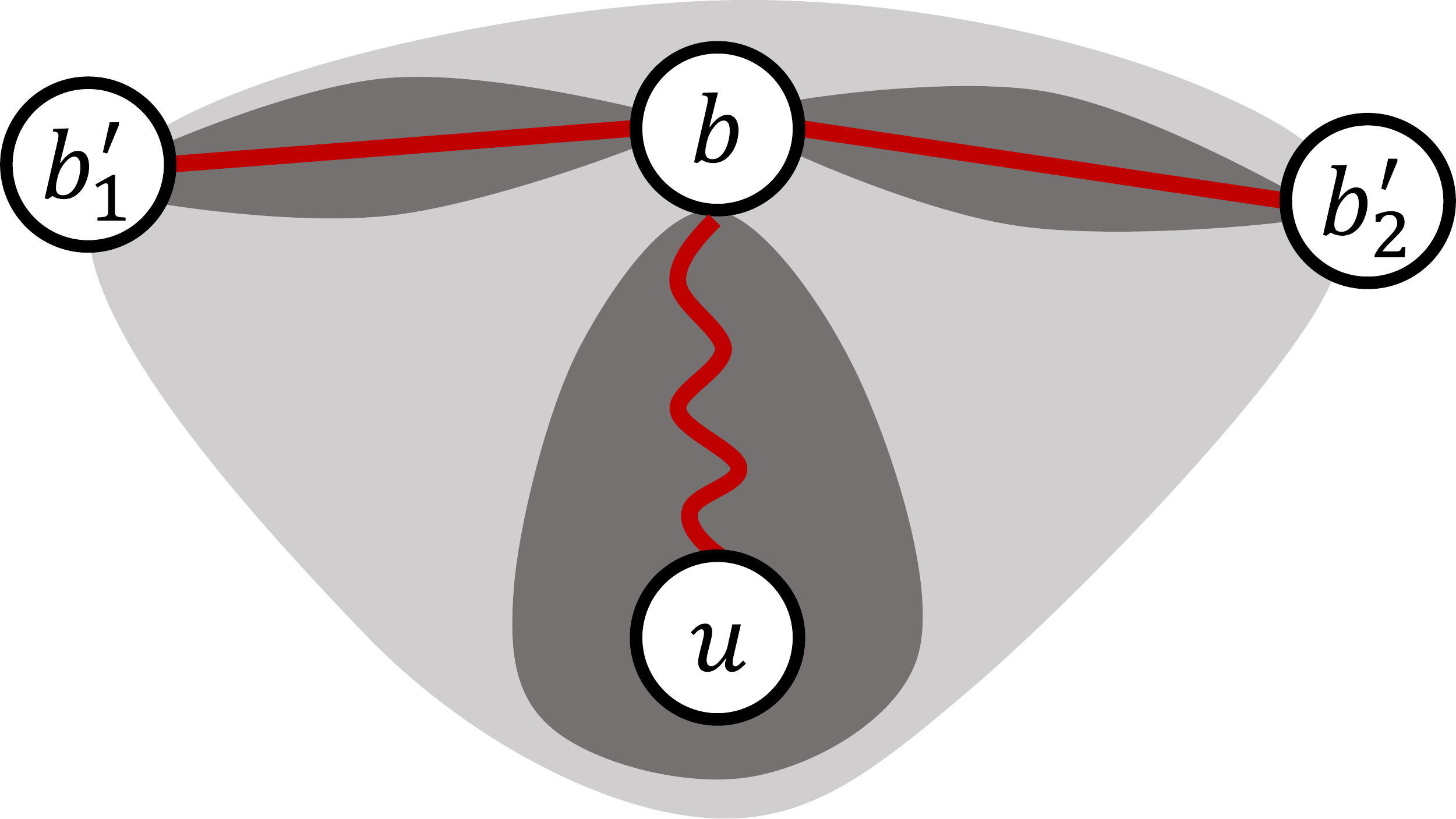}
		\caption{Case 2: A unary cluster with a binary parent.}
	\end{subfigure}
	
	\bigskip
	
	\begin{subfigure}{0.49\textwidth}
		\centering
		\includegraphics[width=0.85\textwidth]{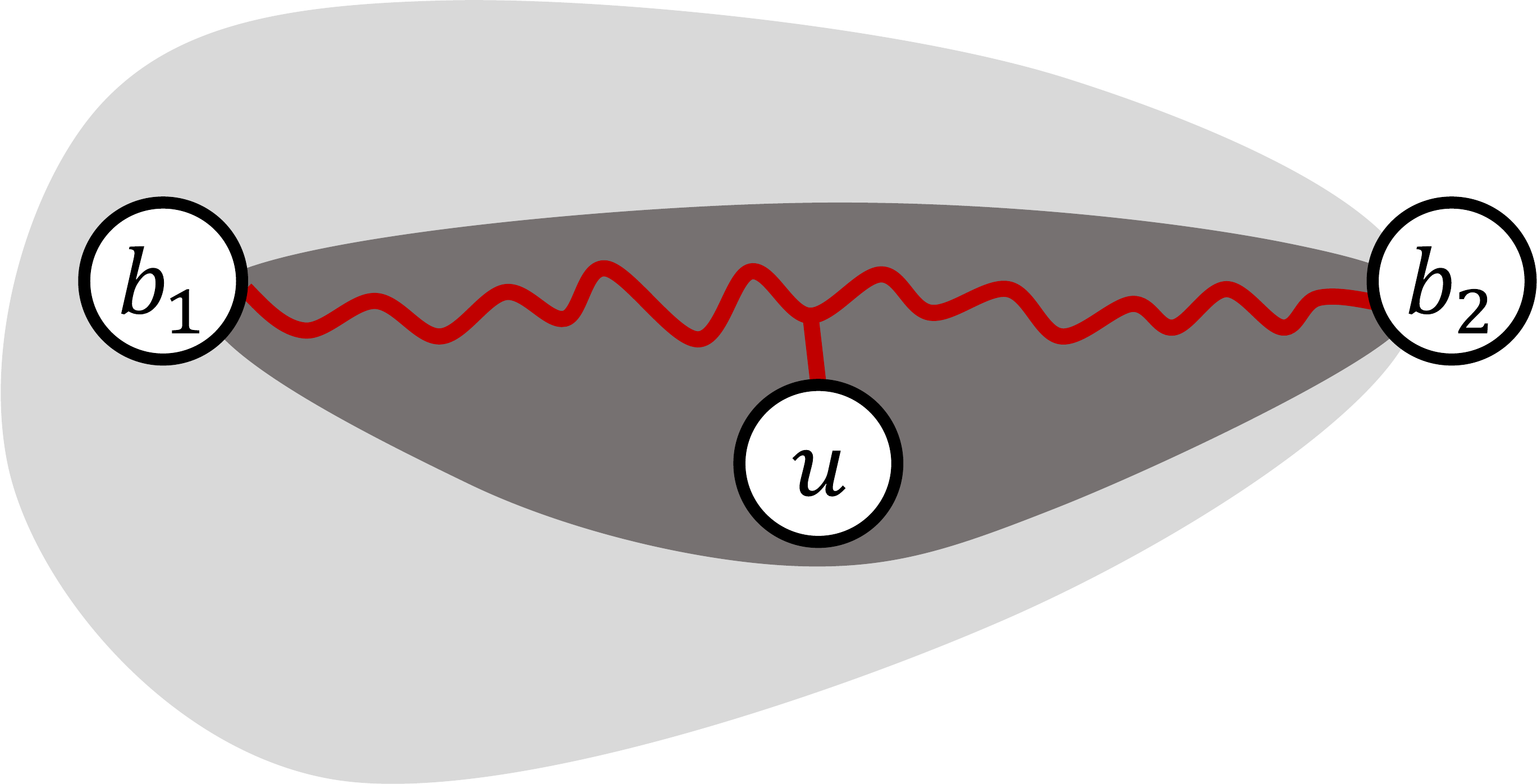}
		\caption{Case 3: A binary cluster with a unary parent.}
	\end{subfigure}
	\begin{subfigure}{0.49\textwidth}
		\centering
		\includegraphics[width=0.99\textwidth]{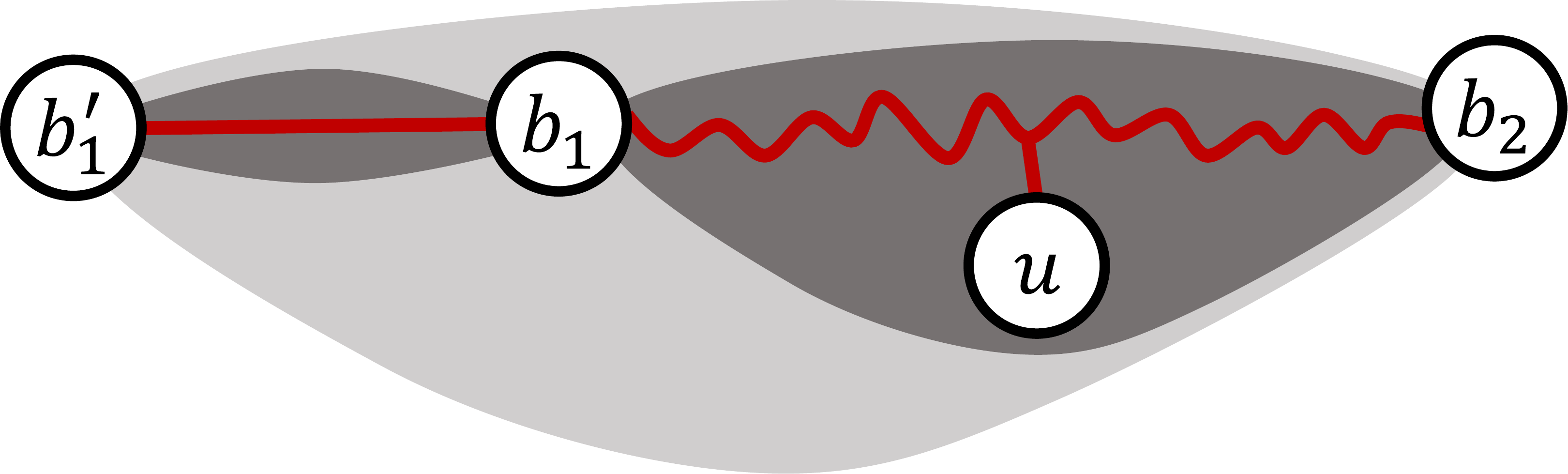}
		\caption{Case 4: A binary cluster with a binary parent.}
	\end{subfigure}
	\caption{Building a path decomposition inductively.}
	\label{fig:path-decomposition-cases}
\end{figure}

\subsection{The Subtree Decomposition Property}

\begin{figure}
	\centering
	\begin{subfigure}{0.48\textwidth}
		\centering
		\includegraphics[width=\textwidth]{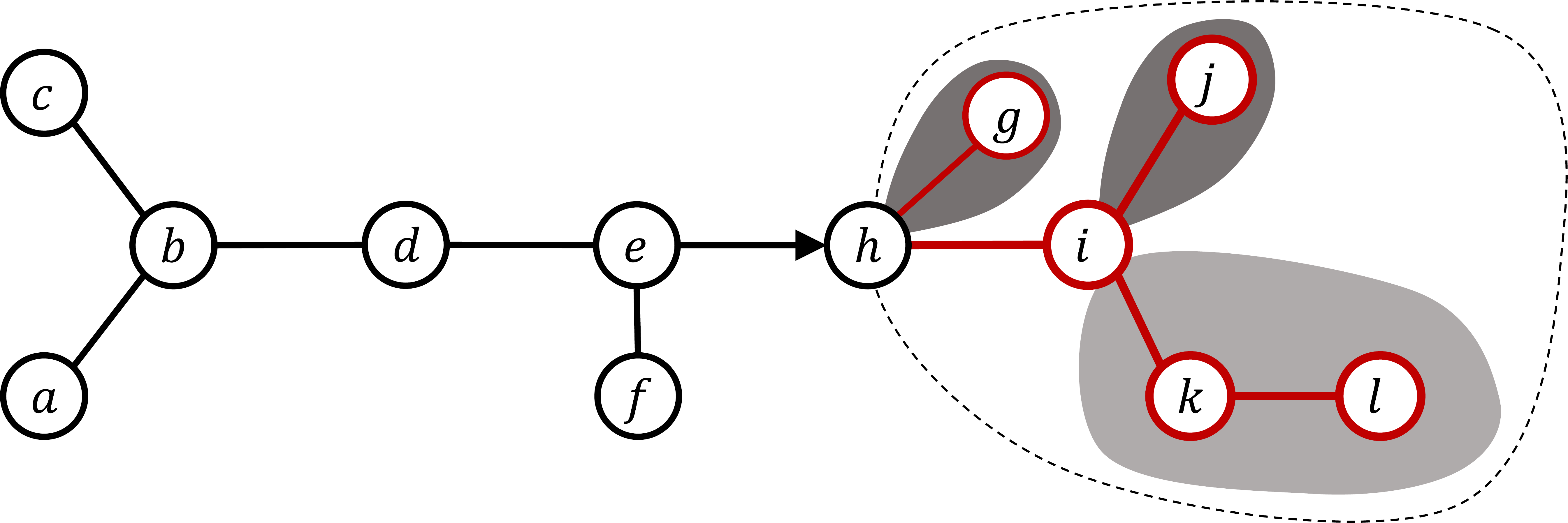}
		\caption{The subtree rooted at $h$ oriented with respect to its parent $e$ can be decomposed into the clusters $\{h \}, G, (h,i), J, K, \{ i \}$. The base
			clusters $\{ h \}$, $\{ i \}$ would be children of $H$ and $I$ if represented explicitly.}
	\end{subfigure}\hfill
	\begin{subfigure}{0.48\textwidth}
		\centering
		\includegraphics[width=\textwidth]{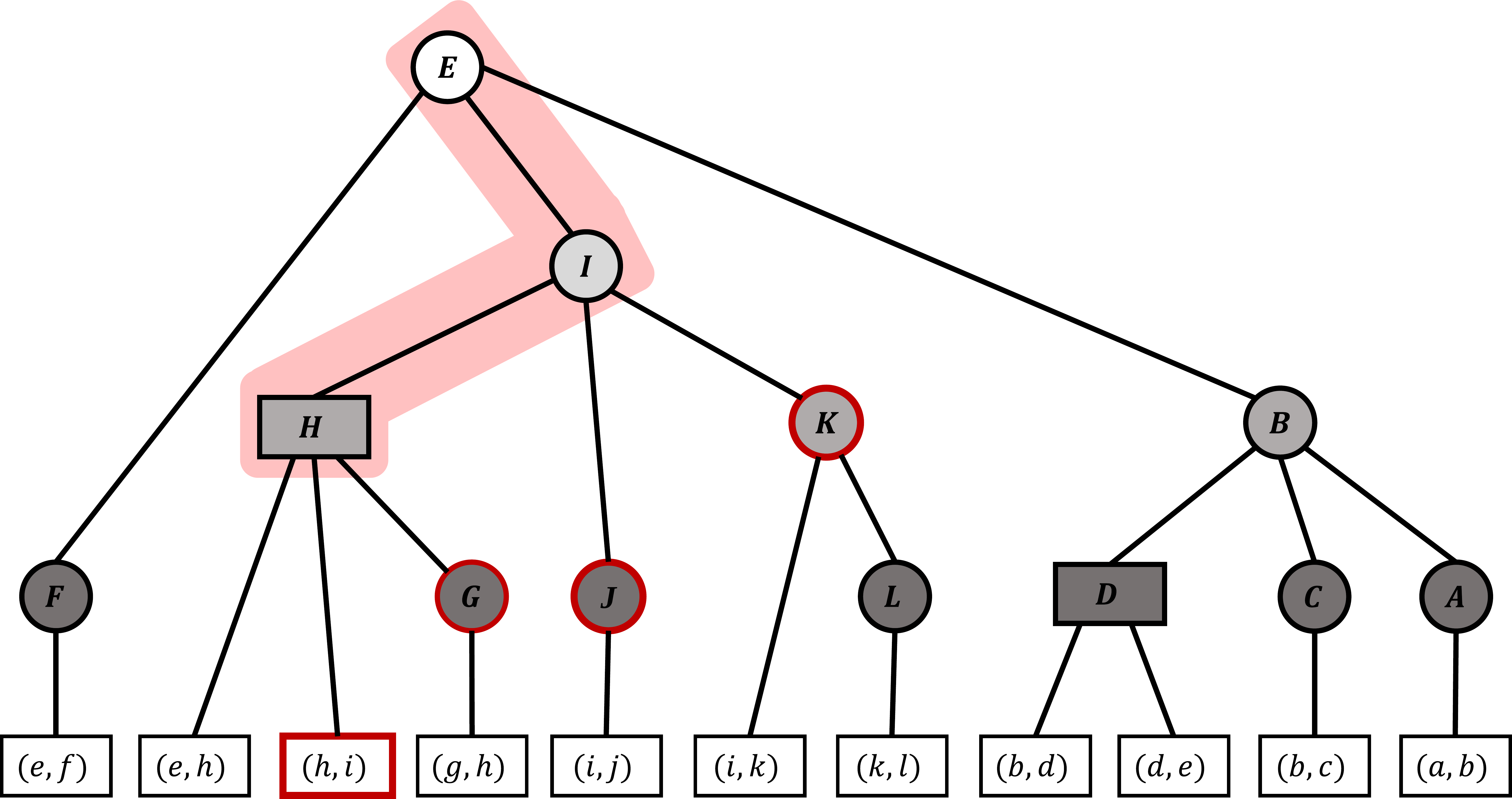}
		\caption{The clusters in the decomposition are children of the path from $H$ to its root cluster. The path in the RC-Tree is
			highlighted in red and the chosen clusters have a bold red outline.}
	\end{subfigure}
	\caption{An example of a subtree decomposition in an RC-Tree.}
	\label{fig:subtree-decomposition-example}
\end{figure}

An example of a subtree decomposition and the corresponding clusters in the RC-Tree is shown in Figure~\ref{fig:subtree-decomposition-example}. We
now prove a lemma that will come in handy during the proof. Consider some cluster $U$ represented by $u$ and one of its boundary vertices $b$. We define the
``subtree growing out of $b$'' with respect to $u$ to be the subtree rooted at $b$ oriented such that the tree was rooted at $u$.

\begin{lemma}\label{lem:boundary-subtree}
	Consider a cluster $U$ represented by $u$, and a boundary vertex $b$. The subtree growing out of $b$ with respect to $u$ as the tree
	root can be decomposed into a set of disjoint RC clusters which are all direct children of the path from $U$ to the root cluster.
\end{lemma}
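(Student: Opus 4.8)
The plan is to give the decomposition explicitly rather than build it up inductively. Let $U = C_0, C_1, \dots, C_\ell = R$ be the path in the RC-Tree from $U$ to the root cluster $R$, where each $C_j$ ($j \ge 1$) is obtained when its representative vertex $r_j$ merges $C_{j-1}$ with a set of \emph{sibling} clusters; call the children of $C_j$ other than $C_{j-1}$ its \emph{off-path children}, and count base vertex clusters among the children (they are merely omitted from the stored tree). Since the contents of a cluster are partitioned among its children, repeatedly substituting $\mathrm{contents}(C_j) = \mathrm{contents}(C_{j-1}) \;\sqcup\; \bigsqcup(\text{off-path children of }C_j)$ from $C_\ell$ down to $C_0$ shows that the whole connected component is the disjoint union of $\mathrm{contents}(U)$ with all off-path children of $C_1, \dots, C_\ell$. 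What I would prove is that the subtree $S$ growing out of $b$ with respect to $u$ is exactly the union of those off-path children whose contents lie inside $S$.

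First I would note that $S$ is the subgraph of $F$ induced by $V_b := \{ w : b \text{ lies on the } F\text{-path from } u \text{ to } w \}$, which is the $b$-side of deleting from $F$ the unique edge $e^\ast$ incident to $b$ on the $F$-path from $u$ to $b$; in a forest, $e^\ast$ is the only edge with one endpoint in $V_b$ and the other outside. Next I would check that $S$ is disjoint from $\mathrm{contents}(U)$: the vertices of $\mathrm{contents}(U)$ induce a connected subgraph containing $u$, closed under induced edges and hence containing the genuine $F$-paths between its vertices, so if any vertex of $\mathrm{contents}(U)$ --- or any endpoint in $\mathrm{contents}(U)$ of an edge of $\mathrm{contents}(U)$ --- lay in $V_b$, that $F$-path would pass through $b$, forcing $b \in \mathrm{contents}(U)$ and contradicting that $b$ is a boundary vertex of $U$. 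Together with the previous paragraph this gives $S \subseteq \bigsqcup(\text{off-path children})$.

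The key step is to show each off-path child $D$ satisfies either $\mathrm{contents}(D) \subseteq S$ or $\mathrm{contents}(D) \cap S = \emptyset$. If instead $D$ held a vertex in $V_b$ and a vertex outside $V_b$, the $F$-path between them, which lies inside $\mathrm{contents}(D)$, would have to cross $e^\ast$ and so pass through $b$, giving $b \in \mathrm{contents}(D)$. But a cluster contains $b$ exactly when it is an RC-Tree ancestor of the base vertex cluster $\{b\}$, and those ancestors are precisely $\{b\}$ itself together with $C_k, C_{k+1}, \dots, C_\ell$, where $r_k = b$ (such a $k$ exists because $b$, being a boundary vertex of $U$, represents an ancestor of $U$). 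Among these the only off-path child is $\{b\}$, a child of $C_k$, whose contents $\{b\}$ trivially lie in $S$. Hence no off-path child straddles $S$, and $S$ is exactly the disjoint union of $\{b\}$ with the off-path children of $C_1, \dots, C_\ell$ that it contains --- all of them direct children of the path from $U$ to $R$, which is the lemma.

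The step I expect to need the most care is the straddling argument: it rests on a cluster's contents being connected and closed under induced edges, hence containing the true $F$-path between any two of its vertices, and on the observation that $b$ is the only vertex through which a path can leave $S$ --- which is precisely why $\{b\}$ is the single ``new'' base cluster that appears in the decomposition. Everything else is bookkeeping about which RC-Tree nodes are ancestors of $\{b\}$ and re-use of the partition-of-contents identity from the first paragraph; the same framework should then plug directly into the proof of Theorem~\ref{thm:subtree-decomposition-property}, assembling a general rooted subtree out of the pieces growing out of boundary vertices along the path.
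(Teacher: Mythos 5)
Your global route---partition the component into $\mathrm{contents}(U)$ plus the off-path children along the root path, then show that no off-path child straddles $S$---is a legitimate alternative to the paper's argument, which instead builds the decomposition constructively by following the successive contractions of boundary vertices ($b$, then $b'$, etc.) and, at each ancestor cluster where a boundary contracts, collecting all of its children except the one on the $U$-side. However, your key no-straddling step is only proved for \emph{vertex} sets, whereas the decomposition must be exact on \emph{contents}, i.e., vertices and edges, and an RC cluster can contain an edge without containing both of its endpoints. Concretely, nothing in your argument rules out an off-path child $D$ all of whose vertices lie in $V_b$ but which also contains the crossing edge $e^\ast$, which is not part of $S$. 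Such a $D$ straddles $S$ without having vertices on both sides, so your case analysis never reaches it; it would be excluded from the collection of children whose contents lie inside $S$, and its vertices (which do lie in $S$) would go uncovered, so the claimed identity is not established as written. (The mirror case, a child with no vertex in $V_b$ holding an edge of $S$, also deserves a sentence, though it is immediate from the cluster property that every edge in a cluster has at least one endpoint in the cluster's vertex set.)

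The missing fact, which your argument implicitly relies on, is that $e^\ast \in \mathrm{contents}(U)$, so that by your own partition identity no off-path child can contain it. This is exactly where the hypothesis that $b$ is a boundary vertex of $U$ must be used at the level of edges, not only vertices: by definition of boundary vertex, some edge $(x,b)$ lies in $U$'s edge set with its other endpoint $x$ in $U$'s vertex set; the $F$-path from $u$ to $x$ stays inside $U$'s connected vertex set and therefore avoids $b$, so appending the edge $(x,b)$ yields the unique $F$-path from $u$ to $b$, forcing $(x,b)=e^\ast$. Once $e^\ast$ is pinned inside $U$, any edge held by an off-path child whose vertices lie in $V_b$ has an endpoint in $V_b$ and is not $e^\ast$, hence has both endpoints in $V_b$ and belongs to $S$; combined with your vertex argument this gives the full dichotomy on contents, and the remainder of your proof, including the identification of $\{b\}$ as the unique off-path child containing $b$, goes through.
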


\begin{proof}
	The situation is depicted in Figure~\ref{fig:subtree-remainder}. The goal is to collect a set of disjoint clusters that covers the subtree
	rooted at $b$, all of which are on the RC-Tree path from $U$ to the root cluster. To do so, observe that when $b$ contracts, it forms a
	larger cluster $B$. This cluster either contains $U$ as a direct child, or possible as a descendent further down the tree if the other
	boundary vertex of $U$ contracted earlier. $B$ is therefore an ancestor of $U$, and furthermore, all of the other children of $B$ must be
	contained inside the subtree of interest.
	
	We now consider the boundary vertices of $B$. If $B$ shares a boundary vertex with its child that contains $u$, we ignore that, since
	it is on the opposite side of $u$ to $b$ on the left side of $u$ in Figure~\ref{fig:subtree-remainder}) and hence is not contained in the subtree
	of interest. If $B$ has no other boundary vertex, then $B$ therefore contains the entirety of the subtree of interest and we are done.
	Otherwise, if $B$ has any other boundary vertex $b'$ (it may have up to two if $B$ is a binary cluster and the child containing $U$ is a
	unary cluster), it is on the opposite side of $b$ to $u$ (on the right side of $b$ in Figure~\ref{fig:subtree-remainder}),
	and hence is contained in the subtree of interest. We therefore simply recursively repeat this process with $b'$ (possibly two of them),
	identifying the ancestor $B'$ of $B$ where $b'$ contracts, and collecting all of its children except the one containing $b$. Once we run out of
	boundary vertices on the side of $b$, we have completed the subtree, and since every cluster collected was a child of an ancestor
	of $U$, it is a valid decomposition.
\end{proof}

\begin{figure}
	\centering
	\includegraphics[width=0.65\textwidth]{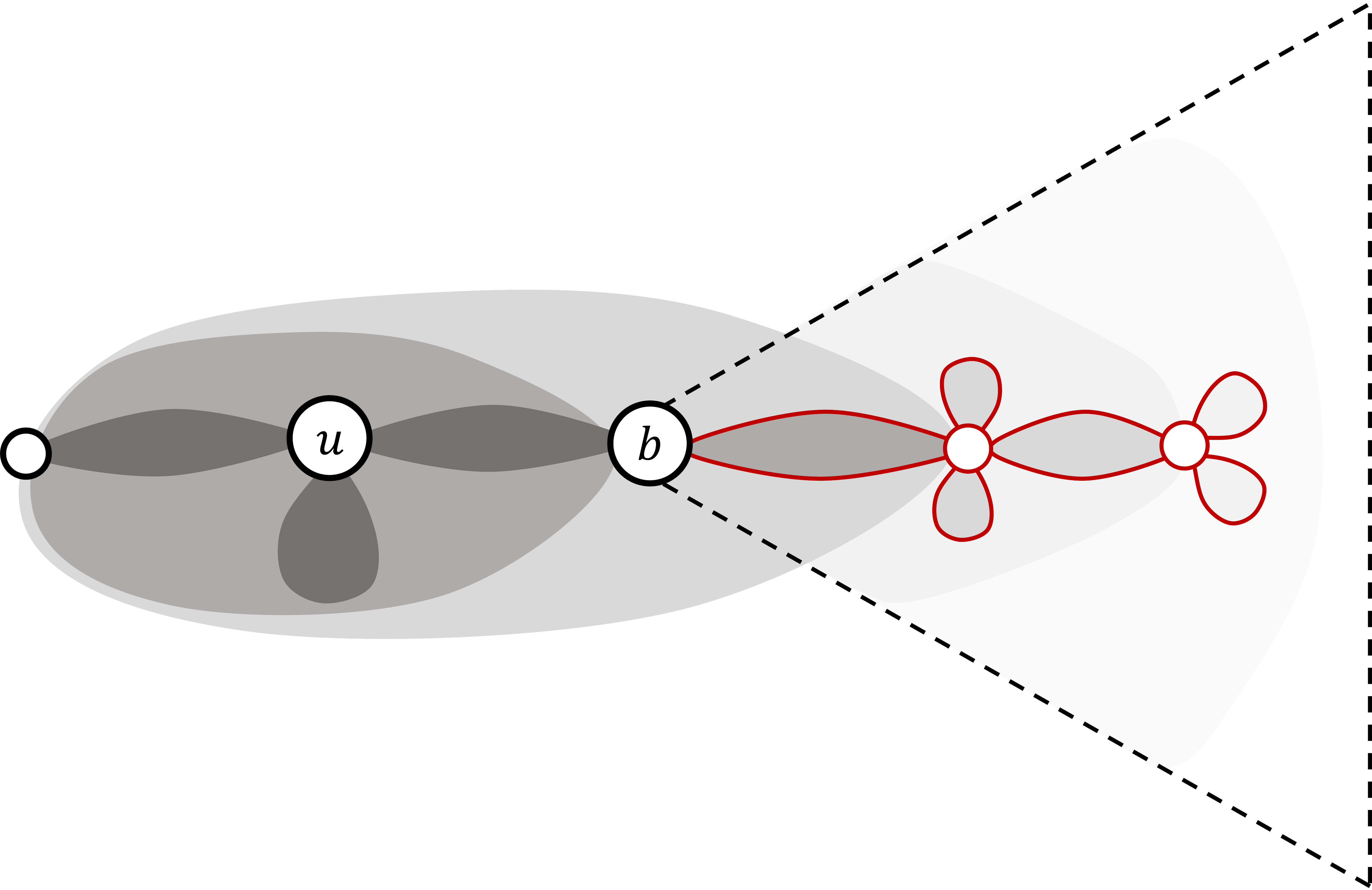}
	\caption{Constructing a decomposition of the subtree growing out of $b$ with respect to $u$ as the tree root. The process essentially walks along the boundary
		vertices to the right by following their contractions in the RC-Tree and collecting all of the adjacent clusters along the way.}\label{fig:subtree-remainder}
\end{figure}

\begin{proof}[Proofs of Theorem~\ref{thm:subtree-decomposition-property}]
	We prove Theorem~\ref{thm:subtree-decomposition-property} constructively, similar to Theorem~\ref{thm:path-decomposition-property}. The key idea is
	depicted in Figure~\ref{fig:subtree-decomposition-proof}. Consider the vertex $u$ and the cluster $U$ that it represents. $U$ consists of a constant
	number of child clusters, with at least one and at most two binary children, and some number of unary children. The subtree rooted at $u$ with respect
	to $p$ essentially consists of the entire tree except for anything in the direction of $p$. Furthermore, since $p$ is adjacent to $u$, we know that
	$p$ is either contained within one of the child clusters, or it is one of the boundary vertices. The second case happens if a binary child of $U$ is
	a single-edge base cluster with $p$ as the other endpoint.
	
	So, to construct a subtree decomposition, we start by taking all of the children of $U$, except for the one that contains/is adjacent to $p$.
	Then, for each binary child with boundary $b$, unless it is the one that contains/is adjacent to $p$, we add to the decomposition
	the contents of the subtree growing out of $b$. By Lemma~\ref{lem:boundary-subtree}, the subtree growing out of $b$ is decomposable into
	children of the path from $U$ to the root cluster, so this results in a valid subtree decomposition.
\end{proof}

\begin{figure}
	\centering
	\includegraphics[width=0.65\textwidth]{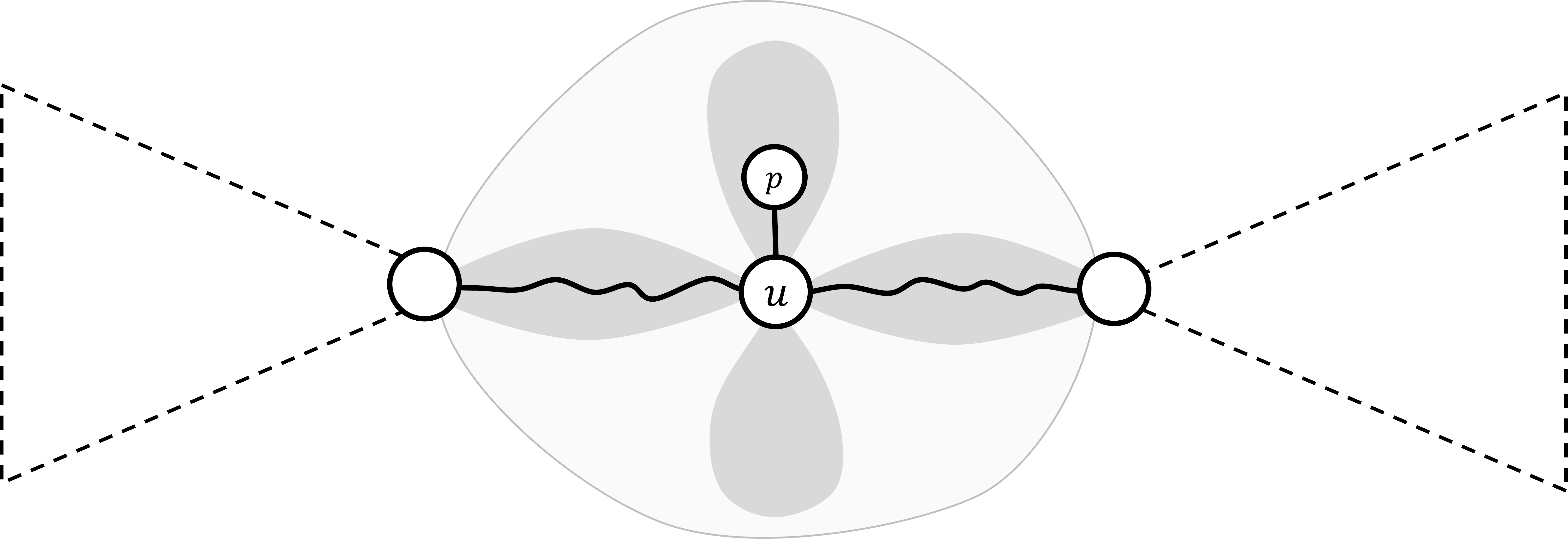}
	\caption{Constructing a subtree decomposition of the subtree rooted at $u$ with respect to $p$ as the parent. The subtree consists of (1) the adjacent clusters except the
		one containing/adjacent to $p$ and (2) the subtrees growing out of the boundary vertices unless adjacent to the cluster containing/adjacent to $p$.}\label{fig:subtree-decomposition-proof}
\end{figure}

\subsection{The Balance Property}

Given an input tree, there is no guarantee that there is a unique RC-Tree that encodes it. Indeed there are likely to be many. For example, one can always construct the ``rake tree'' of an input tree by performing only rake operations. The resulting RC-Tree unfortunately will be just as imbalanced as the input tree, so this is not very useful. The algorithms presented by Acar et al~\cite{acar2020batch} and Anderson and Blelloch~\cite{anderson2024} have the important property that at every round of contraction,
at least a constant fraction of the vertices are removed from the tree. Specifically, Acar et al.'s algorithm ensures that at each round, the number of vertices that survive is at most $7/8^\text{th}$s of the number of vertices in expectation. Similarly, the deterministic algorithm of Anderson and Blelloch ensures that at most $5/6^\text{th}$s of the vertices survive. We refer to this constant as $\beta$.

It is easy to show that an RC-Tree constructed with this property will have height $O(\log n)$, but we note that simply having $O(\log n)$ height alone is not sufficient for batch queries to be efficient. It will be typical of batch RC-Tree queries to start at a set of nodes related to the queries, then aggregate some information along all of the paths to the root from those nodes. The total work of such queries is typically proportional to the total number of unique nodes in this set of paths. The following theorem shows that on a tree constructed with this balance property, the number of such nodes is $O\left(k\log\left(1+\frac{n}{k}\right)\right)$.

\begin{theorem}[Balanced RC-Trees~\cite{acar2020batch}]\label{thm:rc-batch-theorem}
	Consider an RC-Tree constructed such that at most a constant $\beta$-fraction of the nodes survive tree contraction at each round. Given any $k$ nodes in such an RC-Tree, the total number of nodes in the union of the paths from those $k$ nodes to the root is at most $O\left(k\log\left(1+\frac{n}{k}\right)\right)$ (in expectation if $\beta$ is in expectation).
\end{theorem}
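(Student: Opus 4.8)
The plan is to stratify the RC-Tree by contraction round and to bound, level by level, how many ancestors the $k$ marked nodes can have. Assign to each RC-Tree node a \emph{level} equal to the round at which its representative vertex contracts, with the base edge clusters at level $0$. Two structural facts drive the argument, both immediate from the contraction process: (i) the parent of a node always has a strictly larger level, since the vertices contracting in any one round form an independent set, so a contracting vertex and the (necessarily later-contracting) boundary vertex whose contraction absorbs its cluster lie in different rounds; and (ii) if $n_i$ is the number of vertices alive at the start of round $i$, then the number $M_i$ of RC-Tree nodes at level $i$ is $O(n_{i-1})$ --- at level $0$ there are $n+m=O(n)$ base clusters, and for $i\ge 1$ each level-$i$ node is the representative of one of the $\le n_{i-1}$ vertices contracting that round --- and the balance hypothesis gives $n_i \le \beta^i n$ (in expectation if $\beta$ is), so there are only $O(\log_{1/\beta} n)$ nonempty levels. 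Combining (i) and (ii), $M_i \le c\,\beta^i n$ for a constant $c$.

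For the main estimate I would first note two harmless reductions: we may assume the $k$ marked nodes are distinct, and since the RC-Tree has only $n+m=O(n)$ nodes we may assume $k=O(n)$, so $n/k=\Omega(1)$ and $\log(1+n/k)=\Omega(1)$, which lets additive $O(k)$ terms be absorbed into the target bound. Now let $S$ be the set of marked nodes and let $A$ be the union of the root-paths of the nodes in $S$. Because levels strictly increase along every root-path by fact (i), each node of $S$ contributes at most one ancestor at each level, so the number of nodes of $A$ at level $i$ is at most $\min(|S|,M_i)\le \min(k,\,c\,\beta^i n)$. Summing,
\[
  |A| \;\le\; \sum_{i \ge 0} \min\!\big(k,\; c\,\beta^{i} n\big).
\]
Let $i^\star=\bigl\lceil \log_{1/\beta}(cn/k)\bigr\rceil\ge 0$. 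For $i\le i^\star$ each summand is at most $k$, contributing $O(k\,i^\star)=O\!\big(k\log_{1/\beta}(n/k)+k\big)$; for $i>i^\star$ the summands $c\beta^i n$ are dominated by a decreasing geometric series of ratio $\beta<1$ with first term at most $k$, hence total $O(k)$. Therefore $|A|=O\!\big(k\log(1+n/k)\big)$, using $n/k=\Omega(1)$.

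For the randomized construction, where $n_i\le\beta^i n$ holds only in expectation, $M_i=O(n_{i-1})$ still holds deterministically with $\mathbb{E}[n_{i-1}]\le\beta^{i-1}n$, and $x\mapsto\min(k,x)$ is concave, so Jensen's inequality gives $\mathbb{E}[\min(k,M_i)]\le\min(k,c\,\mathbb{E}[n_{i-1}])\le\min(k,c'\beta^i n)$; summing this convergent series (levels past the random height contribute $0$, so the randomness in the height is irrelevant) yields $\mathbb{E}[|A|]=O(k\log(1+n/k))$.

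I expect the only real obstacle to be the level-stratification step: establishing rigorously that levels strictly increase from a node to its parent (so each marked node has a unique ancestor per level) and that level $i$ contains only $O(n_{i-1})$ nodes. These are properties of the contraction that must be verified carefully against the precise definition of the clustering; granting them, the rest is the routine trick of splitting $\sum_i\min(k,c\beta^i n)$ at $i^\star\approx\log_{1/\beta}(n/k)$, together with the observation that $k=O(n)$ keeps $\log(1+n/k)$ bounded away from $0$.
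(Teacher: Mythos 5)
Your proof is correct. Note that the paper does not actually prove this theorem --- it is imported verbatim from Acar et al.~\cite{acar2020batch} --- and your level-stratification argument (levels strictly increase from child to parent since contracting vertices form an independent set, so each of the $k$ root-paths contributes at most one node per level; the level-$i$ population is at most $\min(k, c\beta^i n)$; split the sum at $i^\star \approx \log_{1/\beta}(n/k)$; Jensen's inequality with the concavity of $x \mapsto \min(k,x)$ handles the in-expectation case) is essentially the standard analysis underlying the cited result, so there is nothing to reconcile. The only glossed points are harmless: the index shift between $M_i \le n_i$ and $O(n_{i-1})$ costs only a constant factor, and $\mathbb{E}[n_i] \le \beta^i n$ in the randomized case needs the routine tower-property induction from the per-round guarantee $\mathbb{E}[n_i \mid n_{i-1}] \le \beta\, n_{i-1}$.
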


\subsection{Batch Connectivity Queries}

\begin{proof}[Proof of Theorem~\ref{thm:batch-connectivity-performance}]
	The batch find-representative algorithm performs work proportional to the number of unique nodes visited,
	which by Theorem~\ref{thm:rc-batch-theorem} is $O\left(k\log\left(1+\frac{n}{k}\right)\right)$ (in expectation if the RC-Tree is randomized) for $k$ queries. The span is
	the height of the RC-Tree, which is $O(\log n)$ for a balanced RC-Tree (w.h.p.\ if the RC-Tree is randomized). The algorithm
	answers a batch of find-representative queries, which is sufficient to answer a batch of connectivity queries by checking
	whether the representatives of each pair are the same. Therefore the final work to answer the connectivity queries is $O\left(k + k\log\left(1+\frac{n}{k}\right)\right)$.
\end{proof}

\subsection{Batch Subtree Queries}

Like connectivity, our algorithm for batch subtree queries is similar to the sequential single-query algorithm but with the added
care to avoid performing redundant work and redundant computations. Naively running $k$ subtree queries in parallel would result
in a lot of redundant work since many subtrees have substantial overlap, so our goal is to avoid that redundancy. Unlike connectivity,
it is much less obvious how to do so, since different subtree queries, although they may traverse the same root-to-leaf path in
the RC-Tree, may accumulate different chunks of information along the way. Batch connectivity queries on the other hand were simpler
since the answer for every RC node on the same path was guaranteed to be the same. Figuring out precisely how to break up the computation to
avoid redundancy will be the key insight of this algorithm.

The key idea is that a subtree rooted at $u$ can be decomposed into (1) all but one of the clusters adjacent to $u$ when $u$ contracts (i.e., all but one of the children of the cluster $U$ represented by $u$), and (2) one or two subtrees growing out of the boundary vertices of $U$ (See Figure~\ref{fig:subtree-decomposition-proof}). The contributions of Part (1) can be computed in constant time since those clusters are
children of $U$ in the RC-Tree. The complexity therefore lies in batching the computation of Part (2), the contributions of the subtrees growing
out of the boundary vertices.

\subsubsection{Algorithm: Batch subtrees}

The algorithm begins with a bottom-up computation that stores on each cluster the total aggregate weight of the contents of that
cluster. This is stored as an augmented value and hence already available at query time. The key step of the batched algorithm
is the subsequent top-down computation. Specifically, our algorithm computes for every relevant cluster, the contributions of the subtrees
growing out of its boundary vertices. The first step is the same as the batch find-representative algorithm. Starting from every
query vertex $u_i$, it walks up the tree concurrently and marks every ancestor of every $u_i$. This marks out a subtree of
the RC-Tree consisting of $O\left(k\log\left(1+\frac{n}{k}\right)\right)$ relevant clusters, specifically, every boundary vertex of every ancestor of every query vertex
$u_i$.

The algorithm computes these values starting at the top of the RC-Tree with a top-down computation. Given a node $U$
with representative $u$, we want to compute for each of its boundary vertices $b$, the total weight in the subtree growing out of $b$.
Consider the node $B$ represented by $b$. $B$ is an ancestor of $U$ in the RC-Tree and hence the total contributions of the subtrees
growing out of its boundaries have already been computed. $B$ has one child $U'$ which is either $U$ itself or an ancestor of $U$.
The algorithm collects the contributions of every other child of $B$, since these are contained in the subtree growing out of $b$.
Then, it considers each boundary of $B$, and for each boundary $b'$ of $B$ \emph{not shared with} $U'$, it recursively adds the
contribution of the subtree growing out of $b'$. Since the value of the subtree growing out of $b'$ was computed earlier,
looking this up takes constant time. Therefore, the contribution of the subtree growing out of $b$ with respect to $u$ can
be computed in constant time.

After completing this top-down computation, the algorithm can then answer every query $(u,p)$ in parallel by summing the contributions
of the children of $U$ except the one containing/adjacent to $p$ and then adding the contributions from the top-down computation
of the subtrees growing out of the boundary vertices of $U$ except for at most one that is adjacent to the cluster containing/adjacent to $p$.
Thus we have the following theorem.

\begin{proof}[Proof of Theorem~\ref{thm:batch-subtree-performance}]
	The preprocessing step visits every ancestor of every query root $u$, of which there are at most $O\left(k\log\left(1+\frac{n}{k}\right)\right)$ (in expectation
	if randomized) according to
	Theorem~\ref{thm:rc-batch-theorem}. The top-down computation then examines the children of each node, of which there are
	a constant number, and looks up the contributions of at most four previously computed subtrees (there are at most two boundary
	vertices of the current node, and for each they consider the at most two boundary vertices of the corresponding ancestor node).
	Therefore a constant amount of work is required per node, resulting in $O\left(k\log\left(1+\frac{n}{k}\right)\right)$ preprocessing work (in expectation if randomized), and by parallelising
	the top-down traversal, a span of $O(\log n)$ (w.h.p.\ if randomized). After preprocessing, each query $(u,p)$ is then answered in constant
	time by examining the constant number of children of $U$ and looking up at most two precomputed contributions for the boundary
	vertices. Since there are at most $O(n)$ possible subtrees, the final work bound is $O\left(k + k\log\left(1+\frac{n}{k}\right)\right) = O\left(k\log\left(1+\frac{n}{k}\right)\right)$.
\end{proof}

\subsection{Batch Path Queries with Inverses}

We reduce a batch of path queries over a commutative group to an invocation of our batch-LCA algorithm from Section~\ref{sec:batch-lca} and a simpler
batch query. For each path query $(u,v)$, we compute the LCA of $u$ and $v$ with respect to the RC-Tree root $r$. Then, we solve a batch of path queries
of the form $\textproc{PathSum}(r,x)$.  The solution to the query $(u,v)$ is then calculated as $\textproc{PathSum}(r,u) + \textproc{PathSum}(r,v) - 2\textproc{PathSum}(r,\text{LCA}(u,v))$.

The algorithm maintains augmented values on each binary cluster corresponding to the sum of the weights on the cluster path. Since this can be computed
by summing the weights of the binary children, this is a simple bottom-up computation that uses constant time and space per cluster. No augmented values
are stored on the unary clusters. At query time, the algorithm begins with the usual process of marking all of the ancestor clusters of every vertex
$u$ and $v$ considered in the query.  This marks out a set of $O\left(k\log\left(1+\frac{n}{k}\right)\right)$ nodes in the RC-Tree (Theorem~\ref{thm:rc-batch-theorem}). It then
performs a top-down computation on those nodes which computes the total weight on the path from $r$ to the representative vertex of the cluster.

This top-down computation can be computed in constant time per cluster by considering the values of the boundary vertices. If the current cluster is
a unary cluster, then the total weight from the root is just the total weight to the boundary vertex plus the total weight of the cluster path of
the binary child. If the current cluster is a binary cluster, then the total weight to the root is the total weight to the boundary vertex that
is closer to the root, plus the weight of the cluster path of the adjacent binary child.  Determining which boundary vertex is closer to the root
can be done using the same top-down computation as the batch-LCA algorithm. Once each marked cluster has the total weight from the root to its
representative, the input queries can be solved.

\begin{proof}[Proof of Theorem~\ref{thm:batch-path-query-group-performance}]
	The augmented values need only examine the values of child clusters, so they take constant time.
	The computation of the batch-LCAs takes $O\left(k + k\log\left(1+\frac{n}{k}\right)\right)$ work (in expectation if randomized) and $O(\log n)$ span (w.h.p.\ if randomized)
	by Theorem~\ref{thm:batch-lca}. Marking all the ancestors of the input vertices also takes $O\left(k\log\left(1+\frac{n}{k}\right)\right)$ work and $O(\log n)$ span.
	The top-down computation looks at the at most two boundary vertices of the cluster and one of the children of the cluster and hence a
	constant amount of work is required per node. This results in $O\left(k + k\log\left(1+\frac{n}{k}\right)\right)$ preprocessing work (in expectation if randomized), and by parallelising
	the top-down traversal, a span of $O(\log n)$ (w.h.p.\ if randomized). After preprocessing, each query $(u,v)$ is answered in constant time by looking
	up the values on the clusters represented by $u$, $v$, and LCA$(u,v)$, which takes constant time.
\end{proof}

\subsection{Batch Path Minima Queries}

Anderson et al.~\cite{anderson2020work} give an algorithm that, given an RC-Tree for an edge-weighted tree on $n$ vertices and a set
of $k$ marked vertices, they can produce a \emph{compressed path tree} with respect to the $k$ marked vertices in $O\left(k\log\left(1+\frac{n}{k}\right)\right)$ work and $O(\log n)$ span.
Our algorithm for batch path minimas first produces the compressed path tree with respect to the union of the endpoints of the query vertices $u_i$ and $v_i$.
By design of the compressed path tree, the answers to the path queries are the same if they are evaluated on the compressed path tree
as if they were evaluated on the original tree.  At this point, since the compressed path tree has just $O(k)$ vertices, we can run a static offline algorithm that
evaluates the queries and no longer need the RC-Tree. Our algorithm uses the subroutine of King et al.'s parallel MST verification algorithm~\cite{king1997optimal}
which evaluates a static offline batch of path minima queries in $O(n + k)$ work and $O(\log n)$ span, where $n$ is the size of the tree and $k$ is the
number of query edges. The results of these queries is the solution. Given the algorithm for path minima queries, the algorithm can handle path maxima queries
by reversing the result of each comparison, or equivalently, negating the cost of each edge.

\begin{proof}[Proof of Theorem~\ref{thm:batch-path-minima-performance}]
	Building the compressed path tree takes $O\left(k\log\left(1+\frac{n}{k}\right)\right)$ work (in expectation if randomized) and $O(\log n)$ span (w.h.p.\ if randomized)~\cite{anderson2020work}. The static offline algorithm of King et al.~\cite{king1997optimal} runs in linear work and logarithmic
	span in the size of the tree and number of queries.  Since the compressed path tree has $O(k)$ vertices and we evaluate $k$ queries,
	it takes $O(k)$ work and $O(\log k)$ span. Therefore the total work is $O\left(k + k\log\left(1+\frac{n}{k}\right)\right)$ (in expectation if randomized) with $O(\log n)$ span (w.h.p.\ if randomized).
\end{proof}

\subsubsection{Batch Nearest-Marked-Vertex Queries}

Our algorithm for nearest marked vertices maintains several augmented values via a bottom-up computation, similarly to the sequential single-query algorithm~\cite{acar2005experimental}. For each cluster,
it maintains (1) the nearest marked vertex in the cluster to the representative, (2) the nearest marked vertex in the cluster to each boundary vertex, and (3), if a binary
cluster, the length (total weight) of the cluster path. If a cluster contains no marked vertices, it stores a value of $\mathbf{null} / \infty$ for (1) and (2). Each of
these can be maintained in constant time given the values of the children.

\begin{enumerate}[leftmargin=*, label=(\arabic*)]
	\item If the representative vertex is marked, then it is the nearest marked vertex to itself. Otherwise, since the representative is the boundary vertex
	shared by the children, the nearest marked vertex to the representative in the cluster is just the nearest out of all the nearest marked vertices to that
	boundary vertex in the children.
	\item Each boundary vertex is shared with one of the children. The nearest marked vertex to it in the cluster is either the nearest marked vertex to it in
	that child cluster or it is in one of the other children and hence is the same as the nearest marked vertex to the representative.
	The nearest of the two can be compared by adding the weight of the cluster path to the distance to the nearest to the representative.
	\item The total weight on the cluster path is just the sum of the weights of the two binary children's cluster paths, or the edge weight if the cluster is
	a base cluster.
\end{enumerate}

\noindent To perform updates such as \tcbatchmark{} and \tcbatchunmark{}, the algorithm just sets the marks on the corresponding vertices and then propagates the augmented
values up the RC-Tree to every ancestor cluster containing those vertices. This takes $O\left(k\log\left(1+\frac{n}{k}\right)\right)$ work and $O(\log n)$ span. The time to propagate an edge-weight
update would be the same.

The augmented values store what we will call the \emph{locally} nearest marked vertices, i.e., the nearest marked vertices inside the same cluster. At query time, the
algorithm performs a top-down computation to compute the \emph{globally} nearest marked vertices, i.e., the actual nearest marked vertices in the entire tree. Specifically,
after performing the standard preprocessing step of marking every ancestor cluster of the input vertices, it will perform a top-down computation that will compute for each
marked cluster, the (globally) nearest marked vertex to the representative vertex.

For a root cluster, since it contains its entire subtree, the (globally) nearest marked vertex to the representative is the same as the augmented value. Otherwise, given
a non-root cluster there are two cases. Either the nearest marked vertex is inside the cluster or it is outside the cluster. The nearest marked vertex inside the cluster
is read from the augmented value. If the nearest marked vertex is outside the cluster, then the path from the representative to the nearest marked vertex must pass through
one of the boundary vertices, and hence the nearest marked vertex is just the closer of the nearest marked vertex to each of the boundary vertices, which have already
been computed earlier in the top-down computation. The algorithm therefore computes the distances from the representative vertex to the nearest marked vertices of the
boundaries by adding the weights of the cluster paths that connect them, then takes the closest of the options.

After the top-down computation, the answers can be read off the values of the represented clusters of the query vertices in constant time. We therefore have the following result.

\begin{proof}[Proof of Theorem~\ref{thm:batch-marked-performance}]
	The augmented values need only examine the values of child clusters, so they take constant time. Marking all the ancestors of the input vertices
	also takes $O\left(k\log\left(1+\frac{n}{k}\right)\right)$ work and $O(\log n)$ span (Theorem~\ref{thm:rc-batch-theorem}). The top-down computation looks at the augmented values
	and at the augmented values of the at most two boundary vertices of the cluster and hence a constant amount of work is required per node. This results
	in $O\left(k\log\left(1+\frac{n}{k}\right)\right)$ preprocessing work (in expectation if randomized), and by parallelising the top-down traversal, a span of $O(\log n)$ (w.h.p.\ if randomized).
	After preprocessing, each query $v_i$ is answered in constant time by looking up the value on the cluster represented by $v_i$.
\end{proof}

%TODO add folklore, well known as appropriate
\subsection{Details on LCA}
\subsubsection{Introduction and Definitions}

LCA is a generally useful operation on a tree. LCA generalizes connectivity and isAncestor queries. In an RC tree, the common boundary is an LCA. LCA has also been used for maximum weight matchings \cite{Ga90}. In this work, we provide the first implementation of LCA in a parallel batch dynamic tree: previous parallel batch dynamic tree structures, like Euler-Tour trees, were not capable of handling LCA queries \cite{tseng2019batch}. Our main result in this section is a proof of Theorem \ref{thm:batch-lca}. %(TODO confirm is correct; is cite correct?) 

Before proving Theorem \ref{thm:batch-lca} we highlight some folklore results about LCA. 

\begin{definition} Let $T$ be an unrooted tree and $a,b,r$ be vertices in $T$. We say that $a$ is an {\bf ancestor} of $b$ (with respect to root $r$), denote $isAncestor_T(a,b,r)$, if the (unique) path from $b$ to $r$ includes $a$. Note that we consider a vertex to be its own ancestor. We say $a$ is a proper ancestor of $b$ if $isAncestor(a,b)$ and $a \ne b$. \end{definition}

In proofs, it is often convenient to define LCA in terms of minimizing tree distance. Write $path(a,b,c)$ if $a$ is on the path from $b$ to $c$.

\begin{definition} \label{distDef} Define $d(u,v)$ as the unweighted path distance from $u$ to $v$ and define $D_{u,v,r}(c):=d(u,c) + d(v,c) + d(r,c)$. It is known we could also define $LCA(u,v,r')$ as the distance minimum of $(u,v,r)$, that is, $argmin_{c \in V} D_{u,v,r}(c)$. For the forest extension of LCA, we would define $d(u,v)=\infty$ if $u$ and $v$ are not connected. \end{definition}

\begin{lemma} \label{d3} Suppose that $c=argmin_{x \in V} D_{u,v,r}(x)$. Then $c$ is on the path from (WLOG) $u$ to $r$. \end{lemma}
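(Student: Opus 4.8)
The plan is to use the elementary structure of tree metrics. In a tree, for any vertex $c$ and any path $P$ there is a unique vertex $c'$ of $P$ nearest to $c$ (the \emph{projection}, or \emph{gate}, of $c$ onto $P$), and every path from $c$ to a vertex of $P$ passes through $c'$; equivalently, $d(c,x) = d(c,c') + d(c',x)$ for all $x \in P$. I would apply this with $P$ the (unique) path from $u$ to $r$ and $c$ the given minimizer of $D_{u,v,r}$, and argue that $c = c'$, which is exactly the claim. (Throughout we assume $u,v,r$ lie in a common tree of the forest; otherwise $D_{u,v,r} \equiv \infty$ and there is nothing to prove.)

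First I would record, since $u$ and $r$ both lie on $P$, the identities $d(u,c) = d(u,c') + d(c,c')$ and $d(r,c) = d(r,c') + d(c,c')$. Substituting these into the definition of $D_{u,v,r}$ gives $D_{u,v,r}(c) = d(u,c') + d(r,c') + d(v,c) + 2\,d(c,c')$. Comparing with $D_{u,v,r}(c') = d(u,c') + d(v,c') + d(r,c')$ and applying the triangle inequality $d(v,c') \le d(v,c) + d(c,c')$, I would obtain
\[
  D_{u,v,r}(c) - D_{u,v,r}(c') \;=\; \bigl(d(v,c) - d(v,c')\bigr) + 2\,d(c,c') \;\ge\; d(c,c') \;\ge\; 0 .
\]

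The conclusion is then immediate: if $c$ did not lie on $P$ we would have $c \ne c'$, hence $d(c,c') \ge 1$ and the displayed inequality would be strict, contradicting the assumption that $c$ minimizes $D_{u,v,r}$. Therefore $c$ lies on the path from $u$ to $r$. Running the same argument with the path from $v$ to $r$, and with the path from $u$ to $v$, shows that $c$ actually lies on all three pairwise paths (so $c$ is the median of $u,v,r$), which is more than needed and justifies the ``WLOG''. I do not anticipate a genuine obstacle here; the only points requiring care are that the projection $c'$ is well-defined in a tree, and that it is membership of \emph{both} endpoints $u$ and $r$ on $P$ that licenses the additive distance decompositions through $c'$ — the vertex $v$ plays no special role and only enters through one application of the triangle inequality.
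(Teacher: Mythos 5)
Your proof is correct and follows essentially the same route as the paper: the paper's proof also takes the closest vertex $j$ to $c$ on the $u$--$r$ path (your projection $c'$), decomposes $d(u,c)$ and $d(r,c)$ additively through it, applies the triangle inequality at $v$, and concludes $d(j,c)=0$ from minimality. Your extra observation that the argument applies to all three pairwise paths (making $c$ the median) is a harmless addition.
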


\begin{proof} WLOG consider the path from $u$ to $r$. Let $j$ be the closest node to $c$ on this path. Since a tree contains no cycles, $d(u,c)=d(u,j)+d(j,c)$ and $d(r,c)=d(r,j)+d(j,c)$. By the triangle inequality, $d(v,c) \ge d(v,j)-d(j,c)$. Then $D_{u,v,r}(c)\ge D_{u,v,r}(j)+d(j,c)$. Since $c$ minimizes $D_{u,v,r}$, $D_{u,v,r}(c) \le D_{u,v,r}(j)$. Thus, $d(j,c)=0$. Thus $c$ is on the path from $u$ to $r$. \end{proof}

\begin{lemma} There is a unique minimizer $c$ of $D_{u,v,r}$. \end{lemma}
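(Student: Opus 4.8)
The statement to prove is that $D_{u,v,r}$ has a \emph{unique} minimizer. I already have Lemma~\ref{d3}, which says any minimizer $c$ lies on the path from $u$ to $r$ (and by symmetry on the paths from $v$ to $r$ and from $u$ to $v$). So the plan is to leverage this structural constraint: restrict attention to the path $P$ from $u$ to $r$, parametrize vertices on that path by their distance from $u$ (an integer from $0$ to $d(u,r)$), and show that $D_{u,v,r}$ is a \emph{strictly} convex function of that parameter — or at least that any two minimizers would have to be distinct consecutive-or-not vertices on $P$, and then derive a contradiction.

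Here is the approach in more detail. First I would invoke Lemma~\ref{d3} to conclude that all minimizers lie on $P$, the $u$--$r$ path. Next, for a vertex $c$ at distance $t$ along $P$ from $u$, I would write $d(u,c) = t$ and $d(r,c) = d(u,r) - t$, so that $d(u,c) + d(r,c) = d(u,r)$ is constant in $t$. Hence minimizing $D_{u,v,r}(c)$ over $c \in P$ is equivalent to minimizing $d(v,c)$ over $c \in P$. So it suffices to show that $d(v, \cdot)$ has a unique minimizer among the vertices of $P$. This is the heart of the matter: the vertex of $P$ closest to $v$ is unique. This follows because in a tree there is a unique vertex $j$ on $P$ that is closest to $v$ — if $j_1$ and $j_2$ were two distinct closest vertices on $P$, then the path from $v$ to $j_1$ and the path from $v$ to $j_2$ together with the subpath of $P$ between $j_1$ and $j_2$ would form a cycle (using that $v \notin P$ strictly between, or handling the case $v \in P$ separately where the minimizer is just $v$ itself), contradicting that $T$ is a tree. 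More concretely, $d(v, j_i) = d(v,j) + d(j, j_i)$ where $j$ is the branch point, so $d(v,j_1) = d(v,j_2)$ forces $d(j,j_1) = d(j,j_2)$, and strict inequality $d(v,j) < d(v,j_i)$ whenever $j \ne j_i$; uniqueness of the closest vertex then pins it down to $j$.

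So the concrete key steps, in order, are: (1) apply Lemma~\ref{d3} to confine minimizers to the $u$--$r$ path $P$; (2) observe $d(u,c)+d(r,c)$ is constant along $P$, reducing the problem to minimizing $d(v,c)$ over $c \in P$; (3) prove that the closest vertex of $P$ to $v$ is unique, using acyclicity of $T$ (the ``branch point'' argument); (4) conclude $D_{u,v,r}$ has a unique minimizer.

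The main obstacle — really the only non-routine part — is step (3), and within it the careful handling of the branch point $j$ where $v$'s path to $P$ attaches, including the degenerate cases where $v$ itself lies on $P$ (then $j = v$ is the unique minimizer) or where $u = r$ (then $P$ is a single vertex and the claim is trivial). The argument is elementary but it is worth stating cleanly that in a tree, for any vertex $v$ and any path $P$, there is a unique vertex of $P$ minimizing the distance to $v$, since this is the fact doing all the work. Everything else is bookkeeping with the additive distance identities that hold because trees have no cycles.
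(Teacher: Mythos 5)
Your proof is correct, but it takes a different route from the paper's. The paper argues by contradiction directly: assuming two distinct minimizers $c \ne c'$, it uses Lemma~\ref{d3} to place both of them on the $v$--$r$ path \emph{and} on the $u$--$r$ path, notes that two distinct vertices on the $v$--$r$ path must have distinct distances to $v$ (and likewise for $u$), and then, after a WLOG on $d(u,c)<d(u,c')$, manipulates the additive distance identities along these paths to force $d(v,c)=d(v,c')$, a contradiction. You instead use Lemma~\ref{d3} only for the single path $P$ from $u$ to $r$, observe that $d(u,c)+d(r,c)=d(u,r)$ is constant for $c\in P$, and thereby reduce the whole question to the uniqueness of the vertex of $P$ nearest to $v$, which you settle with the branch-point/acyclicity argument ($d(v,x)=d(v,j)+d(j,x)$ for all $x\in P$, where $j$ is the attachment point of $v$ to $P$). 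Your reduction is arguably more conceptual and isolates the one genuinely tree-theoretic fact doing the work (unique nearest point on a path), at the cost of having to prove that folklore fact and handle the degenerate cases ($v\in P$, $u=r$) explicitly; the paper's version avoids introducing the branch point but pays with a denser chain of distance algebra exploiting both paths simultaneously. One small bookkeeping point you implicitly use and should state: since every global minimizer lies on $P$ and the minimum over $P$ equals the global minimum, the set of global minimizers coincides with the set of minimizers of $D_{u,v,r}$ restricted to $P$, so uniqueness on $P$ indeed gives uniqueness overall.
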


\begin{proof} Suppose BWOC $c' \ne c$ also minimizes $D_{u,v,r}$. By Lemma \ref{d3}, $path(c,v,r)$ and $path(c',v,r)$. Since there is a unique node $c$ distance $d(v,c)$ away from $v$ on the path from $v$ to $r$, $d(v,c) \ne d(v,c')$. 

Similarly, $d(u,c) \ne d(u,c')$. WLOG, $d(u,c) < d(u,c')$. Thus $d(u,c')=d(u,c)+d(c,c')$ and $d(c,r)=d(c,c')=d(c',r)$. Thus $d(c,v)=D(c)-d(c,u)-d(c,r)=D(c')-d(c,u)-d(c,r)=d(c',u)+d(c',v)+d(c',r)-d(c,u)-(d(c,c')+d(c',r))=d(c',v)$, contradicting that $d(v,c) \ne d(v,c')$. Thus $c=c'$.   \end{proof}

\begin{lemma} Definitions \ref{ancDef} and \ref{distDef} are equivalent. \end{lemma}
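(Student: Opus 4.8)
The plan is to show that the unique distance-minimizer of $D_{u,v,r}$ from Definition~\ref{distDef} is exactly the unique vertex satisfying the ancestor characterization of Definition~\ref{ancDef}, first in the connected case and then observing the disconnected case is a triviality of the $d=\infty$ convention. The lemmas already proved do most of the setup: we have a unique minimizer $c$ of $D_{u,v,r}$, and by Lemma~\ref{d3} (applied to $u$, and then, using the symmetry of $D_{u,v,r}$ in its first two arguments, to $v$) this $c$ lies on both the $u$-to-$r$ path and the $v$-to-$r$ path. Hence $c$ is an ancestor of both $u$ and $v$ with respect to $r$, which is condition~(1) of Definition~\ref{ancDef}. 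So what remains is to verify the maximality condition~(2), and then to note that Definition~\ref{ancDef} itself determines at most one vertex, which forces the two definitions to agree.

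For condition~(2), I would take an arbitrary common ancestor $c'$ of $u$ and $v$ with respect to $r$ and argue by contradiction that it must be an ancestor of $c$. Both $c$ and $c'$ lie on the unique simple path from $u$ to $r$, along which distance from $u$ strictly increases, so if $c'$ is \emph{not} an ancestor of $c$ then $c'\neq c$ and $c'$ sits strictly on the $u$-side of $c$, giving $d(u,c)=d(u,c')+d(c',c)$. The key bookkeeping step is that the path from $c'$ to $r$ is unique and is the tail of the $u$-to-$r$ path beyond $c'$, hence passes through $c$; since $c'$ also lies on the $v$-to-$r$ path, that same unique $c'$-to-$r$ path is the tail of the $v$-to-$r$ path beyond $c'$, so it too passes through $c$, yielding $d(v,c)=d(v,c')+d(c',c)$ and $d(r,c)=d(r,c')-d(c',c)$. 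Summing the three identities gives $D_{u,v,r}(c)=D_{u,v,r}(c')+d(c',c)>D_{u,v,r}(c')$, contradicting minimality of $c$. Therefore every common ancestor of $u$ and $v$ is an ancestor of $c$, so $c$ satisfies Definition~\ref{ancDef}.

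To conclude, I would note that Definition~\ref{ancDef} can be satisfied by at most one vertex: if $c_1$ and $c_2$ both satisfied it, each would be an ancestor of the other, and antisymmetry of the ancestor relation on a tree forces $c_1=c_2$. Combined with the uniqueness of the $D$-minimizer established above and the fact just shown that this minimizer meets Definition~\ref{ancDef}, the two definitions pick out the same vertex. For the forest extension, if $u$, $v$, $r$ do not all lie in one component then $D_{u,v,r}$ is identically $\infty$ (some term is infinite for every candidate $c$), so no minimizer is defined and a null value is returned; likewise $u$ and $v$ have no common ancestor, so Definition~\ref{ancDef} also returns null, and the two agree. The main obstacle is the middle paragraph: the path-structure argument that a ``too shallow'' common ancestor $c'$ forces the distances from \emph{both} $u$ and $v$ (and the distance from $r$) to decompose through $c'$ and $c$, which rests on repeatedly invoking uniqueness of paths in a tree; everything else follows cleanly from the lemmas already in hand.
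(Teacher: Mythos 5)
Your proof is correct and follows essentially the same route as the paper: Lemma~\ref{d3} plus uniqueness of the minimizer gives that $c$ is a common ancestor, and the maximality condition is established by the same distance-decomposition contradiction (your hypothesis that $c'$ is not an ancestor of $c$ is, on the $u$--$r$ path, exactly the paper's ``$c$ is a proper ancestor of $c'$'' case, and the identity $D_{u,v,r}(c)=D_{u,v,r}(c')+d(c',c)$ is the paper's computation rearranged). Your explicit remarks on the uniqueness of the ancestor-based LCA and on the forest/null case are harmless additions not spelled out in the paper.
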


\begin{proof} 
Let $c=argmin_{x \in V} D_{u,v,r}(x)$. Note that $c$ exists and is unique, and note that the LCA also exists. WLOG consider the path from $u$ to $r$. By Lemma \ref{d3}, $path(c,u,r)$, so $isAncestor(c,u,r)$. By the same argument, note that $isAncestor(c,v,r)$.

Let $c'$ be an ancestor of $u$ and $v$. Suppose BWOC that $c$ is a proper ancestor of $c'$. Then $d(c',u)=d(u,c)-d(c,c')$, $d(c',v)=d(v,c)-d(c,c')$, and $d(c',r)=d(c,c')+d(c,r)$, so $D(c',u,v,r)=D_{u,v,r}(c)-d(c,c')<D_{u,v,r}(c)$, contradicting that $c$ minimizes $D$. Thus $c'$ is an ancestor of $c$. Thus $c$ is $LCA(u,v,r)$. 
\end{proof}

From this definition, we see that $u,v,r$ are symmetric parameters to an LCA query. We also can argue that if 3 vertices are on a path, the middle vertex is the LCA; this in spirit is a converse of Lemma \ref{d3}.

\begin{corollary} The arguments, $u,v,r$, to LCA are symmetric; that is, $LCA(u,v,r)=LCA(v,u,r)=LCA(r,u,v)$.  \end{corollary}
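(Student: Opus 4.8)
The plan is to reduce the claim to the distance-minimizing characterization of the LCA, which is manifestly symmetric in its three arguments, and then transport that symmetry back to the ancestor-based definition using the equivalence just established. Concretely, I would first invoke the lemma showing that Definitions~\ref{ancDef} and~\ref{distDef} coincide, so that it suffices to prove the statement for $LCA(u,v,r) = argmin_{c \in V}\, D_{u,v,r}(c)$, where $D_{u,v,r}(c) = d(u,c) + d(v,c) + d(r,c)$.

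The key observation is purely formal: for every vertex $c$ we have $D_{u,v,r}(c) = D_{v,u,r}(c) = D_{r,u,v}(c)$, since the right-hand side is just the sum of the three numbers $d(u,c)$, $d(v,c)$, $d(r,c)$ and addition is commutative, so permuting the arguments $u,v,r$ merely reorders the summands. Hence the maps $c \mapsto D_{u,v,r}(c)$, $c \mapsto D_{v,u,r}(c)$, and $c \mapsto D_{r,u,v}(c)$ are literally the same function on $V$, and therefore share the same minimizer (unique by the preceding uniqueness lemma). This yields $LCA(u,v,r) = LCA(v,u,r) = LCA(r,u,v)$ directly; composing the transpositions $(u\,v)$ and $(u\,r)$ then gives invariance under the whole symmetric group acting on $\{u,v,r\}$. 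I would also dispatch the forest case: if $u,v,r$ do not all lie in a common tree, then at least one pairwise distance in $D_{u,v,r}(c)$ is $\infty$ for every $c$, so the query returns the null value; since ``$u,v,r$ lie in a common component'' is itself a symmetric predicate, null is returned under every permutation as well.

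The main obstacle: there essentially isn't one. All of the substantive work — the equivalence of the two definitions, the existence of a distance minimizer, and its uniqueness — has already been carried out in the preceding lemmas, and symmetry of $D_{u,v,r}$ is immediate once that machinery is in place. The only points requiring care are (i) explicitly citing the equivalence lemma so that arguing via the distance formulation is legitimate, and (ii) remembering to cover the disconnected case, which is otherwise easy to overlook but is equally trivial.
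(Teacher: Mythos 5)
Your proposal matches the paper's reasoning: the corollary is stated immediately after the equivalence of Definitions~\ref{ancDef} and~\ref{distDef}, and follows exactly because $D_{u,v,r}(c)=d(u,c)+d(v,c)+d(r,c)$ is invariant under permuting $u,v,r$, so the (unique) minimizer is the same under any ordering of the arguments. Your additional remarks on the forest/null case are a harmless elaboration of the same argument.
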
 

\begin{lemma} \label{path3} Let $T$ be an unrooted tree and $a,b,c \in T$. If $path(a,b,c)$ then $a=LCA(a,b,c)$. \end{lemma}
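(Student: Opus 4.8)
The plan is to work with the distance-minimizing characterization of LCA from Definition~\ref{distDef}: I will show that $a$ is the (unique) minimizer of $D_{a,b,c}(x) = d(a,x) + d(b,x) + d(c,x)$ over $x \in V$. By the equivalence of Definitions~\ref{ancDef} and~\ref{distDef} already established, this is enough to conclude $a = LCA(a,b,c)$.

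First I would extract the one structural fact contained in the hypothesis $path(a,b,c)$: since $a$ lies on the unique $b$--$c$ path in the tree, distances add along it, so $d(b,c) = d(b,a) + d(a,c)$. Evaluating the objective at $x = a$ then gives $D_{a,b,c}(a) = d(a,a) + d(b,a) + d(c,a) = d(b,a) + d(a,c) = d(b,c)$.

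Next, for an arbitrary vertex $x$ I would lower-bound $D_{a,b,c}(x)$ using the triangle inequality: $d(b,x) + d(c,x) \ge d(b,c)$ and $d(a,x) \ge 0$, hence $D_{a,b,c}(x) \ge d(b,c) = D_{a,b,c}(a)$. So $a$ attains the minimum of $D_{a,b,c}$; since the uniqueness-of-minimizer lemma proven above says this minimizer is unique, $a$ is it, and therefore $a = LCA(a,b,c)$ by Definition~\ref{distDef}.

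I do not expect any real obstacle here — the entire content is that $path(a,b,c)$ forces distance additivity, which combines with the triangle inequality. The only point needing a moment's care is being entitled to say ``$a$ is the LCA'' rather than merely ``$a$ is a minimizer,'' which is covered by the uniqueness lemma. (Alternatively one could argue straight from Definition~\ref{ancDef}: $path(a,b,c)$ makes $a$ an ancestor of both $b$ and $c$ with respect to the other, and any common ancestor of $b$ and $c$ must be an ancestor of $a$; but the distance-based route is cleaner given the machinery already developed.)
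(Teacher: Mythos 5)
Your proof is correct and is essentially the paper's own argument: both evaluate $D_{a,b,c}$ at $a$ using distance additivity along the $b$--$c$ path to get $d(b,c)$, then lower-bound $D_{a,b,c}(x)$ for arbitrary $x$ via the triangle inequality, and conclude via Definition~\ref{distDef}. Your extra appeal to the uniqueness-of-minimizer lemma is a minor (and harmless) tightening the paper leaves implicit.
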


\begin{proof} Let $x \in T$. Observe that $D(a,a,b,c)=d(a,a)+d(a,b)+d(a,c)=d(b,c) \le d(x,b)+d(x,c) \le d(x,b)+d(x,c)+d(x,a)=D(x,a,b,c)$. Thus by Definition \ref{distDef}, $a$ is the LCA. \end{proof}

\subsubsection{Batch dynamic LCA}

For parallel batch dynamic LCA, we will need the following lemmas. 

\begin{lemma} \label{logic} (Folklore) $LCA(u,v,r')=LCA(u,v,r) \oplus LCA(u,r',r) \oplus LCA(v,r',r)$ \end{lemma}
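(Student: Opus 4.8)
The plan is to prove the identity by analysing the joint structure of the four vertices $u,v,r,r'$, exploiting the characterization of $LCA$ as a \emph{median}: combining Lemma~\ref{d3} (together with the symmetry corollary) with Lemma~\ref{path3}, for any three vertices $x,y,z$ lying in a common tree, $LCA(x,y,z)$ is the unique vertex that lies simultaneously on all three of the paths $x\!-\!y$, $y\!-\!z$, $x\!-\!z$. First I would record the observation that makes the right-hand side even well-formed: among $m_1=LCA(u,v,r)$, $m_2=LCA(u,r',r)$, $m_3=LCA(v,r',r)$ at least two always coincide, so $m_1\oplus m_2\oplus m_3$ is unambiguously the element of the multiset $\{m_1,m_2,m_3\}$ that occurs an odd number of times. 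Proving this coincidence and identifying that odd element with $LCA(u,v,r')$ will be carried out together. I would also dispatch the forest case up front: if the four vertices are not all in one tree, then at least one of the four $LCA$'s is null and the claimed equality degenerates appropriately.

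Next I would pass to the minimal subtree $S$ of $T$ connecting $\{u,v,r,r'\}$ (the Steiner tree), which has at most two vertices of degree $\ge 3$, and split into two cases. \textbf{Case 1:} $S$ has at most one branch vertex $b$ (so $S$ is a path or a tripod). Then a short argument from the median characterization shows that every ``triple median'' of a $3$-subset of $\{u,v,r,r'\}$ equals $b$ (or the appropriate path endpoint in the path subcase), hence all four quantities in the statement are equal and the identity is immediate. \textbf{Case 2:} $S$ has exactly two branch vertices $x\neq y$, joined by a path in $S$; then the four input vertices partition $2$--$2$, two ``hanging off'' $x$ and two ``hanging off'' $y$, where I would define ``$w$ is on the $x$-side'' to mean the $S$-path from $w$ to $y$ passes through $x$ (this makes the assignment well-defined even if $w$ equals $x$ or $y$). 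The key combinatorial fact, proved directly from the median characterization, is: the median of a $3$-subset equals $x$ iff it contains both $x$-side vertices, i.e.\ iff the omitted vertex lies on the $y$-side; and equals $y$ in the complementary case. I would then just count: $r$ lies on one side, say the $x$-side, and then exactly one of $u,v,r'$ lies on the $x$-side while the other two lie on the $y$-side; feeding this into the ``omit-a-vertex'' rule gives $\{m_1,m_2,m_3\}=\{y,x,x\}$ as a multiset, whose odd element is $y$, which is exactly the median obtained by omitting $r$, namely $LCA(u,v,r')$. The case where $r$ lies on the $y$-side is symmetric. This simultaneously establishes the ``at least two coincide'' claim and pins the odd element to $LCA(u,v,r')$.

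The main obstacle I anticipate is not the core argument but the bookkeeping of degenerate configurations: when two of $u,v,r,r'$ coincide, when a branch vertex of $S$ equals one of the input vertices, or when the two branch vertices collapse ($x=y$, which is just Case~1). Each such case is benign — it only forces more of the medians to coincide — but the median characterization and the side-of-$S$ assignment must be phrased carefully enough (allowing a vertex to be its own ancestor, as the paper already does) that these cases are subsumed rather than requiring separate treatment. A secondary, purely expository choice is whether to present Case~2 via the Steiner tree as above or, equivalently, by rooting $T$ at $r$ so that $m_1=LCA_r(u,v)$, $m_2=LCA_r(u,r')$, $m_3=LCA_r(v,r')$ and casing on the position of $r'$ relative to $LCA_r(u,v)$; the two routes are interchangeable and I would pick whichever keeps the degenerate cases shortest.
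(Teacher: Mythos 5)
Your overall strategy (the median characterization of $LCA$ plus a count of which triples yield which of two possible median vertices) is sound, and your Case~2 argument is exactly the kind of ``casework on the relative positions of $u,v,r,r'$'' that the paper invokes and omits. However, your Case~1 contains a genuine error: it is not true that when the Steiner tree $S$ of $\{u,v,r,r'\}$ has at most one branch vertex, all four triple-medians coincide. Take the four vertices on a path in the order $u, r, v, r'$ (so $S$ is a path, zero branch vertices). Then $LCA(u,v,r)=r$, $LCA(u,r',r)=r$, $LCA(v,r',r)=v$, while $LCA(u,v,r')=v$; the identity holds (the XOR is $v$), but the four medians are $\{r,r,v,v\}$, not a single value, so ``all four quantities are equal and the identity is immediate'' is false here. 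The same failure occurs for a tripod with two of the inputs nested on one leg (e.g.\ $r$ between the branch vertex and $r'$), which also falls into your Case~1. In short, the dichotomy you need is not ``one branch vertex versus two branch vertices of $S$'' but ``the four triple-medians take one value versus two values''; collinear and nested configurations land on the wrong side of your split.

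The fix is to run your Case~2 counting argument in general: let $x\neq y$ be the two values taken by the triple-medians (when there are two), note that each of $u,v,r,r'$ lies on the $x$-side or the $y$-side of the $x$--$y$ path with exactly two on each side, and that the median of a triple is $x$ exactly when the omitted vertex is $y$-side --- crucially allowing $x$ and $y$ to be input vertices themselves rather than degree-$\ge 3$ branch vertices of $S$ (in the path example above, $x=r$ and $y=v$ are inputs). With that repair, your omit-a-vertex count gives the multiset $\{x,x,y\}$ on the right-hand side with odd element equal to $LCA(u,v,r')$, simultaneously establishing well-definedness of $\oplus$, which is all the paper itself asserts (it states the lemma as folklore, proves only that at most two of the three right-hand values are distinct, and omits the positional casework you are attempting to supply).
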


The proof of \ref{logic} follows from casework on the relative positions of $u,v,r,r'$ and is thus omitted. Note that $|\{LCA(u,v,r),LCA(u,r',r),LCA(v,r',r)\}| \le 2$, lest there be a cycle, so $LCA(u,v,r) \oplus LCA(u,r',r) \oplus LCA(v,r',r)$ is well-defined. 

\begin{lemma} \label{sLCA} One can preprocess a static LCA structure on a tree with $n$ nodes in $O(n)$ work and polylog depth, such that queries take constant time \cite{SV88}. \end{lemma}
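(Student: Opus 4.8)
The statement is precisely the parallel LCA construction of Schieber and Vishkin~\cite{SV88}, so the plan is to recall that construction and check that each phase fits the work--span model with the claimed bounds. First I would root the given tree $T$ at an arbitrary vertex and build an Euler tour of it; using parallel list ranking together with prefix sums this takes $O(n)$ work and $O(\log n)$ span, and from the ranked tour one reads off, for every vertex, its preorder number, its subtree size (hence the preorder interval $[\mathrm{pre}(v),\mathrm{pre}(v)+\mathrm{size}(v)-1]$ that its subtree occupies), and its depth --- each of these is a prefix sum over the tour.

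Next comes the Schieber--Vishkin reduction into a virtual balanced binary tree. For each vertex $v$ one computes its \emph{inlabel}, which Schieber and Vishkin define from the two endpoints of $v$'s preorder interval and which is therefore a constant-time local computation once the intervals are known; one also computes, for each vertex, a bitmask recording which inlabels occur among its ancestors, and for each distinct inlabel the ``head'' vertex of the corresponding virtual path. The bitmasks and head pointers propagate along root-to-leaf paths and are obtained with a constant number of pointer-jumping (or list-ranking) passes, again in $O(n)$ work and $O(\log n)$ span. All the per-vertex auxiliary data is thus built in $O(n)$ total work and polylogarithmic span.

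A query $\mathrm{LCA}(u,v)$ is then answered by the Schieber--Vishkin bit trick: one combines the inlabels of $u$ and $v$ to locate their lowest common ancestor in the virtual complete binary tree (a fixed handful of bit operations --- an XOR, a most-significant-bit extraction, and a mask), and then uses the ancestor bitmasks and head pointers to map that node back to the correct vertex of $T$, again a constant number of bit and array accesses. Hence each query runs in $O(1)$ time. An alternative route with the same bounds is to reduce LCA to a $\pm1$ range-minimum query on the depth sequence of the Euler tour and build the textbook block-decomposition-plus-sparse-table RMQ structure; every layer of that structure is a prefix sum or a $\log n$-round doubling, so the preprocessing is $O(n)$ work and $O(\log n)$ span and queries are $O(1)$.

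The delicate point is not the query --- a fixed bit manipulation --- but verifying that every preprocessing step genuinely parallelizes within linear work: that subtree sizes and preorder numbers come out of the Euler tour as prefix sums, that each inlabel is a function of its interval's two endpoints so that no per-interval scan is required, and that the ancestor bitmasks and path-head pointers are computed by pointer jumping rather than a sequential root-to-leaf walk. All of this is carried out in~\cite{SV88}; the only adaptation needed here is to restate their EREW PRAM bounds ($O(n/\log n)$ processors, $O(\log n)$ time, $O(1)$ query) in the work--span form $O(n)$ work and $O(\log n)$ span, which is immediate.
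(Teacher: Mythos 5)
Your proposal is correct and matches the paper's approach: the paper gives no proof of this lemma at all, simply citing Schieber and Vishkin~\cite{SV88}, and your argument is just a faithful unpacking of that cited construction (Euler tour plus inlabels/bitmasks/heads, or equivalently the $\pm 1$-RMQ route), with the EREW PRAM bounds correctly restated as $O(n)$ work and $O(\log n)$ span. Nothing further is needed.
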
 

\begin{lemma} \label{la} One can preprocess a level ancestors structure on a tree with $n$ nodes in $O(n)$ work and polylog depth such that queries take constant time \cite{BV94}. \end{lemma}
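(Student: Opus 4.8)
The plan for Lemma~\ref{la} is to invoke the parallel level-ancestor construction of Berkman and Vishkin~\cite{BV94}; for completeness I would sketch why each ingredient fits in $O(n)$ work and polylogarithmic span. Recall that a level-ancestor query $\textsc{LA}(v,k)$ asks for the ancestor of $v$ at distance $k$ in a rooted tree. First I would fix a root, build an Euler tour, and from it recover the depth of every vertex by list ranking, and the \emph{height} of every vertex (the length of the longest downward path leaving it, with leaves having height $0$) by a bottom-up tree accumulation $\mathrm{height}(v)=1+\max_{c}\mathrm{height}(c)$ evaluated via parallel tree contraction. All of this is $O(n)$ work and $O(\log n)$ span using standard parallel primitives (prefix sums, list ranking, tree contraction).

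Next I would build the \emph{long-path decomposition}: every internal vertex designates as its \emph{long child} a child of maximum height, and the long edges partition the vertices into vertex-disjoint downward paths, which I would materialize as arrays with one more list-ranking step. The structural fact that makes this useful is that the height decreases by exactly one along each step of a long path, so a long path whose top vertex $t$ has height $h$ consists of exactly $h+1$ vertices. I would then form the \emph{ladder} of each long path by prepending to it its own length many additional tree-ancestors (stopping at the root if fewer exist); since the downward parts partition the $n$ vertices, the total size of all ladders is at most $2n$. The invariant I would prove is that the ladder containing a vertex $v$ extends at least $\mathrm{height}(v)$ levels above $v$ in the tree: if $v$ sits $j$ steps below the top $t$ of its long path then $\mathrm{height}(v)=\mathrm{height}(t)-j$, and $v$ can climb $j+\mathrm{height}(t)\ge\mathrm{height}(t)-j$ steps within the ladder. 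Finally I would build jump pointers $\mathrm{anc}(v,2^i)$ for $0\le i\le\lfloor\log n\rfloor$ by pointer doubling.

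A query $\textsc{LA}(v,k)$ is then answered in $O(1)$: put $i=\lfloor\log k\rfloor$, follow one jump pointer to $w=\mathrm{anc}(v,2^i)$, observe that $v$ witnesses $\mathrm{height}(w)\ge 2^i$ while $0\le k-2^i<2^i$, and since $w$'s ladder reaches at least $\mathrm{height}(w)>k-2^i$ levels above $w$, read off the answer $\mathrm{anc}(w,k-2^i)$ by indexing $k-2^i$ positions upward in $w$'s ladder array. Correctness is the routine combinatorial check just sketched, and the span is dominated by the list-ranking, tree-contraction, and pointer-doubling phases, each $O(\log n)$.

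The step I expect to be the real obstacle is the word ``$O(n)$'' rather than ``$O(n\log n)$'': the jump-pointer table as described costs $\Theta(n\log n)$ in both space and preprocessing work, so the naive construction misses the stated bound. Berkman and Vishkin recover linear work and space by storing jump pointers only for a sparse set of roughly $n/\log n$ representatives and packing the remaining micro-structure into machine words, which is exactly what blows up the constant factor (and the reason our implementation instead falls back on a brute-force $\Theta(k\log n)$-work table). If one only needs $O(n\log n)$ preprocessing work then everything above already suffices and is much simpler; the delicate linear-work argument is entirely contained in this word-packing refinement, for which I would defer to~\cite{BV94}.
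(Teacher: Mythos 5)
The paper offers no proof of Lemma~\ref{la} at all: it is stated as a black-box citation to Berkman and Vishkin~\cite{BV94}, so your write-up is, if anything, more complete than the source you are being compared against. Your sketch is a correct account of the ladders-plus-jump-pointers construction: the long-path decomposition, the fact that height drops by one along a long path, the ladder invariant (a vertex $j$ steps below the top of its path can climb at least $\mathrm{height}(v)$ levels inside its ladder), and the $O(1)$ query via $w=\mathrm{anc}(v,2^{\lfloor\log k\rfloor})$ with $\mathrm{height}(w)\ge 2^{\lfloor\log k\rfloor}>k-2^{\lfloor\log k\rfloor}$ are all right, and the parallel preprocessing (Euler tour, list ranking, tree contraction, pointer doubling) fits in polylogarithmic span. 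Two remarks on the comparison. First, what you describe is really the later simplification of the level-ancestor structure (ladders plus jump pointers, in the style of Bender and Farach-Colton) rather than Berkman--Vishkin's own Euler-tour-based algorithm; that is a genuinely different route, and it buys a much more implementable $O(n\log n)$-work structure at the cost of not reaching the stated $O(n)$ bound on its own. Second, you correctly isolate that the only part you cannot prove is exactly the linear-work refinement (sparse representatives plus word-packed microstructures), which you defer to~\cite{BV94} --- precisely what the paper does for the entire lemma, and consistent with the paper's implementation section, which likewise abandons the linear-work structure because of its enormous constants and uses an $O(k\log n)$-work table instead. So there is no gap relative to the paper; just be explicit, if you keep this sketch, that the self-contained portion establishes $O(n\log n)$ work and that the $O(n)$ claim rests on the cited construction.
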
 

\begin{lemma} \label{highAncestor} Given a cluster $B$ and vertex $u \in B$ in an RC tree rooted at $r$, after preprocessing we can find the highest ancestor of $u$ in $B$ in constant time. \end{lemma}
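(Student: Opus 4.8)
The plan is to show that ``the highest ancestor of $u$ in $B$'' is in fact determined by the cluster $B$ and the root $r$ alone --- it is the top-most vertex of $B$'s vertex subset --- and that this vertex can be tabulated for every cluster by one bottom-up augmentation plus one top-down sweep over the RC-tree, after which each query is a single lookup.

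First I would pin down the structural claim. Since $u$ lies in the vertex subset $S_B$ of $B$, and $S_B$ is connected (RC-cluster property~1), root the input tree at $r$ and let $\mathrm{top}(B)$ be the vertex of $S_B$ of minimum depth. I would argue $\mathrm{top}(B)$ is unique and is a $T$-ancestor (w.r.t.\ $r$) of every vertex of $S_B$: if two vertices of $S_B$ had minimum depth, the $T$-path between them stays inside $S_B$ and runs through their strictly shallower LCA, a contradiction; and if $\mathrm{top}(B)$ were not an ancestor of some $w\in S_B$, the $S_B$-path from $w$ to $\mathrm{top}(B)$ would pass through $\mathrm{LCA}_T(w,\mathrm{top}(B))$, again strictly shallower, a contradiction. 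Hence the ancestors of $u$ lying in $S_B$ are exactly the $T$-path from $u$ up to $\mathrm{top}(B)$ (the path to $r$ leaves $S_B$ immediately above $\mathrm{top}(B)$, since $\mathrm{top}(B)$'s $T$-parent has smaller depth than everything in $S_B$ and so is outside $S_B$ --- unless $r\in S_B$, in which case $\mathrm{top}(B)=r$). In particular the answer is $\mathrm{top}(B)$, independent of the choice of $u\in B$, so it suffices to precompute $\mathrm{top}(B)$ for every cluster.

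Second I would describe the preprocessing. As a bottom-up augmentation store on each cluster $C$: a bit $\mathrm{hasRoot}(C)$ ($r\in S_C$, true iff $r$ is the representative of $C$ or $\mathrm{hasRoot}$ holds on a child), and, for each of the at most two boundary vertices $b$ of $C$, the vertex $\mathrm{adj}_C(b)\in S_C$ that is $T$-adjacent to $b$ (for a base-edge cluster this is the interior endpoint; under a rake/compress merge at representative $v$ it is inherited from the relevant child, exactly like cluster-path endpoints). Both are computed in $O(1)$ per cluster, i.e.\ $O(n)$ work and $O(\log n)$ span (or $O(k\log(1+n/k))$ work when restricted to the clusters relevant to a batch). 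Then run a top-down sweep that also maintains, for each cluster, which of its boundary vertices is root-ward --- the standard RC-tree top-down computation that determines, per cluster, which boundary vertex lies on the path towards $r$ --- propagating this through the four unary/binary parent--child combinations. At the root cluster of $r$'s component set $\mathrm{top}=r$; descending into a child $C$ of $B$: if $\mathrm{hasRoot}(C)$ then $\mathrm{top}(C)=r$, and otherwise the path from $S_C$ to $r$ leaves $C$ through the unique root-ward boundary vertex $b^\ast$ of $C$, so $\mathrm{top}(C)=\mathrm{adj}_C(b^\ast)$ (clusters in components not containing $r$ get a null value). Correctness of the last equation is exactly the observation that every path from $S_C$ to $r$ exits through $b^\ast$, so $\mathrm{adj}_C(b^\ast)$ is in $S_C$ and is an ancestor of all of $S_C$, hence is $\mathrm{top}(C)$. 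A query $(B,u)$ then returns the stored $\mathrm{top}(B)$ in $O(1)$.

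I expect the main obstacle to be the top-down sweep: correctly propagating the root-ward boundary through the unary/binary cases (mirroring the case analysis used in the path-decomposition proof) and handling the boundary cases cleanly --- $r$ lying inside the current cluster (so $\mathrm{top}=r$), a boundary vertex of $B$ itself being $r$, unary clusters (single boundary, root always outside, $\mathrm{top}$ is its interior $T$-neighbour), and disconnected components. The supporting facts about $\mathrm{top}(B)$ and the equation $\mathrm{top}(C)=\mathrm{adj}_C(b^\ast)$ follow directly from connectivity of RC-cluster vertex subsets together with the elementary properties of shallowest vertices of connected subtrees sketched above, so those are routine once stated.
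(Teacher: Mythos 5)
There is a genuine gap, and it is one of interpretation rather than of technique: you have proved a different statement from the one the lemma asserts and the one the batch-LCA algorithm needs. In this lemma ``ancestor'' refers to the \emph{RC-tree}, not to the input forest: given the query vertex $u$, let $U$ be the RC node represented by $u$; the ``highest ancestor of $u$ in $B$'' is the RC cluster $X$ that is a direct child of $B$ and an RC-tree ancestor of $U$, i.e.\ the topmost cluster strictly inside $B$ that contains $u$. The paper's proof is accordingly a one-liner: $B$ has $O(1)$ children, and for each child $X$ one tests $isAncestor(X,U,R)$ in constant time using the static LCA structure (Lemma \ref{sLCA}) already built on the marked portion of the RC-tree. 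Crucially the answer is a \emph{cluster} and it \emph{depends on $u$}: in the batchFixedLCA casework one needs the child $X$ of $C$ that is an ancestor of $U$ and, separately, the child $Y$ of the same $C$ that is an ancestor of $V$, and one then inspects whether $X$ (resp.\ $Y$) is unary or binary and uses its cluster path. Your own observation that your answer is independent of the choice of $u\in B$ should have been the warning sign that you were answering a different question.

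What you construct instead is $\mathrm{top}(B)$, the shallowest vertex of $B$'s vertex subset in the input forest rooted at $r$. Your structural argument for that object (shallowest vertex of a connected subtree is an ancestor of all its vertices) is fine, and the bottom-up/top-down tabulation is plausible modulo small issues (e.g.\ a single base-edge cluster has an empty vertex subset, so ``the interior endpoint'' does not exist and $\mathrm{adj}_C(b)$ needs a separate convention). But this vertex of $F$ cannot play the lemma's role: it does not identify the RC child of $B$ on the path to $U$, so the subsequent case analysis in the batch-LCA proof (unary vs.\ binary $X$, closest vertex on $X$'s cluster path via Lemma \ref{closePath}) cannot be carried out from it. To repair the proof you would discard the $\mathrm{top}(B)$ machinery and instead argue as the paper does, with the only preprocessing being the static LCA structure on the relevant $O\left(k\log\left(1+\frac{n}{k}\right)\right)$-node RC subtree.
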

\begin{proof} For each RC-tree child $X$ of $B$, we choose $X$ if $isAncestor(X,U,R)$. There are a constant number of children of $B$, and checking isAncestor takes constant time with the static LCA data structure, for constant time overall.  \end{proof}

\begin{lemma} \label{closePath} Given a binary cluster $B$ and vertex $u \in B$, after preprocessing we can find the closest vertex to $u$ on the cluster path of $B$ in constant time.  \end{lemma}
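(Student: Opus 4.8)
The plan is to reduce the query to locating a single cluster in the RC-tree, and then to show this cluster can be found in constant time. First I would observe that the cluster path of $B$ is exactly the path between its two boundary vertices $a$ and $b$ in $F$, and that among all vertices on a fixed $a$--$b$ path the one closest to $u$ is precisely the distance-minimizing LCA $\mathrm{argmin}_c [\, d(u,c)+d(a,c)+d(b,c)\,]$: on the $a$--$b$ path the term $d(a,c)+d(b,c)$ is constant, and by Lemma~\ref{d3} the global minimizer does lie on that path. So the answer is a well-defined vertex on the cluster path, and the work is purely to read it off the RC-tree.

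To that end I would descend from $B$ toward $u$: repeatedly replace the current binary cluster $C$ by the unique child of $C$ that contains $u$, as long as that child is itself a binary cluster, and let $X$ be the last binary cluster so reached; the claim is that the representative $r_X$ of $X$ is the desired vertex. The proof of this claim rests on three facts about RC clusters: (i) the cluster path of a binary child is contained in the cluster path of its parent; (ii) the representative of a binary cluster lies on its cluster path, and its two boundary vertices are the endpoints of that path; and (iii) any path in $F$ from a vertex inside a cluster $C$ to a vertex outside $C$ passes through a boundary vertex of $C$. If $u$ lies in a binary child $L$ of $C$, then by (iii) every vertex of $C$'s cluster path reachable from $u$ without first passing a boundary vertex of $L$ already lies in $L$, while the boundary vertices of $L$ lie on $L$'s cluster path by (ii); combined with (i), the closest vertex of $C$'s cluster path to $u$ equals the closest vertex of $L$'s cluster path to $u$, which justifies recursing into $L$. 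If instead $u$ lies in a unary child of $C$ (whose unique boundary vertex is $r_C$), or $u=r_C$, then by (iii) the path from $u$ to any vertex of $C$'s cluster path passes through $r_C$, so $r_C$ is the closest such vertex, which justifies stopping; the base-edge case and the omitted-base-vertex case are checked separately and fall under the same conclusions.

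For the constant-time bound I would note that $X$ admits a static characterization: it is the deepest ancestor of the RC-tree leaf $\ell_u$ of $u$ that lies in the same maximal \emph{binary chain} as $B$, i.e.\ the maximal connected set of binary clusters, under the parent relation, containing $B$. Equivalently, $X=\ell_u$ if every cluster on the RC-path from $B$ to $\ell_u$ is binary, and otherwise $X$ is the parent of the shallowest non-binary cluster on that path. Using the static RC-tree LCA structure of Lemma~\ref{sLCA} and the level-ancestor structure of Lemma~\ref{la}, together with a single precomputed field on each binary cluster recording the depth and identity of the top of its binary chain (plus a small precomputed lookup over the $O(\log n)$ non-binary ancestors of each leaf), this cluster is located with $O(1)$ level-ancestor and LCA queries; iterating Lemma~\ref{highAncestor} would also find $X$ but only in $O(\log n)$ time, so the precomputed chain metadata is what buys the constant-time query. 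Having $X$, we return $r_X$.

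I expect the main obstacle to be the correctness of the descent rather than the running time: one must argue carefully that the closest vertex never lies strictly inside a child cluster that the descent skips, which is exactly where the boundary-vertex property (iii) and the cluster-path nesting (i) are needed, and one must dispatch the degenerate cases where the child containing $u$ is a base edge, a raked single-edge cluster, or an omitted base-vertex cluster. Once the characterization of $X$ is pinned down, layering the constant-time lookup on top of the standard level-ancestor and LCA machinery is routine.
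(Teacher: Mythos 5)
Your proposal is correct and essentially matches the paper's proof: your cluster $X$ (the bottom of the all-binary chain from $B$ toward $u$) is exactly the parent of the paper's ``level-highest unary cluster $W$ in $B$ containing $u$,'' so $r_X$ equals $W$'s boundary vertex, and both arguments locate it in $O(1)$ from precomputed ancestor-type information (your chain-top metadata vs.\ the paper's ancestor bitset) plus one level-ancestor query, returning $u$ itself when $u$ lies on the cluster path. The only difference is that you spell out the correctness of the descent in more detail than the paper does; the algorithmic content is the same.
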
 
\begin{proof} 

Using the ancestor bitset, we examine $u$'s ancestors in $B$. If $u$ is in the cluster path, return $u$. Otherwise, use the bitset to find the level highest unary cluster $W$ in $B$ containing $u$. We extract $W$ via the level ancestors structure, which by Lemma \ref{la} takes constant time, then return $W$'s boundary vertex. \end{proof}

%Correctness: TODO fill in

%TODO in general cite Daniel thesis
%By Theorem 8 of Daniel's thesis (TODO), marking the ancestors of a set of $k$ nodes in the RC tree takes $O(k \log(1+n/k))$ work, within our bound.  

% A bottom-up sweep consists of starting from some set of nodes, and recursing on their parents up the RC tree. 
%top down = (starting with the root and recursing on children)

The procedure batchLCA takes a list of queries and an unrooted forest, and returns the LCA of each query. From the queries, we do a bottom-up sweep in the RC tree to determine the nodes involved in the calculation.Sweeping top down, we separate the queries by their tree within the forest. For queries on elements belonging to multiple trees, we mark that the query nodes are not connected (so the LCA does not exist). Then, in parallel we handle each tree component. Consider a tree component with RC root $r$ and query $LCA(u,v,r')$. We calculate $LCA(u,v,r), LCA(u,r',r),$ and $LCA(v,r',r)$ using the procedure batchFixedLCA, then apply Lemma \ref{logic} to find $LCA(u,v,r')$.

The procedure batchFixedLCA takes a tree rooted at $r$ and a list of queries (all queries with root $r$) and returns the LCA of each query. First, we mark the tree nodes involved in the computation (bottom-up sweep). On this subtree, we preprocess a static LCA structure (Lemma \ref{sLCA}) and a level ancestors structure (Lemma \ref{la}). For every cluster involved, we calculate the closest boundary vertex to the root. Each cluster stores a bit for each of its ancestors denoting whether it is unary or binary. 

Then, using the static LCA structure, we find the common boundary $c$ of each query pair $(u,v)$. If $r$ is $u,v,$ or $c$, then the LCA is $r$. If $c$ is $u$ or $v$ (WLOG $v$), then let $X$ be the ancestor of $U$ that is a direct child of $V$. If $X$ is unary, then $c$ is the LCA. If $X$ is binary, we case on the position of $r$. If $r$ and $x$ are on opposite sides of $v$, then $c$ is the LCA. Otherwise, $r$ and $x$ are on the same side of $v$, so the vertex on the cluster path of $X$ closest to $u$ is the LCA. 

When $c \not\in \{u,v\}$, look at the clusters $X$ and $Y$ that are ancestors of $U$ and $V$ (respectively) and are direct children of $C$. If $c$ is in between $x$ and $r$ and in between $y$ and $r$, then $c$ is the LCA. Otherwise, (WLOG) $c$ is in between $x$ and $r$ but not $y$ and $r$. Then the vertex on the cluster path of $Y$ closest to $v$ is the LCA. 

Now we will prove Theorem \ref{thm:batch-lca}.

\begin{proof}
Proof of work/depth: Let $L=k \log(1+n/k)$. By Theorem \ref{thm:rc-batch-theorem}, we operate on a $O(L)$ subset of the RC tree. Sweeps each take $O(L)$ work and $O(\log n)$ depth. By Lemmas \ref{sLCA} and \ref{la} static LCA and level ancestors are linear work in the tree size $L$. Thus, all of the preprocessing fits within our bounds.

Within the casework for each query, finding the highest ancestor within a cluster takes constant time by Lemma \ref{highAncestor}, and finding the closest vertex on a cluster path takes constant time by Lemma \ref{closePath}.   

If $r$ is (WLOG) $u$, then the LCA is $r$ because $LCA(u,v,r)=LCA(u,r,v)=LCA(u,u,v)=u$. If $r=c$, then since $r$ is the RC LCA, by Lemma \ref{d3} we have $path_{RC}(r,u,v)$. Note that the RC tree root partitions the original tree because it clusters last. Thus, it follows that $path(r,u,v)$. Thus $r$ minimizes $D_{r,u,v}$, so $r$ is the LCA.  

Now suppose that $c$ is $u$ or $v$ (WLOG $v$). By definition of LCA, $isAncestor(v,u,r)$. Because $v$ is an ancestor of $u$ in the RC tree, $v$ is a boundary vertex of an ancestor $X$ of $U$ in the RC tree. Note that $r$ is not inside $X$ because $r$ contracts last. If $X$ is unary, then $r$ and $x$ are on opposite sides of $c$, so $path(c,r,x)$, and so $c$ is the LCA by Lemma \ref{path3}. If $X$ is binary, we case on the position of $r$. If $r$ and $x$ are on opposite sides of $c$, then by Lemma \ref{path3}, $c$ is the LCA. Otherwise, $r$ and $x$ are on the same side of $v$, so the vertex on the cluster path of $X$ closest to $u$ is the LCA, which we find with Lemma \ref{closePath}. 

When $c \not\in \{u,v\}$, look at the clusters $X$ and $Y$ that are ancestors of $U$ and $V$ (respectively) and are direct children of $C$. If $c$ is in between $x$ and $r$ and in between $y$ and $r$, then $c$ is the LCA. Otherwise, (WLOG) $c$ is in between $x$ and $r$ but not $y$ and $r$. Then the vertex on the cluster path of $Y$ closest to $v$ is the LCA. \end{proof}
%TODO in practice we have log^2 depth? (compare implementation to theory?)
%TODO why is bound k + k \log(1+n/k)? (where the extra k?)

\subsubsection{LCA and ternarization} We give the proof of Lemma \ref{zeroEdge2}. 
\begin{proof}Suppose BWOC that $\lnot path(c,c',u)$ and $\lnot path(c,c',v)$. Then there exists a path from $u$ to $v$ without $c$, contradicting Lemma \ref{d3}. Thus $path(c,c',u)$ or $path(c,c',v)$. Similarly applying to $v$ and $r$, and $u$ and $r$, we get that ($path(c,c',u)$ or $path(c,c',r)$) and ($path(c,c',v)$ or $path(c,c',r)$). Thus, $c$ is present in at least 2 of 3 paths $c'$ to $u$, $c'$ to $v$, and $c'$ to $r$. WLOG suppose that $path(c,c',u)$ and $path(c,c',v)$. Then by the triangle inequality $W_{u,v,r}(c') \le W_{u,v,r}(c)=w(u,c)+w(v,c)+w(r,c)=(w(u,c')-w(c,c'))+(w(v,c')-w(c,c'))+w(r,c) \le (w(u,c')-w(c,c'))+(w(v,c')-w(c,c'))+(w(r,c') + w(c,c')) = W_{u,v,r}(c') - w(c,c')$. Thus $w(c,c') \le 0 \implies w(c,c')=0$. \end{proof}

Remark: Let $T$ be a (possibly) high degree tree and $T'$ its ternarized form. Note that $w$ from Lemma \ref{zeroEdge2} is not a distance metric on $T'$ because $w(a,b)=0 \nRightarrow a=b$; however, $w$ does satisfy symmetry and the triangle inequality, and $w$ is a distance metric on $T$.

\newpage

\end{document}